\theoremstyle{plain}
\newtheorem{theorem}{Theorem}[section]
\newtheorem{corollary}[theorem]{Corollary}
\newtheorem{thesis}{Thesis}
\newtheorem{assumption}{Assumption}
\theoremstyle{definition}
\newtheorem{definition}[theorem]{Definition}
\newtheorem{protocol}[theorem]{Protocol}
\newtheorem{postulate}{Postulate}
\newtheorem{remark}[theorem]{Remark}
\renewcommand{\labelenumi}{(\roman{enumi})}
\newcommand{\abs}[1]{\left\lvert#1\right\rvert}
\newcommand{\rest}[2]{#1\!\!\restriction_{#2}}
\newcommand{\osg}[1]{\left[#1\right]^{\prec}}
\newcommand{\N}{\mathbb{N}}%
\newcommand{\Bm}[2]{\lambda_{#1}\left(#2\right)}
\newcommand{\ket}[1]{| #1 \rangle}
\newcommand{\bra}[1]{\langle #1 |}
\newcommand{\braket}[2]{\langle #1 | #2 \rangle}
\newcommand{\product}[2]{\lvert #1\rangle\langle #2\rvert}
\newcommand{\PS}{\mathbb{P}}%
\newcommand{\cond}[2]{\mathrm{Filtered}_{#1}\left(#2\right)}
\newcommand{\ssoa}{\overline{\mathcal{H}}}
\newcommand{\noi}{\noindent}
\title{\textbf{A refinement of the argument of
local realism
versus quantum mechanics
by algorithmic randomness}\thanks{%
This work
is a substantial extension of a preliminary paper of the author,
entitled:
``A refinement of the argument of Bell's inequality versus quantum mechanics by algorithmic randomness,''
which appeared
without peer review
in: Masahiko Sakai (ed.), New Trends in Algorithms and Theory of Computation,
RIMS K\^{o}ky\^{u}roku, No.2154, pp.27--53, April 2020.
The preliminary paper is available at
\texttt{http://hdl.handle.net/2433/255108}%
}}
\author{Kohtaro Tadaki\\
\\
Department of Computer Science, College of Engineering, Chubu University\\
1200 Matsumoto-cho, Kasugai-shi, Aichi 487-8501, Japan\\
E-mail: \textsf{tadaki@fsc.chubu.ac.jp}\\
\url{https://tadaki.org/}}
\date{
\begin{quotation}
\noi\textbf{Abstract.}
The notion of probability plays a crucial role in quantum mechanics.
It appears in quantum mechanics as the \emph{Born rule}.
In modern mathematics which describes quantum mechanics, however, probability theory means nothing other than measure theory,
and therefore any operational characterization of the notion of probability is still missing in quantum mechanics.
In our former works~[K.~Tadaki, arXiv:1804.10174],
based on the toolkit of \emph{algorithmic randomness},
we presented
a refinement of the Born rule,
called the \emph{principle of typicality},
for specifying the property of results of measurements \emph{in an operational way}.
In this paper, we make an application of our framework to
the argument of local realism versus quantum mechanics
for refining it,
in order to demonstrate
\emph{how properly our framework works
in practical problems in quantum mechanics}.
\end{quotation}
\begin{quotation}
\noi\textit{Key words\/}:
the principle of typicality,
Born rule,
probability interpretation,
algorithmic randomness,
operational characterization,
Martin-L\"of randomness,
many-worlds interpretation,
local realism,
Bell's inequality,
GHZ experiment
\end{quotation}
}
\begin{document}

\maketitle

\section{Introduction}

The notion of probability plays a crucial role in quantum mechanics.
It appears in quantum mechanics as the so-called \emph{Born rule}, i.e.,
\emph{the probability interpretation of the wave function} \cite{D58,vN55,NC00}.
In modern mathematics which describes quantum mechanics, however,
probability theory means nothing other than \emph{measure theory},
and therefore any \emph{operational characterization of the notion of probability} is still missing
in quantum mechanics.
In this sense, the current form of quantum mechanics is considered to be \emph{imperfect}
as a physical theory which must stand on operational means.

In the work~\cite{T14,T15,T15CCR,T16arXiv,T19arXiv}
we developed a theory of
\emph{operational characterization of the notion of probability by algorithmic randomness}
for general discrete probability spaces,
using the notion of \emph{Martin-L\"of randomness with respect to Bernoulli measure} \cite{M66} (and its extension over the Baire space).
In the work
we gave natural and equivalent operational characterizations of the basic notions of probability theory,
such as the notions of conditional probability and the independence of events/random variables,
in terms of
the notion of Martin-L\"of randomness with respect to Bernoulli measure.
We then made
applications of this
framework
to information theory and cryptography, as examples of the applications,
in order to demonstrate the wide applicability of
our
framework
to the general areas of
science and
technology.
The crucial point is that
in our theory
the underlying discrete probability space is \emph{quite arbitrary}, and
thus
we do not
need to
impose any computability restrictions on the underlying
discrete
probability space
at all,
in particular.
See Tadaki~\cite{T16arXiv}
for the details of our theory of operational characterization of the notion of probability
in the case of
general finite probability spaces
and its applications,
and Tadaki~\cite{T19arXiv}
for ones
in the case of general discrete probability spaces
whose sample spaces are countably infinite.

As a major application of our framework~\cite{T14,T15,T15CCR,T16arXiv} to basic science,
in the work~\cite{T14CCR,T15QCOMPINFO,T15Kokyuroku,T16CCR,T16QIP,T16QIT35,T17SCIS,T18arXiv}
we presented an \emph{operational refinement} of the Born rule, as an alternative rule to it,
based on our theory of operational characterization of the notion of probability
by algorithmic randomness,
for the purpose of making quantum mechanics \emph{operationally perfect}.
Namely, we used the notion of
\emph{Martin-L\"of randomness with respect to Bernoulli measure}
to present
the operational refinement of the Born rule,
for specifying the property of the results of
quantum measurements \emph{in an operational way}.
We then presented
an operational refinement of the Born rule for mixed states, as an alternative rule to it,
based on Martin-L\"of randomness with respect to Bernoulli measure.
In particular, we gave a
\emph{precise definition}
for the notion of \emph{mixed state}.
We then showed that all of
the refined rules of the Born rule for both pure states and mixed states
can be derived
from a \emph{single} postulate, called the \emph{principle of typicality}, in a unified manner.
We did this from the point of view of the \emph{many-worlds interpretation of quantum mechanics}~\cite{E57}.
Finally, we made an application of our framework to the BB84 quantum key distribution protocol
in order to demonstrate how properly our framework works in
\emph{practical}
problems
in quantum mechanics, based on the principle of typicality.
See the work~\cite{T18arXiv} for the details of our framework of quantum mechanics based on the principle of typicality.

Around 2000, there were comprehensive attempts
to extend the algorithmic randomness notions
to the quantum regime.
Namely, Berthiaume, van Dam, and Laplante~\cite{BvDL00},
Vit\'{a}nyi~\cite{Vit00},
and G\'{a}cs~\cite{Gac01} independently introduced
the so-called \emph{quantum Kolmogorov complexity}
from each standpoint,
for aiming to define the algorithmic information content of a given individual quantum state of a finite dimensional quantum system.
Later on, Tadaki~\cite{T06MLQ} introduced in 2004
the notion of $\hat{\Omega}$
by means of modifying the work of G\'{a}cs~\cite{Gac01}.
This $\hat{\Omega}$
is an extension of Chaitin's halting probability $\Omega$~\cite{C75} to
a measurement operator in an infinite dimensional quantum system.
Actually, $\hat{\Omega}$ is a bounded Hermitian operator on a Hilbert space of infinite dimension,
and for every computable state $\ket{\Psi}$ it holds that
the inner-product $\bra{\Psi}\hat{\Omega}\ket{\Psi}$,
which has a meaning of ``probability'' in quantum mechanics,
is a Martin-L\"of random recursively enumerable real
and therefore can be represented as a Chaitin's $\Omega$.
These previous works are about the extension of algorithmic randomness notions
to the quantum realm.

In contrast, in this paper as well as our former works~\cite{T14CCR,T15QCOMPINFO,T15Kokyuroku,T16CCR,T16QIP,T16QIT35,T17SCIS,T18arXiv},
we do not extend the algorithmic randomness notions themselves,
while keeping them in their original forms,
but we identify in the mathematical framework of quantum mechanics
the basic idea of Martin-L\"of randomness
that \emph{the random infinite sequences are precisely sequences which are not contained in any effective null set}.
See Martin-L\"of~\cite{M66}, Nies~\cite{N09}, Downey and Hirschfeldt~\cite{DH10},
and Brattka, Miller, and Nies~\cite{BMiN12} for this basic idea of Martin-L\"of randomness.
In our works, we do this identification, in particular,
from the aspect of the many-worlds interpretation of quantum mechanics~\cite{E57},
leading to the principle of typicality.

\subsection{Contributions of the paper}

In this paper, we make an application of our framework
based on the principle of typicality
to the \emph{argument of local realism versus quantum mechanics}
to refine it operationally,
in order to
\emph{further}
demonstrate how properly our framework works
in \emph{practical} problems in quantum mechanics.

Specifically,
in this paper,
first
we \emph{refine} and \emph{reformulate}
the argument of \emph{Bell's inequality versus quantum mechanics} initiated by Bell~\cite{Bell64},
by means of algorithmic randomness.
In particular, we refine and reformulate a version of the argument
exposited
in
Section~2.6 ``EPR and the Bell inequality'' of Nielsen and Chuang~\cite{NC00},
where the \emph{assumptions of local realism}
result in
the so-called \emph{CHSH inequality}~\cite{CHSH69}.
Thus,
on the one hand,
we refine and reformulate the assumptions of local realism themselves
and the derivation from them to
the CHSH inequality
\[
  \langle RS\rangle + \langle QS\rangle + \langle RT\rangle - \langle QT\rangle\le 2,
\]
i.e., the inequality~\eqref{Bell-inequality} below,
in terms of
the theory of
operational characterization of the notion of probability by algorithmic randomness,
developed by the work~\cite{T14,T15,T15CCR,T16arXiv} as mentioned above.
On the other hand,
we refine and reformulate
the corresponding argument of quantum mechanics
to violate Bell's inequality, resulting in
the equality
\[
  \langle RS\rangle + \langle QS\rangle + \langle RT\rangle - \langle QT\rangle=2\sqrt{2},
\]
i.e., the equality~\eqref{QM-equality} below,
based on the principle of typicality.

Second, in this paper we \emph{refine} and \emph{reformulate}
an entirely different type of the variant of
the argument of Bell's inequality versus quantum mechanics, i.e.,
the argument over the \emph{GHZ experiment} \cite{GHZ89,GHSZ90,Mer90},
by means of algorithmic randomness.
In particular, we refine and reformulate
the analysis for
the GHZ experiment
by quantum mechanics,
proposed by Mermin~\cite{Mer90}
for a system of three spin-$1/2$ particles,
in terms of algorithmic randomness.
Note that
this Mermin's work is, in turn, based on the pioneering work of
Greenberger, Horne, and Zeilinger~\cite{GHZ89},
who
first proposed ``Bell's Theorem Without Inequalities'' via the GHZ experiment.
Thus,
on the one hand,
we refine and reformulate
the argument of quantum mechanics which leads to
\emph{perfect correlations of measurement results over three parties}
for a system of three spin-$1/2$ particles,
i.e., Theorem~\ref{thm:GHZ} below, based on the principle of typicality.
On the other hand,
we refine and reformulate the argument that
the analysis of the GHZ experiment based on the assumptions of local realism
cannot recover the prediction of quantum mechanics, i.e.,
the perfect correlations over three parties,
in terms of
our theory of operational characterization of the notion of probability
by algorithmic randomness.

\subsection{Organization of the paper}

The paper is organized as follows.
We begin in Section~\ref{preliminaries} with
some mathematical preliminaries, in particular, about measure theory and
Martin-L\"of randomness with respect to an arbitrary probability measure.
We then review the notion of Martin-L\"of randomness with respect to
an arbitrary
Bernoulli measure,
called the \emph{Martin-L\"of $P$-randomness} in this paper,
in Section~\ref{Sec-ML_P-randomness}.
In Section~\ref{FMP} we summarize
the theorems and notions on Martin-L\"of $P$-randomness from Tadaki~\cite{T14,T15,T16arXiv},
which are need to establish the contributions of this paper presented
in Sections~\ref{sec-QM-equality}, \ref{sec-Bell_inequality},
\ref{sec:GHZ-QM}, and \ref{sec:GHZ-imperfect}.
In Section~\ref{QM} we review the central postulates of the \emph{conventional} quantum mechanics
according to Nielsen and Chuang~\cite{NC00}.
In Section~\ref{MWI}
we review the framework of the \emph{principle of typicality},
which was introduced by Tadaki~\cite{T16CCR,T16QIP,T16QIT35,T17SCIS,T18arXiv}.

In Sections~\ref{sec-QM-equality} and \ref{sec-Bell_inequality},
we refine and reformulate the argument of local realism
versus quantum mechanics,
in the context of Bell's inequality~\cite{Bell64,CHSH69,NC00}.
On the one hand,
in Section~\ref{sec-QM-equality},
we refine and reformulate the argument of quantum mechanics to violate Bell's inequality,
based on the principle of typicality.
On the other hand,
in Section~\ref{sec-Bell_inequality},
we refine and reformulate the assumptions of local realism themselves and the derivation from them to
Bell's inequality,
in terms of
the theory of
operational characterization of the notion of probability
by algorithmic randomness \cite{T14,T15,T15CCR,T16arXiv}.

In Sections~\ref{sec:GHZ-QM} and \ref{sec:GHZ-imperfect},
we refine and reformulate the argument of local realism
versus quantum mechanics,
in the context of the GHZ experiment~\cite{GHZ89,GHSZ90,Mer90}.
On the one hand,
in Section~\ref{sec:GHZ-QM},
we refine and reformulate
the analysis of the GHZ experiment by quantum mechanics,
leading to the perfect correlations
over three parties,
based on the principle of typicality.
On the other hand,
in Section~\ref{sec:GHZ-imperfect},
we refine and reformulate
the demonstration of the underivability of the perfect correlation
from local realism,
in terms of
our theory of
operational characterization of the notion of probability
by algorithmic randomness.
We conclude this paper with
summary
in Section~\ref{sec-Concluding_remarks}.

\section{Mathematical preliminaries}
\label{preliminaries}

\subsection{Basic notation and definitions}
\label{basic notation}

We start with some notation about numbers and strings which will be used in this paper.
We denote the \emph{cardinality} of $S$ by $\#S$ for any set $S$.
$\N=\left\{0,1,2,3,\dotsc\right\}$ is the set of \emph{natural numbers},
and $\N^+$ is the set of \emph{positive integers}.
For any complex number $z$, its \emph{complex conjugate} is denoted by
$\overline{z}$.

An \emph{alphabet} is a non-empty finite set.
Let $\Omega$ be an arbitrary alphabet throughout the rest of this subsection.
A \emph{finite string over $\Omega$} is a finite sequence of elements from the alphabet $\Omega$.
We use $\Omega^*$ to denote the set of all finite strings over $\Omega$,
which contains the \emph{empty string} denoted by $\lambda$.
For any $\sigma\in\Omega^*$, $\abs{\sigma}$ is the \emph{length} of $\sigma$.
Therefore $\abs{\lambda}=0$.
A subset $S$ of $\Omega^*$ is called
\emph{prefix-free}
if no string in $S$ is a prefix of another string in $S$.

An \emph{infinite sequence over $\Omega$} is an infinite sequence of elements from the alphabet $\Omega$,
where the sequence is infinite to the right but finite to the left.
We use $\Omega^\infty$ to denote the set of all infinite sequences over $\Omega$.

Let $\alpha\in\Omega^\infty$.
For any $n\in\N$
we denote by $\rest{\alpha}{n}\in\Omega^*$ the first $n$ elements
in the infinite sequence $\alpha$,
and
for any $n\in\N^+$ we denote
by $\alpha(n)$ the $n$th element in $\alpha$.
Thus, for example, $\rest{\alpha}{4}=\alpha(1)\alpha(2)\alpha(3)\alpha(4)$, and $\rest{\alpha}{0}=\lambda$.

For any $S\subset\Omega^*$, the set
$\{\alpha\in\Omega^\infty\mid\exists\,n\in\N\;\rest{\alpha}{n}\in S\}$
is denoted by $\osg{S}$.
Note that (i)~$\osg{S}\subset\osg{T}$ for every $S\subset T\subset\Omega^*$, and
(ii)~for every set $S\subset\Omega^*$ there exists a prefix-free set $P\subset\Omega^*$ such that
$\osg{S}=\osg{P}$.
For any $\sigma\in\Omega^*$, we denote by $\osg{\sigma}$ the set $\osg{\{\sigma\}}$, i.e.,
the set of all infinite sequences over $\Omega$ extending $\sigma$.
Therefore $\osg{\lambda}=\Omega^\infty$.

Let $\mathcal{H}$ be an arbitrary complex Hilbert space.
A \emph{projector} $P$ on $\mathcal{H}$ is
a non-zero Hermitian bounded operator on $\mathcal{H}$ such that $P^2=P$.
A collection $\{P_a\}_{a\in\Theta}$ of projectors on $\mathcal{H}$ is called
a \emph{projection-valued measure} (\emph{PVM}, for short) \emph{in $\mathcal{H}$} if
$\Theta$ is an alphabet and the following~(i) and (ii) hold:
\begin{enumerate}
\item $P_a P_b=\delta_{a,b}P_a$ for every $a,b\in\Theta$.
\item $\sum_{a\in\Theta} P_a=I$, where $I$ denotes the identity operator on $\mathcal{H}$.
\end{enumerate}
Thus, in the present paper, we only handle a PVM in a complex Hilbert space with a \emph{finite} number of PVM elements.
It is sufficient to deal with such a PVM for developing our general theory.

\subsection{Martin-L\"of randomness with respect to an arbitrary probability measure}
\label{MLRam}

We briefly review measure theory according to Nies~\cite[Section 1.9]{N09}.
See also Billingsley~\cite{B95} for measure theory in general.

Let $\Omega$ be an arbitrary alphabet.
A real-valued function $\mu$ defined on the class of all subsets of $\Omega^\infty$ is called
an \emph{outer measure on $\Omega^\infty$} if the following conditions hold:
\begin{enumerate}
  \item $\mu\left(\emptyset\right)=0$;
  \item $\mu\left(\mathcal{C}\right)\le\mu\left(\mathcal{D}\right)$
    for every subsets $\mathcal{C}$ and $\mathcal{D}$ of $\Omega^\infty$
    with $\mathcal{C}\subset\mathcal{D}$;
  \item $\mu\left(\bigcup_{i}\mathcal{C}_i\right)\le\sum_{i}\mu\left(\mathcal{C}_i\right)$
    for every sequence $\{\mathcal{C}_i\}_{i\in\N}$ of subsets of $\Omega^\infty$.
\end{enumerate}
A \emph{probability measure representation over $\Omega$} is
a function $r\colon\Omega^*\to[0,1]$ such that
\begin{enumerate}
  \item $r(\lambda)=1$ and
  \item for every $\sigma\in\Omega^*$ it holds that
    \begin{equation}\label{pmr}
       r(\sigma)=\sum_{a\in\Omega}r(\sigma a).
    \end{equation}
\end{enumerate}
A probability measure representation $r$ over $\Omega$ \emph{induces}
an outer measure $\mu_r$ on $\Omega^\infty$ in the following manner:
A subset $\mathcal{R}$ of $\Omega^\infty$ is called \emph{open} if
$\mathcal{R}=\osg{S}$ for some $S\subset\Omega^*$.
Let $r$ be an arbitrary probability measure representation over $\Omega$. 
For each open subset $\mathcal{A}$ of $\Omega^\infty$, we define $\mu_r(\mathcal{A})$ by
$$\mu_r(\mathcal{A}):=\sum_{\sigma\in E}r(\sigma),$$
where $E$ is a prefix-free subset of $\Omega^*$ with $\osg{E}=\mathcal{A}$.
Due to the equality~\eqref{pmr} the sum is independent of the choice of  the prefix-free set $E$,
and therefore the value $\mu_r(\mathcal{A})$ is well-defined.
Then, for any subset $\mathcal{C}$ of $\Omega^\infty$, we define $\mu_r(\mathcal{C})$ by
$$\mu_r(\mathcal{C}):=
\inf\{\mu_r(\mathcal{A})\mid
\mathcal{C}\subset\mathcal{A}\text{ \& $\mathcal{A}$ is an open subset of $\Omega^\infty$}\}.$$
We can then show that $\mu_r$ is an \emph{outer measure} on $\Omega^\infty$  such that
$\mu_r(\Omega^\infty)=1$.

A class $\mathcal{F}$ of subsets of $\Omega^\infty$ is called
a \emph{$\sigma$-field on $\Omega^\infty$}
if  $\mathcal{F}$ includes $\Omega^\infty$, is closed under complements,
and is closed under the formation of countable unions.
The \emph{Borel class} $\mathcal{B}_{\Omega}$ is the $\sigma$-field \emph{generated by}
all open sets on $\Omega^\infty$.
Namely, the Borel class $\mathcal{B}_{\Omega}$ is defined
as the intersection of all the $\sigma$-fields on $\Omega^\infty$ containing
all open sets on $\Omega^\infty$.
A real-valued function $\mu$ defined on the Borel class $\mathcal{B}_{\Omega}$ is called
a \emph{probability measure on $\Omega^\infty$} if the following conditions hold:
\begin{enumerate}
  \item $\mu\left(\emptyset\right)=0$ and $\mu\left(\Omega^\infty\right)=1$;
  \item $\mu\left(\bigcup_{i}\mathcal{D}_i\right)=\sum_{i}\mu\left(\mathcal{D}_i\right)$
    for every sequence $\{\mathcal{D}_i\}_{i\in\N}$ of sets in $\mathcal{B}_{\Omega}$ such that
    $\mathcal{D}_i\cap\mathcal{D}_i=\emptyset$ for all $i\neq j$.
\end{enumerate}
Then, for every probability measure representation $r$ over $\Omega$,
we can show that the restriction of the outer measure $\mu_r$ on $\Omega^\infty$
to the Borel class $\mathcal{B}_{\Omega}$ is
a probability measure on $\Omega^\infty$.
We denote the restriction of $\mu_r$ to $\mathcal{B}_{\Omega}$ by
$\mu_r$
just the same.

Then it is easy to see that
\begin{equation}\label{mr}
  \mu_r\left(\osg{\sigma}\right)=r(\sigma)
\end{equation}
for every probability measure representation $r$ over $\Omega$ and every $\sigma\in \Omega^*$.
The probability measure $\mu_r$ is called a
\emph{probability measure induced by the probability measure representation $r$}.

Now,
we introduce the notion of \emph{Martin-L\"of randomness} \cite{M66}
in a general setting,
as follows.

\begin{definition}[Martin-L\"{o}f randomness with respect to a probability measure]
\label{ML-randomness-wrtm}
Let $\Omega$ be an alphabet, and let $\mu$ be a probability measure on $\Omega^\infty$.
A subset $\mathcal{C}$ of $\N^+\times\Omega^*$ is called a
\emph{Martin-L\"{o}f test with respect to $\mu$} if
$\mathcal{C}$ is a recursively enumerable set and for every $n\in\N^+$ it holds that
$\mathcal{C}_n$ is a prefix-free subset of $\Omega^*$ and
\begin{equation}\label{muocn<2n}
  \mu\left(\osg{\mathcal{C}_n}\right)<2^{-n},
\end{equation}
where $\mathcal{C}_n$ denotes the set
$\left\{\,
    \sigma\mid (n,\sigma)\in\mathcal{C}
\,\right\}$.

For any $\alpha\in\Omega^\infty$, we say that $\alpha$ is
\emph{Martin-L\"{o}f random with respect to $\mu$} if
$$\alpha\notin\bigcap_{n=1}^{\infty}\osg{\mathcal{C}_n}$$
for every Martin-L\"{o}f test $\mathcal{C}$ with respect to $\mu$.\qed
\end{definition}

\section{\boldmath Martin-L\"of $P$-randomness}
\label{Sec-ML_P-randomness}

The principle of typicality, Postulate~\ref{POT} below, is stated by means of the notion of 
Martin-L\"of randomness with respect to an arbitrary probability measure
introduced in the preceding section.
However,
in many situations of the applications of the principle of typicality,
such as in a contribution of this paper described
in Sections~\ref{sec-QM-equality} and \ref{sec:GHZ-QM},
a more restricted notion is used where
the probability measure is chosen to be
a
Bernoulli measure.
Specifically,
the notion of \emph{Martin-L\"of randomness with respect to
a
Bernoulli measure}
is used
in many situations of the applications
of the principle of typicality.
Thus,
in order to introduce this notion, we first
review the notions of \emph{finite probability space} and \emph{Bernoulli measure}.

\begin{definition}[Finite probability space]\label{def-FPS}
Let $\Omega$ be an alphabet. A \emph{finite probability space on $\Omega$} is a function $P\colon\Omega\to [0,1]$
such that
\begin{enumerate}
  \item $P(a)\ge 0$ for every $a\in \Omega$, and
  \item $\sum_{a\in \Omega}P(a)=1$.
\end{enumerate}
The set of all finite probability spaces on $\Omega$ is denoted by $\PS(\Omega)$.

Let $P\in\PS(\Omega)$.
The set $\Omega$ is called the \emph{sample space} of $P$,
and elements
of
$\Omega$ are called \emph{sample points} or \emph{elementary events}
of $P$.
For each $A\subset\Omega$, we define $P(A)$ by
$$P(A):=\sum_{a\in A}P(a).$$
A subset of $\Omega$ is called an \emph{event} on $P$, and
$P(A)$ is called the \emph{probability} of $A$
for every event $A$
on $P$.
\qed
\end{definition}

Let $\Omega$ be an alphabet, and let $P\in\PS(\Omega)$.
For each $\sigma\in\Omega^*$, we use $P(\sigma)$ to denote
\[
  P(\sigma_1)P(\sigma_2)\dots P(\sigma_n)
\]
where $\sigma=\sigma_1\sigma_2\dots\sigma_n$ with $\sigma_i\in\Omega$.
Therefore $P(\lambda)=1$, in particular.
For each subset $S$ of $\Omega^*$, we use $P(S)$ to denote
$$\sum_{\sigma\in S}P(\sigma).$$
Therefore $P(\emptyset)=0$, in particular.

Consider a function $r\colon\Omega^*\to[0,1]$ such that
$r(\sigma)=P(\sigma)$ for every $\sigma\in\Omega^*$.
It is then easy to see that the function~$r$ is a probability measure representation over $\Omega$.
The probability measure~$\mu_r$ induced by
the probability measure representation
$r$ is
called
the \emph{Bernoulli measure on $\Omega^\infty$} (\emph{induced by the finite probability space $P$}), denoted $\lambda_{P}$.
The Bernoulli measure~$\lambda_{P}$ on $\Omega^\infty$
satisfies that
\begin{equation}\label{BmPosgsigma=Psigma}
  \Bm{P}{\osg{\sigma}}=P(\sigma)
\end{equation}
for every $\sigma\in \Omega^*$,
which follows from \eqref{mr}.

The notion of \emph{Martin-L\"of randomness with respect to a Bernoulli measure} is defined as follows.
We call it the \emph{Martin-L\"of $P$-randomness},
since it depends on a finite probability space~$P$.
This notion was, in essence, introduced by Martin-L\"{o}f~\cite{M66},
as well as the notion of Martin-L\"of randomness
with respect to Lebesgue measure.

\begin{definition}[%
Martin-L\"of $P$-randomness,
Martin-L\"{o}f \cite{M66}]\label{ML_P-randomness}
Let $\Omega$ be an alphabet, and let $P\in\PS(\Omega)$.
For any $\alpha\in\Omega^\infty$, we say that $\alpha$ is \emph{Martin-L\"{o}f $P$-random} if
$\alpha$ is Martin-L\"{o}f random with respect to $\lambda_{P}$.\qed
\end{definition}

\emph{Martin-L\"{o}f random} infinite binary sequences \cite{M66} are precisely 
Martin-L\"{o}f $U$-random infinite binary sequences
where $U$ is a finite probability space on $\{0,1\}$ such that $U(0)=U(1)=1/2$.

\section{\boldmath The properties of Martin-L\"of $P$-randomness}
\label{FMP}

In order to
obtain the results in this paper,
we need
several
theorems
and notions
on Martin-L\"of $P$-randomness
from
Tadaki~\cite{T14,T15,T16arXiv}.
Originally,
these theorems
and notions
played
a key part
in developing
the theory of
\emph{operational characterization of the notion of probability}
based on Martin-L\"of $P$-randomness in
Tadaki~\cite{T14,T15,T16arXiv}.
We enumerate
these theorems and notions
in this section.

\subsection{The law of large numbers}

The following theorem shows that the law of large numbers holds
for an arbitrary Martin-L\"of $P$-randomness infinite sequence.
For the proof of Theorem~\ref{FI},
see Tadaki~\cite[Theorem~14]{T16arXiv}.

\begin{theorem}[The law of large numbers, Tadaki~\cite{T14}]\label{FI}
Let $\Omega$ be an alphabet, and let $P\in\PS(\Omega)$.
For every $\alpha\in\Omega^\infty$, if $\alpha$ is Martin-L\"of $P$-random
then for every $a\in\Omega$ it holds that
$$\lim_{n\to\infty} \frac{N_a(\rest{\alpha}{n})}{n}=P(a),$$
where $N_a(\sigma)$ denotes the number of the occurrences of $a$ in $\sigma$ for every $a\in\Omega$ and every $\sigma\in\Omega^*$.
\qed
\end{theorem}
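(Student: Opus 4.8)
The plan is to prove the strong law of large numbers for Martin-Löf $P$-random sequences by constructing, for each $a \in \Omega$ and each rational tolerance $\varepsilon > 0$, an explicit Martin-Löf test with respect to $\lambda_P$ that captures precisely those sequences $\alpha$ whose relative frequency $N_a(\rest{\alpha}{n})/n$ strays from $P(a)$ by more than $\varepsilon$ infinitely often. If $\alpha$ avoids all such tests, then for every rational $\varepsilon>0$ we get $\abs{N_a(\rest{\alpha}{n})/n - P(a)} \le \varepsilon$ for all sufficiently large $n$, which is exactly the claimed limit. So the whole argument reduces to: (1) a quantitative large-deviation bound on $\lambda_P$-measure, and (2) packaging that bound into a uniformly recursively enumerable family meeting the $2^{-n}$ threshold of Definition~\ref{ML-randomness-wrtm}.

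For step (1), fix $a \in \Omega$ and a rational $\varepsilon > 0$, and consider the set of strings $\sigma \in \Omega^*$ of a given length $m$ with $\abs{N_a(\sigma)/m - P(a)} > \varepsilon$. Grouping $\Omega$ into the letter $a$ versus everything else turns $\lambda_P\bigl(\osg{\sigma}: \sigma \text{ of length } m, \abs{N_a(\sigma)/m - P(a)} > \varepsilon\bigr)$ into a two-sided binomial tail probability for a $\mathrm{Bernoulli}(P(a))$ process, by \eqref{BmPosgsigma=Psigma} and additivity over the prefix-free set of length-$m$ strings. A Chernoff–Hoeffding estimate then gives a bound of the form $2 e^{-2\varepsilon^2 m}$ (or any exponential bound $C \rho^m$ with $0 < \rho < 1$), valid uniformly in $m$; the constants depend only on $\varepsilon$, not on $P$ beyond the value $P(a)$, which is all we need.

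For step (2), I would combine the tail bound with a tail-of-a-union trick. For each $n \in \N^+$, let $m_n$ be chosen large enough (explicitly, roughly $m_n \approx n / (2\varepsilon^2 \log_2 e)$ rounded up, or more cleanly whatever makes the geometric series below work), and set
\[
  \mathcal{C}^{(a,\varepsilon)}_n := \bigl\{\, \sigma \in \Omega^* \;:\; \abs{\sigma} \ge m_n,\ \abs{N_a(\sigma)/\abs{\sigma} - P(a)} > \varepsilon,\ \text{and no proper prefix of } \sigma \text{ has this property}\,\bigr\},
\]
i.e.\ the prefix-free set of \emph{minimal} bad strings of length at least $m_n$. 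Then $\osg{\mathcal{C}^{(a,\varepsilon)}_n}$ is exactly the set of $\alpha$ with $\abs{N_a(\rest{\alpha}{k})/k - P(a)} > \varepsilon$ for some $k \ge m_n$, and summing the per-length tail bounds over $k \ge m_n$ gives a geometric tail $\le C' \rho^{m_n}$, which by the choice of $m_n$ is $< 2^{-n}$. Recursive enumerability is immediate: the predicate on $\sigma$ is decidable (only the value $P(a)$ and arithmetic on $\abs{\sigma}$, $N_a(\sigma)$, and the rationals $\varepsilon, 2^{-n}$ enter), so $\mathcal{C}^{(a,\varepsilon)}$ is even recursive as a subset of $\N^+\times\Omega^*$. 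Finally, Martin-Löf $P$-randomness of $\alpha$ forces $\alpha \notin \bigcap_n \osg{\mathcal{C}^{(a,\varepsilon)}_n}$, so for some $n$ we have $\abs{N_a(\rest{\alpha}{k})/k - P(a)} \le \varepsilon$ for all $k \ge m_n$; letting $\varepsilon$ range over positive rationals (countably many tests, one per $\varepsilon$) yields the limit for that fixed $a$, and since $\Omega$ is finite we are done.

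**The main obstacle** is purely the bookkeeping in step (2): making sure the set $\mathcal{C}^{(a,\varepsilon)}_n$ is genuinely prefix-free while still having union equal to the "bad infinitely often from stage $m_n$ on" event, and choosing $m_n$ so that the geometric sum of large-deviation bounds clears $2^{-n}$ with room to spare. The large-deviation inequality itself (step (1)) is standard and causes no trouble; one could even avoid Chernoff by using a cruder second-moment (Chebyshev) bound and compensating with a faster-growing $m_n$, though the exponential bound keeps the arithmetic cleanest. I do not expect the recursive-enumerability clause to be an issue, since no computability hypothesis on $P$ is needed — only the single real $P(a)$ appears, and it enters the test only through comparisons with rationals, which can be handled by taking $m_n$ slightly larger to absorb any approximation slack if one insists on full rigor there.
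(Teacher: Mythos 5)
The paper does not actually give a proof of Theorem~\ref{FI}; it defers to Tadaki~[T16arXiv, Theorem~14], so I can only judge your argument on its own terms. On those terms there is a genuine gap, and it sits exactly at the point the paper flags as essential: the finite probability space $P$ is \emph{arbitrary}, with no computability assumption. Your test $\mathcal{C}^{(a,\varepsilon)}$ is defined by the predicate $\abs{N_a(\sigma)/\abs{\sigma}-P(a)}>\varepsilon$, which mentions the real number $P(a)$. When $P(a)$ is not computable, this predicate is not decidable and the resulting set of pairs $(n,\sigma)$ need not be recursively enumerable, so it is not a Martin-L\"of test in the sense of Definition~\ref{ML-randomness-wrtm}. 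Your closing remark that this is harmless because $P(a)$ ``enters the test only through comparisons with rationals, which can be handled by taking $m_n$ slightly larger to absorb any approximation slack'' does not repair this: comparing a rational with an arbitrary real is exactly the kind of step that is not effective, and for non-computable $P(a)$ you have no computable rational approximations whose error you could absorb. Everything else in your outline (the Hoeffding bound, which is indeed uniform in $P(a)$, the minimal-bad-string prefix-free covering, the geometric summation and the choice of $m_n$) is fine, and the whole argument is correct verbatim under the additional hypothesis that $P$ is computable --- but that hypothesis is precisely what the theorem refuses to make.

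The standard way to close the gap is to keep your large-deviation estimate but make the test refer only to rational data. For each rational $c$ and each positive integer $K$, let $\mathcal{C}_n$ consist of the minimal strings $\sigma$ with $\abs{\sigma}\ge Kn$ and $N_a(\sigma)/\abs{\sigma}>c$ (a decidable predicate, since $c$ is rational), so the family is uniformly recursive. If $c>P(a)$, Hoeffding gives $\lambda_P\left(\osg{\mathcal{C}_n}\right)\le\sum_{m\ge Kn}e^{-2\delta^2 m}$ with $\delta=c-P(a)>0$, so \emph{some} fixed $K$ (depending on the unknown $\delta$, but it only needs to exist) makes this $<2^{-n}$ for all $n$; hence every Martin-L\"of $P$-random $\alpha$ has $\limsup_n N_a(\rest{\alpha}{n})/n\le c$ for every rational $c>P(a)$, and symmetrically $\liminf_n N_a(\rest{\alpha}{n})/n\ge c$ for every rational $c<P(a)$, which yields the limit. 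The point is that the non-effective quantities ($P(a)$, $\delta$, the right $K$) appear only in \emph{verifying} the measure bound, never in the \emph{definition} of the r.e.\ set --- that is the structural change your proposal is missing.
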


Note that in Theorem~\ref{FI}
the underlying finite probability space $P$ is \emph{quite arbitrary}, and
thus we do not impose any computability restrictions on
$P$ at all, in particular.

\subsection{Non-occurrence of an elementary event with probability zero}

From various aspects, Tadaki~\cite[Section~5.1]{T16arXiv} demonstrated the fact that
\emph{an elementary event with probability zero never occurs
in the conventional quantum mechanics}.
This fact corresponds to Theorem~\ref{thm:zero_probability} below
in the context of the theory of
\emph{operational characterization of the notion of probability},
which was developed by Tadaki~\cite{T14,T15,T16arXiv}
based on Martin-L\"of $P$-randomness.

\begin{theorem}\label{thm:zero_probability}
Let $\Omega$ be an alphabet, and let $P\in\PS(\Omega)$.
Let $a\in\Omega$.
Suppose that
$\alpha$ is Martin-L\"of $P$-random
and $P(a)=0$.
Then $\alpha$ does not contain $a$.
\qed
\end{theorem}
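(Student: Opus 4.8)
The plan is to produce a single Martin-L\"of test with respect to $\lambda_{P}$ whose ``critical set'' $\bigcap_{n}\osg{\mathcal{C}_n}$ is exactly the set of all infinite sequences over $\Omega$ that contain $a$, and then invoke the Martin-L\"of $P$-randomness of $\alpha$. Concretely, I would set, for every $n\in\N^+$,
\[
  \mathcal{C}_n := \left\{\, \tau a \mid \tau\in(\Omega\setminus\{a\})^* \,\right\},
  \qquad
  \mathcal{C} := \left\{\,(n,\sigma) \mid n\in\N^+,\ \sigma\in\mathcal{C}_n\,\right\}
\]
(so $\mathcal{C}_n$ does not actually depend on $n$). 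Since $\Omega$ is a finite set and $a\in\Omega$ is given, the condition ``$\sigma\neq\lambda$, the last symbol of $\sigma$ is $a$, and no earlier symbol of $\sigma$ is $a$'' is decidable, so $\mathcal{C}$ is a decidable, hence recursively enumerable, subset of $\N^+\times\Omega^*$. Each $\mathcal{C}_n$ is prefix-free: if $\tau a$ were a proper prefix of $\tau' a$ with $\tau,\tau'\in(\Omega\setminus\{a\})^*$, then the occurrence of $a$ ending $\tau a$ would be an occurrence of $a$ strictly before the end of $\tau' a$, contradicting $\tau'\in(\Omega\setminus\{a\})^*$.

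Next I would verify the measure bound. Because $\mathcal{C}_n$ is a prefix-free subset of $\Omega^*$, the definition of $\mu_r$ on open sets together with \eqref{mr} gives
$\Bm{P}{\osg{\mathcal{C}_n}}=\sum_{\sigma\in\mathcal{C}_n}P(\sigma)=\sum_{\tau\in(\Omega\setminus\{a\})^*}P(\tau)P(a)$.
This is the only point where the hypothesis enters: since $P(a)=0$, every summand $P(\tau)P(a)$ is literally $0$, so the series is a series of zeros and $\Bm{P}{\osg{\mathcal{C}_n}}=0<2^{-n}$. Hence $\mathcal{C}$ is a Martin-L\"of test with respect to $\lambda_{P}$.

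Finally I would identify the critical set. Since all $\mathcal{C}_n$ coincide, $\bigcap_{n=1}^{\infty}\osg{\mathcal{C}_n}=\osg{\mathcal{C}_1}$, and I claim $\osg{\mathcal{C}_1}$ is precisely the set of $\beta\in\Omega^\infty$ that contain $a$. If $a$ occurs in $\beta$, let $k\in\N^+$ be the least position with $\beta(k)=a$; then $\rest{\beta}{k}=\beta(1)\dots\beta(k-1)\,a$ with $\beta(1)\dots\beta(k-1)\in(\Omega\setminus\{a\})^*$, so $\rest{\beta}{k}\in\mathcal{C}_1$ and $\beta\in\osg{\mathcal{C}_1}$. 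Conversely, every string in $\mathcal{C}_1$ has last symbol $a$, so any $\beta\in\osg{\mathcal{C}_1}$ contains $a$. Because $\alpha$ is Martin-L\"of $P$-random, $\alpha\notin\bigcap_{n=1}^{\infty}\osg{\mathcal{C}_n}$, i.e.\ $\alpha$ does not contain $a$, as claimed.

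All of the steps are routine, so there is no genuine obstacle; the single point that warrants care is the measure computation above --- making sure that $\Bm{P}{\osg{\mathcal{C}_n}}$ really is the termwise sum $\sum_{\sigma\in\mathcal{C}_n}P(\sigma)$ over the prefix-free set $\mathcal{C}_n$ (so that no indeterminate ``$0\cdot\infty$'' involving the divergent quantity $\sum_{\tau}P(\tau)$ ever arises) and that $P(a)=0$ forces this sum to vanish. Everything else --- recursive enumerability, prefix-freeness, and the description of $\osg{\mathcal{C}_1}$ --- is immediate from the finiteness of $\Omega$ and the definitions in Section~\ref{MLRam}.
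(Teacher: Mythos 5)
Your proof is correct. The paper itself does not reproduce an argument for this theorem---it only cites Tadaki~\cite[Theorem~13]{T16arXiv}---so there is nothing internal to compare against, but your construction is the natural (and presumably the intended) one: the set of sequences containing $a$ is an effectively open set of $\lambda_P$-measure zero, covered by the single prefix-free set $\{\tau a \mid \tau\in(\Omega\setminus\{a\})^*\}$ used uniformly at every level $n$, and membership in this set is decidable from $a$ and the finite alphabet alone, so no computability assumption on $P$ is needed, in line with the paper's remarks. Your attention to the measure computation is well placed: since each term $P(\tau)P(a)$ is literally $0$, the bound $\Bm{P}{\osg{\mathcal{C}_n}}=0<2^{-n}$ holds even though $\sum_{\tau\in(\Omega\setminus\{a\})^*}P(\tau)$ diverges, and the identification of $\bigcap_n\osg{\mathcal{C}_n}$ with the sequences containing $a$ is exactly right.
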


This result for Martin-L\"of $P$-random was, in essence,
pointed out by Martin-L\"of~\cite{M66}.
Note that
we do not impose any computability restrictions on
the underlying finite probability space~$P$ at all
in Theorem~\ref{thm:zero_probability}.
For the proof of Theorem~\ref{thm:zero_probability},
see Tadaki~\cite[Theorem~13]{T16arXiv}.
The following corollary is immediate from Theorem~\ref{thm:zero_probability}.

\begin{corollary}\label{cor:always-positive-probability}
Let $\Omega$ be an alphabet, and let $P\in\PS(\Omega)$.
For every $\alpha\in\Omega^\infty$, if $\alpha$ is Martin-L\"of $P$-random
then $P(\alpha(n))>0$ for every $n\in\N^+$.
\qed
\end{corollary}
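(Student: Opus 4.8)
The plan is to derive Corollary~\ref{cor:always-positive-probability} directly from Theorem~\ref{thm:zero_probability} by a simple contrapositive argument, since the corollary is explicitly stated to be immediate from that theorem. First I would fix an alphabet $\Omega$, a finite probability space $P\in\PS(\Omega)$, and an $\alpha\in\Omega^\infty$ that is Martin-L\"of $P$-random. The goal is to show $P(\alpha(n))>0$ for every $n\in\N^+$. Since each value $P(a)$ is, by Definition~\ref{def-FPS}, a real number with $P(a)\ge 0$, it suffices to rule out the case $P(\alpha(n))=0$.

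So I would argue by contradiction: suppose there is some $n\in\N^+$ with $P(\alpha(n))=0$. Set $a:=\alpha(n)\in\Omega$. Then $P(a)=0$, and $\alpha$ is Martin-L\"of $P$-random by hypothesis, so Theorem~\ref{thm:zero_probability} applies and tells us that $\alpha$ does not contain $a$. But $\alpha(n)=a$ witnesses that $a$ does occur in $\alpha$ (at position $n$), which contradicts the conclusion of Theorem~\ref{thm:zero_probability}. Hence no such $n$ exists, i.e., $P(\alpha(n))>0$ for every $n\in\N^+$, as required.

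There is essentially no obstacle here; the only point worth stating carefully is the trivial observation that "$\alpha$ contains $a$" is exactly the assertion that $a$ appears as some term $\alpha(k)$ of the sequence, so exhibiting the single index $n$ with $\alpha(n)=a$ is enough to contradict "\,$\alpha$ does not contain $a$.\," Everything else — the nonnegativity of $P$, the fact that $P(a)=0$ is the only alternative to $P(a)>0$ — is immediate from the definitions already in place. If one wanted to avoid the contradiction phrasing, one could equally run it as a contrapositive: if $P(\alpha(n))=0$ for some $n$ then $\alpha$ contains the symbol $\alpha(n)$ of probability zero, so by Theorem~\ref{thm:zero_probability} $\alpha$ cannot be Martin-L\"of $P$-random. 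Either presentation is a one-line deduction from the previously established theorem.
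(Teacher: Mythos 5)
Your argument is exactly the deduction the paper intends: the corollary is stated there as immediate from Theorem~\ref{thm:zero_probability}, and your contrapositive/contradiction step (if $P(\alpha(n))=0$ then the symbol $\alpha(n)$ has probability zero yet occurs in $\alpha$, contradicting that theorem) is that one-line argument spelled out. The proposal is correct and matches the paper's approach.
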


\subsection{Conditional probability}
\label{subsec:Conditional probability}

The notion of \emph{conditional probability} in a finite probability space can be represented by
the notion of Martin-L\"of $P$-randomness in a natural manner as follows.

First, we recall
the notion of conditional probability in a finite probability space.
Let $\Omega$ be an alphabet, and let $P\in\PS(\Omega)$.
Let $B\subset\Omega$ be an event on the finite probability space~$P$.
Suppose that $P(B)>0$.
Then, for each event~$A\subset\Omega$,
the \emph{conditional probability of A given B}, denoted
$P(A|B)$, is defined as $P(A\cap B)/P(B)$.
This notion defines a finite probability space~$P_B\in\PS(B)$
by the condition that $P_B(a):=P(\{a\}|B)$ for every $a\in B$.

When an infinite sequence~$\alpha\in\Omega^\infty$ contains infinitely many elements from $B$,
$$\cond{B}{\alpha}$$
is defined as an infinite sequence in $B^\infty$ obtained from $\alpha$
by eliminating all elements
of
$\Omega\setminus B$ occurring in $\alpha$.
If $\alpha$ is Martin-L\"of $P$-random for the finite probability space~$P$ and $P(B)>0$,
then $\alpha$ contains infinitely many elements from $B$
due to Theorem~\ref{FI} above.
Therefore, $\cond{B}{\alpha}$ is properly defined in this case.
Note that the notion of $\cond{B}{\alpha}$ in our
theory
is introduced by
Tadaki~\cite{T14},
suggested by
the notion of \emph{partition}
in the theory of \emph{collectives}
introduced
by von Mises~\cite{vM64} (see Tadaki~\cite{T16arXiv} for the detail).

We can then show Theorem~\ref{conditional_probability} below, which states that
Martin-L\"of $P$-random sequences are \emph{closed under conditioning}.
For the proof of Theorem~\ref{conditional_probability}, see Tadaki~\cite[Theorem~26]{T16arXiv}.

\begin{theorem}[Closure property under conditioning, Tadaki~\cite{T14}]\label{conditional_probability}
Let $\Omega$ be an alphabet, and let $P\in\PS(\Omega)$.
Let $B\subset\Omega$ be an event on the finite probability space~$P$ with $P(B)>0$.
For every $\alpha\in\Omega^\infty$,
if $\alpha$ is Martin-L\"of $P$-random
then $\cond{B}{\alpha}$ is Martin-L\"of $P_B$-random for the finite probability space~$P_B$.
\qed
\end{theorem}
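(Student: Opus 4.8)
The plan is to prove the contrapositive: assuming $\cond{B}{\alpha}$ is \emph{not} Martin-L\"of $P_B$-random, I would transform a witnessing Martin-L\"of test with respect to $\lambda_{P_B}$ into a Martin-L\"of test with respect to $\lambda_P$ that covers $\alpha$, contradicting the $P$-randomness of $\alpha$. Throughout, write $C:=\Omega\setminus B$, so that $P(C)=1-P(B)<1$. The structural fact driving the construction is that, for any $\tau=b_1 b_2\dotsm b_k\in B^*$ with each $b_i\in B$, the set of all $\alpha\in\Omega^\infty$ whose first $k$ occurrences of symbols from $B$ spell out $\tau$ is the \emph{disjoint} union of the cylinders $\osg{\sigma}$ taken over all strings $\sigma$ of ``block form'' $\sigma=w_0 b_1 w_1 b_2 w_2\dotsm w_{k-1}b_k$ with each $w_i\in C^*$; for a given such $\alpha$, the unique block-form $\sigma$ lying on $\alpha$ is precisely its shortest prefix containing $k$ symbols from $B$.

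First I would establish the measure identity. For a fixed $\tau=b_1 \dotsm b_k$, summing $\Bm{P}{\osg{\sigma}}=P(\sigma)=P(w_0)P(b_1)P(w_1)P(b_2)\dotsm P(w_{k-1})P(b_k)$ over all block-form $\sigma$ factors as $\left(\sum_{w\in C^*}P(w)\right)^{k}\prod_{i=1}^{k}P(b_i)$. Since $\sum_{w\in C^*}P(w)=\sum_{j=0}^{\infty}P(C)^{j}=\left(1-P(C)\right)^{-1}=P(B)^{-1}$, this equals $\prod_{i=1}^{k}\bigl(P(b_i)/P(B)\bigr)=\prod_{i=1}^{k}P_B(b_i)=\Bm{P_B}{\osg{\tau}}$ by \eqref{BmPosgsigma=Psigma}. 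In words: the ``pull-back along $\cond{B}{\cdot}$'' of a cylinder in $B^\infty$ carries the same $\lambda_P$-measure as the cylinder carries $\lambda_{P_B}$-measure.

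Next, given a Martin-L\"of test $\mathcal{D}$ with respect to $\lambda_{P_B}$ such that $\cond{B}{\alpha}\in\bigcap_{n}\osg{\mathcal{D}_n}$, I would define $\mathcal{C}\subseteq\N^+\times\Omega^*$ by declaring $(n,\sigma)\in\mathcal{C}$ exactly when $\sigma$ ends with a symbol of $B$ and the subsequence of $B$-symbols occurring in $\sigma$ belongs to $\mathcal{D}_n$ --- equivalently, $\sigma$ has block form for some $\tau\in\mathcal{D}_n$. I then need to verify the three clauses of Definition~\ref{ML-randomness-wrtm} for $\lambda_P$: (a) $\mathcal{C}$ is recursively enumerable, which is immediate since $\mathcal{D}$ is r.e.\ and the block-form condition is decidable; (b) each $\mathcal{C}_n$ is prefix-free, for if $\sigma$ were a proper prefix of $\sigma'$ with both in $\mathcal{C}_n$ then the $B$-subsequences of $\sigma$ and $\sigma'$ would be comparable members of the prefix-free set $\mathcal{D}_n$, hence equal, which is impossible because each of $\sigma,\sigma'$ terminates exactly at its last $B$-symbol; and (c) $\Bm{P}{\osg{\mathcal{C}_n}}<2^{-n}$, since by the prefix-freeness just shown together with that of $\mathcal{D}_n$ and the measure identity, $\Bm{P}{\osg{\mathcal{C}_n}}=\sum_{\sigma\in\mathcal{C}_n}P(\sigma)=\sum_{\tau\in\mathcal{D}_n}\Bm{P_B}{\osg{\tau}}=\Bm{P_B}{\osg{\mathcal{D}_n}}<2^{-n}$.

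Finally, I would check that $\alpha\in\bigcap_{n}\osg{\mathcal{C}_n}$. Since $\alpha$ is Martin-L\"of $P$-random and $P(B)>0$, Theorem~\ref{FI} guarantees that $\alpha$ contains infinitely many symbols from $B$, so $\cond{B}{\alpha}$ is a genuine infinite sequence; for each $n$, choose $\tau\in\mathcal{D}_n$ with $\cond{B}{\alpha}$ extending $\tau$, and let $\sigma$ be the shortest prefix of $\alpha$ containing $\abs{\tau}$ symbols from $B$. Then $\sigma$ has block form for $\tau$, whence $(n,\sigma)\in\mathcal{C}$ and $\alpha\in\osg{\sigma}\subseteq\osg{\mathcal{C}_n}$. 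Thus $\alpha$ is captured by the Martin-L\"of test $\mathcal{C}$ with respect to $\lambda_P$, contradicting its $P$-randomness. The only genuinely delicate points are the combinatorial bookkeeping in the block decomposition --- above all, verifying that the block-form cylinders partition the relevant set, so that the measures add exactly --- and the exactness of the geometric-series evaluation $\sum_{w\in C^*}P(w)=P(B)^{-1}$; the remaining verifications are routine.
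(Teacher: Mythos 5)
Your proof is correct: the pull-back of a $\lambda_{P_B}$-test along the block decomposition $\sigma=w_0b_1\dotsm w_{k-1}b_k$ (with $w_i\in(\Omega\setminus B)^*$) is well-defined, each $\mathcal{C}_n$ is prefix-free, the geometric-series identity $\sum_{w\in(\Omega\setminus B)^*}P(w)=P(B)^{-1}$ gives $\Bm{P}{\osg{\mathcal{C}_n}}=\Bm{P_B}{\osg{\mathcal{D}_n}}<2^{-n}$, no computability of $P$ is ever needed, and Theorem~\ref{FI} legitimately supplies the infinitely many $B$-symbols needed for the covering step. The paper itself does not prove this theorem but defers to Tadaki~\cite[Theorem~26]{T16arXiv}; your test-transformation argument is the natural route such a proof takes, so there is nothing essentially different to flag.
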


Note that in Theorem~\ref{conditional_probability}
the underlying finite probability space $P$ is \emph{quite arbitrary}, and
thus we do not impose any computability restrictions on $P$ at all, in particular.

\subsection{\boldmath Independence of Martin-L\"of $P$-random infinite sequences}\label{sec-IMLP}

Tadaki~\cite{T15} proposed
the notion of \emph{independence}
of Martin-L\"of $P$-random infinite sequences.
This notion is introduced
in the following manner:
Let $\Omega_1,\dots,\Omega_K$ be alphabets.
For any $\alpha_1\in\Omega_1^\infty,\dots,\alpha_K\in\Omega_K^\infty$,
we use
$$\alpha_1\times\dots\times\alpha_K$$
to denote an infinite sequence~$\alpha$ over $\Omega_1\times\dots\times\Omega_K$
such that
$$\alpha(n)=(\alpha_1(n),\dots,\alpha_K(n))$$
for every $n\in\N^+$.
On the other hand, for any $P_1\in\PS(\Omega_1),\dots,P_K\in\PS(\Omega_K)$,
we use
$$P_1\times\dots\times P_K$$
to denote a finite probability space~$Q\in\PS(\Omega_1\times\dots\times\Omega_K)$ such that
$$Q(a_1,\dots,a_K)=P_1(a_1)\dotsm P_K(a_K)$$
for every $a_1\in\Omega_1,\dotsc,a_K\in\Omega_K$.

\begin{definition}[Independence of Martin-L\"of $P$-random infinite sequences, Tadaki~\cite{T15}]
\label{independency-of-ensembles}
Let $\Omega_1,\dotsc,\Omega_K$ be alphabets, and let
$P_1\in\PS(\Omega_1),\dots,P_K\in\PS(\Omega_K)$.
For each $k=1,\dots,K$,
let $\alpha_k$ be a Martin-L\"of $P_k$-random infinite sequence over $\Omega_k$.
We say that $\alpha_1,\dots,\alpha_K$ are
\emph{independent}
if $\alpha_1\times\dots\times\alpha_K$ is Martin-L\"of $P_1\times\dots\times P_K$-random.
\qed
\end{definition}

Note that the notion of the independence of Martin-L\"of $P$-random infinite sequences
is introduced by
Tadaki~\cite{T15},
suggested by
the notion of \emph{independence} of \emph{collectives}
in the theory of collectives
introduced
by von Mises~\cite{vM64} (see Tadaki~\cite{T16arXiv} for the detail).
In Definition~\ref{independency-of-ensembles}
the underlying finite probability spaces $P_1,\dots,P_K$ are \emph{quite arbitrary}, and thus
we do not impose any computability restrictions on
$P_1,\dots,P_K$ at all, in particular.

\subsection{Marginal distribution}

The following two theorems together state that
Martin-L\"of $P$-randomness is
\emph{closed under the formation of marginal distribution}.
See Tadaki~\cite[Theorem~9.4]{T18arXiv} for the proof
of Theorem~\ref{contraction2}.

\begin{theorem}[Tadaki~\cite{T18arXiv}]\label{contraction2}
Let $\Omega$ and $\Theta$ be alphabets.
Let $P\in\PS(\Omega\times\Theta)$, and let $\alpha\in(\Omega\times\Theta)^\infty$.
Suppose that $\beta$ is an infinite sequence over $\Omega$ obtained from $\alpha$
by replacing each element $(m,l)$ in $\alpha$ by $m$.
If $\alpha$ is Martin-L\"of $P$-random then $\beta$ is Martin-L\"of $Q$-random,
where $Q\in\PS(\Omega)$ such that
$$Q(m):=\sum_{l\in\Theta}P(m,l)$$
for every $m\in\Omega$.
\qed
\end{theorem}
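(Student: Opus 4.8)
The plan is to prove the contrapositive: assuming $\beta$ is \emph{not} Martin-L\"of $Q$-random, I will construct a Martin-L\"of test with respect to $\lambda_P$ that captures $\alpha$, contradicting the hypothesis that $\alpha$ is Martin-L\"of $P$-random. The guiding idea is that the map sending each letter $(m,l)$ to $m$ is measure-theoretically ``marginalizing'': it pushes $\lambda_P$ forward to $\lambda_Q$, and, being computable and length-preserving, it transports tests backwards without the complications (no deletion of letters) that arise in the conditioning situation of Theorem~\ref{conditional_probability}.

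Concretely, I introduce the projection $\pi\colon(\Omega\times\Theta)^*\to\Omega^*$ replacing each letter $(m,l)$ by $m$. It is computable, satisfies $\abs{\pi(\tau)}=\abs{\tau}$, and $\pi(\rest{\alpha}{n})=\rest{\beta}{n}$ for all $n$. The key computation is that for every $\sigma=\sigma_1\cdots\sigma_k\in\Omega^*$ the preimage $\pi^{-1}(\sigma)$ is a finite set of strings, all of length $k$, hence prefix-free, and
\[
  \Bm{P}{\osg{\pi^{-1}(\sigma)}}
  =\sum_{\tau\in\pi^{-1}(\sigma)}P(\tau)
  =\sum_{l_1,\dots,l_k\in\Theta}\;\prod_{i=1}^{k}P(\sigma_i,l_i)
  =\prod_{i=1}^{k}\sum_{l\in\Theta}P(\sigma_i,l)
  =\prod_{i=1}^{k}Q(\sigma_i)=Q(\sigma),
\]
using \eqref{BmPosgsigma=Psigma} for $\lambda_P$ on the prefix-free generator $\pi^{-1}(\sigma)$, the finite distributive law, and the defining relation of the marginal $Q$.

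Now let $\mathcal{C}\subset\N^+\times\Omega^*$ be a Martin-L\"of test with respect to $\lambda_Q$ with $\beta\in\bigcap_{n=1}^{\infty}\osg{\mathcal{C}_n}$, and put $\mathcal{D}:=\{(n,\tau)\mid(n,\pi(\tau))\in\mathcal{C}\}$, so $\mathcal{D}_n=\pi^{-1}(\mathcal{C}_n)$. It is recursively enumerable since $\mathcal{C}$ is and the $\pi$-preimage of any string is a finite, computably listable set. Each $\mathcal{D}_n$ is prefix-free: if $\tau$ is a prefix of $\tau'$ with both in $\mathcal{D}_n$, then $\pi(\tau)$ is a prefix of $\pi(\tau')$ with both in the prefix-free set $\mathcal{C}_n$, so $\pi(\tau)=\pi(\tau')$ and hence, by length preservation, $\tau=\tau'$. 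Since the sets $\pi^{-1}(\sigma)$ for distinct $\sigma\in\mathcal{C}_n$ are pairwise disjoint, summing the displayed identity over $\sigma\in\mathcal{C}_n$ gives
\[
  \Bm{P}{\osg{\mathcal{D}_n}}=\sum_{\tau\in\mathcal{D}_n}P(\tau)=\sum_{\sigma\in\mathcal{C}_n}\;\sum_{\tau\in\pi^{-1}(\sigma)}P(\tau)=\sum_{\sigma\in\mathcal{C}_n}Q(\sigma)=\Bm{Q}{\osg{\mathcal{C}_n}}<2^{-n},
\]
so $\mathcal{D}$ is a Martin-L\"of test with respect to $\lambda_P$. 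Finally, for each $n$ pick $m_n$ with $\rest{\beta}{m_n}\in\mathcal{C}_n$; then $\pi(\rest{\alpha}{m_n})=\rest{\beta}{m_n}\in\mathcal{C}_n$, so $\rest{\alpha}{m_n}\in\mathcal{D}_n$, whence $\alpha\in\bigcap_{n=1}^{\infty}\osg{\mathcal{D}_n}$, contradicting the Martin-L\"of $P$-randomness of $\alpha$.

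I expect the only delicate point to be the measure bookkeeping in the middle step: checking that the relevant sets are generated by prefix-free sets, that the preimage pieces $\pi^{-1}(\sigma)$ are pairwise disjoint, and that the finite distributive law applies termwise to recover $Q(\sigma)$ from the products $\prod_i P(\sigma_i,l_i)$. Everything else (computability and length preservation of $\pi$, prefix-freeness of $\mathcal{D}_n$, and the final capture argument) is routine.
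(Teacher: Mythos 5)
Your proof is correct: the letter-wise projection $\pi$ is computable and length-preserving, the preimage sets $\pi^{-1}(\sigma)$ are prefix-free and pairwise disjoint, the computation $\sum_{\tau\in\pi^{-1}(\sigma)}P(\tau)=Q(\sigma)$ is exactly the marginalization identity needed, and pulling the $\lambda_Q$-test back to $\mathcal{D}_n=\pi^{-1}(\mathcal{C}_n)$ yields a valid $\lambda_P$-test capturing $\alpha$, giving the contrapositive. Note that this paper does not reproduce a proof of Theorem~\ref{contraction2} but defers it to Tadaki~[T18arXiv, Theorem~9.4]; your argument is the natural test-pullback proof one expects there, is self-contained, and correctly uses no computability assumption on $P$ (only the recursive enumerability of the test and the finiteness of $\Theta$), so it stands on its own.
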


In a similar manner to the proof of Theorem~\ref{contraction2},
we can prove Theorem~\ref{contraction2-2} below, a symmetrical result to Theorem~\ref{contraction2}.

\begin{theorem}\label{contraction2-2}
Let $\Omega$ and $\Theta$ be alphabets.
Let $P\in\PS(\Omega\times\Theta)$, and let $\alpha\in(\Omega\times\Theta)^\infty$.
Suppose that $\gamma$ is an infinite sequence over $\Theta$ obtained from $\alpha$
by replacing each element $(m,l)$ in $\alpha$ by $l$.
If $\alpha$ is Martin-L\"of $P$-random then $\gamma$ is Martin-L\"of $R$-random,
where $R\in\PS(\Theta)$ such that
$$R(l):=\sum_{m\in\Omega}P(m,l)$$
for every $l\in\Theta$.
\qed
\end{theorem}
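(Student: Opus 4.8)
The plan is to reduce Theorem~\ref{contraction2-2} to Theorem~\ref{contraction2} by a simple relabeling of coordinates, so that no new argument about Martin-L\"of tests is needed. First I would observe that the only asymmetry between the two statements is the order of the factors in the product alphabet: Theorem~\ref{contraction2} projects $(\Omega\times\Theta)^\infty$ onto its \emph{first} coordinate, whereas Theorem~\ref{contraction2-2} projects onto its \emph{second}. So the natural move is to introduce the swap bijection $s\colon\Omega\times\Theta\to\Theta\times\Omega$ defined by $s(m,l)=(l,m)$, and to let $s$ act coordinatewise on infinite sequences, sending $\alpha\in(\Omega\times\Theta)^\infty$ to $s\circ\alpha\in(\Theta\times\Omega)^\infty$.

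Next I would record the two facts that make this work. (1) The induced map on finite probability spaces: if $P\in\PS(\Omega\times\Theta)$, define $P^s\in\PS(\Theta\times\Omega)$ by $P^s(l,m):=P(m,l)$; then $P^s$ is indeed a finite probability space, and for every $\sigma\in(\Omega\times\Theta)^*$ we have $P^s(s(\sigma))=P(\sigma)$, since $P$ and $P^s$ assign the same value to corresponding elementary events and the product defining $P(\sigma)$ is taken factor by factor. (2) The relabeling preserves Martin-L\"of randomness: $\alpha$ is Martin-L\"of $P$-random if and only if $s\circ\alpha$ is Martin-L\"of $P^s$-random. This is because $\sigma\mapsto s(\sigma)$ is a computable bijection of $(\Omega\times\Theta)^*$ onto $(\Theta\times\Omega)^*$ preserving lengths, prefixes, and prefix-freeness, and it transports Martin-L\"of tests with respect to $\lambda_P$ to Martin-L\"of tests with respect to $\lambda_{P^s}$ and back, using $\Bm{P}{\osg{\sigma}}=P(\sigma)$ from \eqref{BmPosgsigma=Psigma} together with fact (1); hence the critical intersections correspond under $s$.

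With these two facts in hand the proof is a short chain of implications. Suppose $\alpha\in(\Omega\times\Theta)^\infty$ is Martin-L\"of $P$-random. By fact (2), $s\circ\alpha$ is Martin-L\"of $P^s$-random over $\Theta\times\Omega$. Now apply Theorem~\ref{contraction2} to $P^s\in\PS(\Theta\times\Omega)$ and to $s\circ\alpha$: the infinite sequence $\gamma$ obtained from $s\circ\alpha$ by replacing each element $(l,m)$ by its first coordinate $l$ is Martin-L\"of $R$-random, where $R\in\PS(\Theta)$ is given by $R(l)=\sum_{m\in\Omega}P^s(l,m)=\sum_{m\in\Omega}P(m,l)$, exactly the probability space in the statement. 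Finally note that this $\gamma$ is precisely the sequence obtained from the original $\alpha$ by replacing each element $(m,l)$ by $l$, since composing with $s$ and then taking the first coordinate is the same as taking the second coordinate of $\alpha$ directly. This gives the conclusion.

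The only place that requires any care is fact (2), i.e.\ checking that coordinate-swapping genuinely carries Martin-L\"of tests to Martin-L\"of tests and respects the measure conditions \eqref{muocn<2n}; but this is entirely routine because $s$ is a length-preserving computable bijection and, by \eqref{BmPosgsigma=Psigma} and fact (1), it is measure-preserving in the sense that $\Bm{P^s}{\osg{s(\sigma)}}=P^s(s(\sigma))=P(\sigma)=\Bm{P}{\osg{\sigma}}$. So I expect no real obstacle, and indeed this mirrors the remark in the paper that Theorem~\ref{contraction2-2} is proved ``in a similar manner'' to Theorem~\ref{contraction2}; the relabeling trick simply makes ``similar manner'' precise by reusing Theorem~\ref{contraction2} as a black box rather than re-running its test construction.
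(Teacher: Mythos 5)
Your proposal is correct, but it takes a different route from the paper: the paper disposes of Theorem~\ref{contraction2-2} by saying it is proved \emph{in a similar manner} to Theorem~\ref{contraction2}, i.e., by re-running the marginalization argument with the roles of $\Omega$ and $\Theta$ exchanged, whereas you reuse Theorem~\ref{contraction2} as a black box and add a coordinate-swap step. Your reduction is sound: with $s(m,l)=(l,m)$ and $P^s(l,m):=P(m,l)$ one indeed has $P^s\in\PS(\Theta\times\Omega)$ and $P^s(s(\sigma))=P(\sigma)$ for all $\sigma\in(\Omega\times\Theta)^*$, and pulling an arbitrary Martin-L\"of test with respect to $\lambda_{P^s}$ back symbol-wise along $s^{-1}$ yields a recursively enumerable, levelwise prefix-free test whose levels carry the same $\lambda_P$-measure (by \eqref{BmPosgsigma=Psigma} and the definition of the induced measure on open sets via prefix-free covers), so the bound \eqref{muocn<2n} is preserved and the critical intersections correspond; hence $s\circ\alpha$ is Martin-L\"of $P^s$-random, and Theorem~\ref{contraction2} applied to $s\circ\alpha$ gives exactly $\gamma$ with the marginal $R(l)=\sum_{m\in\Omega}P^s(l,m)=\sum_{m\in\Omega}P(m,l)$. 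What your route buys is verbatim reuse of Theorem~\ref{contraction2} plus a reusable invariance fact (Martin-L\"of randomness is preserved under computable, symbol-wise, measure-preserving bijective relabelings of the alphabet), at the cost of having to verify that transport lemma, which is easy precisely because $s$ is bijective and exactly measure-preserving on cylinders; the paper's route instead repeats the (non-injective, projection-style) test construction with the coordinates swapped and needs no auxiliary lemma. The total work is comparable, and your argument is a legitimate way of making the paper's phrase ``in a similar manner'' precise.
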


Note that
in both Theorem~\ref{contraction2} and Theorem~\ref{contraction2-2}
the underlying finite probability space $P$ on $\Omega\times\Theta$
is \emph{quite arbitrary}, and
thus we do not impose any computability restrictions on $P$ at all, in particular.
Then using Theorems~\ref{contraction2} and \ref{contraction2-2}
we can show the following theorem.

\begin{theorem}\label{independency-imlplies-each-randomness}
Let $\Omega_1,\dotsc,\Omega_K$ be alphabets, and let
$P_1\in\PS(\Omega_1),\dots,P_K\in\PS(\Omega_K)$.
Let $\alpha_1\in\Omega_1^\infty,\dots,\alpha_K\in\Omega_K^\infty$.
Suppose that $\alpha_1\times\dots\times\alpha_K$ is Martin-L\"of $P_1\times\dots\times P_K$-random.
Then $\alpha_k$ is Martin-L\"of $P_k$-random for every $k=1,\dots,K$.
\end{theorem}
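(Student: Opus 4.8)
The plan is to obtain each $\alpha_k$ as a marginal of the joint sequence $\alpha_1\times\dots\times\alpha_K$, and then to invoke the closure of Martin-L\"of $P$-randomness under the formation of marginal distributions, which is exactly what Theorems~\ref{contraction2} and \ref{contraction2-2} provide. Fix $k\in\{1,\dots,K\}$. First I would regard the alphabet $\Omega_1\times\dots\times\Omega_K$ as a product of two alphabets $\Omega\times\Omega_k$, where $\Omega$ denotes $\prod_{i\ne k}\Omega_i$ with the remaining coordinates listed in their natural order, using the obvious bijection $(a_1,\dots,a_K)\mapsto\bigl((a_i)_{i\ne k},a_k\bigr)$. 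Under this identification the joint sequence $\alpha_1\times\dots\times\alpha_K$ becomes the sequence $\delta\in(\Omega\times\Omega_k)^\infty$ with $\delta(n)=\bigl((\alpha_i(n))_{i\ne k},\alpha_k(n)\bigr)$, and the finite probability space $P_1\times\dots\times P_K$ becomes the finite probability space $P\in\PS(\Omega\times\Omega_k)$ with $P\bigl((a_i)_{i\ne k},a_k\bigr)=\bigl(\prod_{i\ne k}P_i(a_i)\bigr)P_k(a_k)$.

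Next I would observe that $\delta$ is Martin-L\"of $P$-random: the bijection above and its inverse are (trivially) computable maps between finite sets, and the induced homeomorphism on infinite sequences carries the Bernoulli measure $\lambda_{P_1\times\dots\times P_K}$ onto the Bernoulli measure $\lambda_P$, hence it sends Martin-L\"of tests to Martin-L\"of tests and preserves Martin-L\"of randomness. Then I would apply Theorem~\ref{contraction2-2} with $\Theta:=\Omega_k$, with the above $P$, and with $\alpha:=\delta$. Replacing each element $\bigl((a_i)_{i\ne k},a_k\bigr)$ occurring in $\delta$ by its second component $a_k$ returns precisely $\alpha_k$, so Theorem~\ref{contraction2-2} yields that $\alpha_k$ is Martin-L\"of $R$-random, where $R(a_k)=\sum_{(a_i)_{i\ne k}}P\bigl((a_i)_{i\ne k},a_k\bigr)=P_k(a_k)\sum_{(a_i)_{i\ne k}}\prod_{i\ne k}P_i(a_i)=P_k(a_k)$, the last equality because $\prod_{i\ne k}P_i$ is a finite probability space and so its probabilities sum to $1$. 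Hence $\alpha_k$ is Martin-L\"of $P_k$-random, and since $k$ was arbitrary the theorem follows.

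An equivalent route, which never has to relabel a ``middle'' coordinate, is to peel the coordinates off one at a time: iterating Theorem~\ref{contraction2} reduces $\alpha_1\times\dots\times\alpha_K$ to $\alpha_1\times\dots\times\alpha_{K-1}$, then to $\alpha_1\times\dots\times\alpha_{K-2}$, and so on down to $\alpha_1\times\dots\times\alpha_k$, preserving the analogous product measure at each step; a single final application of Theorem~\ref{contraction2-2} then extracts $\alpha_k$. The argument is essentially routine, and the only point that needs any care --- what I would regard as the one mild obstacle --- is the passage through the product alphabets: namely, the invariance of Martin-L\"of randomness under the canonical identification of $(\Omega_1\times\dots\times\Omega_K)^\infty$ with $\bigl(\Omega\times\Omega_k\bigr)^\infty$, together with the elementary verification that the corresponding marginal of $P_1\times\dots\times P_K$ is exactly $P_k$.
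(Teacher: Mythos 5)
Your proposal is correct and rests on exactly the same ingredients as the paper's proof, namely the marginal-distribution closure results (Theorems~\ref{contraction2} and \ref{contraction2-2}) combined with a canonical regrouping of the product alphabet; your second, coordinate-peeling route is essentially the paper's argument, which reduces to the case $K=2$ by regarding $\alpha_1\times\dots\times\alpha_K$ as $(\alpha_1\times\dots\times\alpha_{K-1})\times\alpha_K$. Your main route merely makes explicit (via the computable relabeling that moves coordinate $k$ to the end) the same identification the paper leaves implicit in its ``can be regarded as'' step, so the two arguments are the same in substance.
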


\begin{proof}
Note that $P_1\times\dots\times P_K$ and $\alpha_1\times\dots\times\alpha_K$
can be regarded as $(P_1\times\dots\times P_{K-1})\times P_K$ and
$(\alpha_1\times\dots\times\alpha_{K-1})\times\alpha_K$, respectively.
Thus, it suffices to prove the result in the case of $K=2$.

Now, consider the case of $K=2$.
Let $\alpha_1\in\Omega_1^\infty$ and $\alpha_2\in\Omega_2^\infty$.
Suppose that $\alpha_1\times\alpha_2$ is Martin-L\"of $P_1\times P_2$-random.
On the one hand,
the infinite sequence $\alpha_1$ is obtained from $\alpha_1\times\alpha_2$
by replacing each element $(m,l)$ in $\alpha_1\times\alpha_2$ by $m$.
On the other hand, note that
\[
  \sum_{l\in\Omega_2}(P_1\times P_2)(m,l)=\sum_{l\in\Omega_2}P_1(m)P_2(l)=P_1(m)\sum_{l\in\Omega_2}P_2(l)=P_1(m)
\]
for each $m\in\Omega_1$.
Thus, it follows from Theorems~\ref{contraction2} that
$\alpha_1$ is Martin-L\"of $P_1$-random.

Similarly, on the one hand,
the infinite sequence $\alpha_2$ is obtained from $\alpha_1\times\alpha_2$
by replacing each element $(m,l)$ in $\alpha_1\times\alpha_2$ by $l$.
On the other hand, note that
\[
  \sum_{m\in\Omega_1}(P_1\times P_2)(m,l)=P_2(l)
\]
for every $l\in\Omega_2$.
Thus, it follows from Theorems~\ref{contraction2-2} that
$\alpha_2$ is Martin-L\"of $P_2$-random.
This completes the proof.
\end{proof}

Note that in Theorems~\ref{independency-imlplies-each-randomness}
the underlying finite probability spaces $P_1,\dots,P_K$
are \emph{quite arbitrary}, and
thus we do not impose any computability restrictions on $P_1,\dots,P_K$ at all, in particular.

\section{Postulates of quantum mechanics}
\label{QM}

In this section, we review the central postulates of (the conventional) quantum mechanics.
Specifically, we here refer to the postulates of
the \emph{conventional}
quantum mechanics
in the form presented in Nielsen and Chuang \cite[Chapter 2]{NC00},
with slight modification.
Note that their original postulates were given as postulates of
(the conventional)
quantum mechanics for a finite-dimensional quantum system, i.e.,
a quantum system whose state space is a finite-dimensional complex Hilbert space,
since their book is a textbook of the field of quantum computation and quantum information
where finite-dimensional quantum systems are typical.
In contrast,
we do not impose any restrictions on the dimensionality of the underlying state spaces
in the central postulates of (the conventional) quantum mechanics of the form presented in what follows.

The first postulate of quantum mechanics is about \emph{state space} and \emph{state vector}.

\begin{postulate}[State space and state vector]\label{state_space}
Associated to any isolated physical system is a separable complex Hilbert space
known as the \emph{state space} of the system.
The system is completely described by its \emph{state vector},
which is a unit vector in the system's state space.
\qed
\end{postulate}

The second postulate of quantum mechanics is about the \emph{composition} of systems.

\begin{postulate}[Composition of systems]\label{composition}
The state space of a composite physical system is the tensor product of the state spaces of the component physical systems.
Moreover, if we have systems numbered $1$ through $n$, and system number~$i$ is
prepared
in the state~$\ket{\Psi_i}$,
then the joint state of the total system is
$\ket{\Psi_1}\otimes\ket{\Psi_2}\otimes\dots\otimes\ket{\Psi_n}$.
\qed
\end{postulate}

The third postulate of quantum mechanics is about the \emph{time-evolution} of
\emph{closed} quantum systems.

\begin{postulate}[Unitary time-evolution]\label{evolution}
The evolution of a \emph{closed} quantum system is described by a \emph{unitary transformation}.
That is,
the state~$\ket{\Psi_1}$ of the system at time~$t_1$ is related to the state~$\ket{\Psi_2}$ of the system at time~$t_2$
by a unitary operator~$U$
on the state space,
which depends only on the times~$t_1$ and $t_2$,
in such a way that
$\ket{\Psi_2}=U\ket{\Psi_1}$.
\qed
\end{postulate}

The forth postulate of quantum mechanics is about \emph{measurements} on quantum systems.

\begin{postulate}[Quantum measurement]\label{Born-rule}
A quantum measurement is described by a collection $\{M_m\}_{m\in\Omega}$ of \emph{measurement operators}
which is a finite collection of bounded operators on the state space of the system being measured,
satisfying the \emph{completeness equation}
\begin{equation*}
  \sum_{m\in\Omega} M_m^\dag M_m=I,
\end{equation*}
where $M_m^\dag$ denotes the adjoint of the bounded operator $M_m$, and $I$ denotes the identity operator on the state space.
\begin{enumerate}
\item
The set of possible outcomes of the measurement
equals
the
non-empty
finite set $\Omega$.
If the state of the quantum system is $\ket{\Psi}$ immediately before the measurement then
the probability that result~$m$ occurs is given by
\begin{equation*}
  \bra{\Psi}M_m^\dag M_m\ket{\Psi}.
\end{equation*}
\item
Given that outcome $m$ occurred,
the state of the quantum system immediately after the measurement is
\begin{equation*}
  \frac{M_m\ket{\Psi}}{\sqrt{\bra{\Psi}M_m^\dag M_m\ket{\Psi}}}.
\end{equation*}
\qed
\end{enumerate}
\end{postulate}

Postulate~\ref{Born-rule}~(i) is the so-called \emph{Born rule}, i.e,
\emph{the probability interpretation of the wave function}.
Thus,
the Born rule, Postulate~\ref{Born-rule}~(i),
uses the \emph{notion of probability}.
However, the operational characterization of the notion of probability is not given in the Born rule,
and therefore the relation of its statement to a specific infinite sequence of outcomes of quantum measurements which are being generated by an infinitely repeated measurements is \emph{unclear}.
Tadaki~\cite{T14CCR,T15Kokyuroku,T16CCR,T16QIP,T16QIT35,T17SCIS,T18arXiv} fixed this point.

In this paper
as well as our former works~\cite{T14CCR,T15Kokyuroku,T16CCR,T16QIP,T16QIT35,T17SCIS,T18arXiv},
\emph{we keep Postulates~\ref{state_space}, \ref{composition}, and \ref{evolution}
in their original forms without any modifications}.
The principle of typicality, Postulate~\ref{POT} below,
is proposed as a \emph{refinement} of Postulate~\ref{Born-rule} to replace it,
based on the notion of \emph{Martin-L\"of randomness with respect to a probability measure}.

\section{The principle of typicality}
\label{MWI}

Everett~\cite{E57} introduced
the \emph{many-worlds interpretation of quantum mechanics} (\emph{MWI}, for short)
in 1957.
Actually, his MWI was more than just an interpretation of quantum mechanics.
It aimed to derive
Postulate~\ref{Born-rule}~(i), the Born rule,
from the remaining postulates,
i.e.,
from Postulates~\ref{state_space}, \ref{composition}, and \ref{evolution} above.
In this sense, Everett~\cite{E57} proposed his MWI
as a ``metatheory'' of quantum mechanics.
The point is that in MWI the measurement process is fully treated
as the interaction between a system being measured and an apparatus measuring it, based only on
Postulates~\ref{state_space}, \ref{composition}, and \ref{evolution}.
Then his MWI tried to derive Postulate~\ref{Born-rule}~(i) in such a setting.
However, his attempt does not seem successful.
One of the reasons is thought to be its mathematical vagueness.

Tadaki~\cite{T16CCR}
reformulated the original framework of MWI by Everett~\cite{E57}
\emph{in a form of mathematical rigor} from a modern point of view.
The point of this rigorous treatment of the framework of MWI
is the use of the notion of \emph{probability measure representation} and
its \emph{induction of probability measure}, as presented in Section~\ref{MLRam}.
Tadaki~\cite{T16CCR} then introduced the \emph{principle of typicality}
in this rigorous framework.
The principle of typicality is
an
operational refinement of Postulate~\ref{Born-rule}
based on
the notion of Martin-L\"{o}f randomness with respect to a probability measure
given in Definition~\ref{ML-randomness-wrtm}.
In what follows,
we review this rigorous framework of MWI based the principle of typicality.
See Tadaki~\cite{T18arXiv}
for the details of this
rigorous
framework of MWI
and its validity.
Also, for the details of
the difference between
this
framework of MWI
and that of the original MWI~\cite{E57}, see Tadaki~\cite{T18arXiv}.
The contributions of this paper presented
in Sections~\ref{sec-QM-equality}
and
\ref{sec:GHZ-QM}
demonstrate further
the validity of this rigorous framework of MWI
based the principle of typicality.

Now,
according to Tadaki~\cite{T18arXiv},
let us introduce the
setting
of MWI \emph{in terms of our terminology
in a form of mathematical rigor}.
Let $\mathcal{S}$ be an arbitrary quantum system with state space $\mathcal{H}$
of an arbitrary dimension.
Consider a measurement over $\mathcal{S}$ described
by arbitrary measurement operators $\{M_m\}_{m\in\Omega}$ satisfying
the completeness equation,
\begin{equation}\label{completeness-equation}
  \sum_{m\in\Omega}M_m^{\dagger} M_m=I.
\end{equation}
Here, $\Omega$
is the set of all possible outcomes of the measurement.%
\footnote{The set $\Omega$ is non-empty and finite, and therefore it is an alphabet.}
Let $\mathcal{A}$ be an apparatus performing the measurement described by $\{M_m\}_{m\in\Omega}$,
which is a quantum system with state space $\ssoa$.
According to Postulates~\ref{state_space}, \ref{composition}, and \ref{evolution},
the measurement process of
the measurement described by the measurement operators $\{M_m\}_{m\in\Omega}$
is described by a unitary operator $U$ such that
\begin{equation}\label{single_measurement}
  U(\ket{\Psi}\otimes\ket{\Phi_{\mathrm{init}}})=\sum_{m\in\Omega}(M_m\ket{\Psi})\otimes\ket{\Phi[m]}
\end{equation}
for every $\ket{\Psi}\in\mathcal{H}$.
This is due to von Neumann~\cite[Section~VI.3]{vN55}
in the case where the measurement operators $\{M_m\}_{m\in\Omega}$ form
a projection-valued measure
such that $M_m$ is of rank $1$, i.e., $\dim M_m(\mathcal{H})=1$, for every $m\in\Omega$.
For the derivation of \eqref{single_measurement} in the general case,
see Tadaki~\cite[Section~7.1]{T18arXiv}.
Actually,
$U$ describes the interaction between the system $\mathcal{S}$ and the apparatus $\mathcal{A}$.
The vector $\ket{\Phi_{\mathrm{init}}}\in\ssoa$ is
the initial state of the apparatus $\mathcal{A}$, and $\ket{\Phi[m]}\in\ssoa$ is
a final state of the apparatus $\mathcal{A}$ for each $m\in\Omega$,
with $\braket{\Phi[m]}{\Phi[m']}=\delta_{m,m'}$.
For every $m\in\Omega$, the state $\ket{\Phi[m]}$ indicates that
\emph{the apparatus $\mathcal{A}$ records the value $m$
as the measurement outcome}.
By the unitary interaction~\eqref{single_measurement} as a measurement process,
a correlation (i.e., entanglement) is generated between
the system and the apparatus.

In the framework of MWI,
we consider countably infinite copies of the system $\mathcal{S}$,
and consider a countably infinite repetition of the measurements
described by the identical measurement operators $\{M_m\}_{m\in\Omega}$
performed over each of such copies in a sequential order,
where each of the measurements is described
by the unitary time-evolution~\eqref{single_measurement}.
As repetitions of the measurement progressed,
correlations between the systems and the apparatuses are being generated in sequence
in the superposition of the total system consisting of the systems and the apparatuses.
The detail
is described as follows.

For each $n\in\N^+$, let $\mathcal{S}_n$ be the $n$th copy of the system $\mathcal{S}$ and
$\mathcal{A}_n$ the $n$th copy of the apparatus $\mathcal{A}$.
Each $\mathcal{S}_n$ is prepared in a state $\ket{\Psi_n}$
while all $\mathcal{A}_n$ are prepared in an identical state $\ket{\Phi_{\mathrm{init}}}$.
The measurement
described by the measurement operators $\{M_m\}_{m\in\Omega}$
is performed over each $\mathcal{S}_n$ one by one
in the increasing order of $n$,
by interacting each $\mathcal{S}_n$ with $\mathcal{A}_n$
according to the unitary time-evolution~\eqref{single_measurement}.
For each $n\in\N^+$,
let $\mathcal{H}_n$ be the state space of the total system consisting of
the first $n$ copies
$\mathcal{S}_1, \mathcal{A}_1, \mathcal{S}_2, \mathcal{A}_2,\dots,\mathcal{S}_n, \mathcal{A}_n$
of the system $\mathcal{S}$ and the apparatus $\mathcal{A}$.
These successive interactions between the copies of
the system $\mathcal{S}$ and the apparatus $\mathcal{A}$ as measurement processes
proceed in the following manner:

The
starting
state of the total system,
which consists of $\mathcal{S}_1$ and $\mathcal{A}_1$,
is $\ket{\Psi_1}\otimes\ket{\Phi_{\mathrm{init}}}\in\mathcal{H}_1$.
Immediately after the measurement
described by $\{M_m\}_{m\in\Omega}$
over $\mathcal{S}_1$,
the total system results in the state
\begin{align*}
  \sum_{m_1\in\Omega} (M_{m_1}\ket{\Psi_1})\otimes\ket{\Phi[m_1]}\in\mathcal{H}_1
\end{align*}
by the interaction~\eqref{single_measurement} as a measurement process.
In general,
immediately before
performing
the measurement
described by $\{M_m\}_{m\in\Omega}$
over $\mathcal{S}_n$,
the state of the total system,
which consists of
$\mathcal{S}_1, \mathcal{A}_1, \mathcal{S}_2, \mathcal{A}_2,\dots,\mathcal{S}_n, \mathcal{A}_n$,
is
\begin{align*}
  \sum_{m_1,\dots,m_{n-1}\in\Omega}
  (M_{m_1}\ket{\Psi_1})\otimes\dots\otimes(M_{m_{n-1}}\ket{\Psi_{n-1}})\otimes\ket{\Psi_n}
  \otimes\ket{\Phi[m_1]}\otimes\dots\otimes\ket{\Phi[m_{n-1}]}\otimes\ket{\Phi_{\mathrm{init}}}
\end{align*}
in $\mathcal{H}_n$,
where
$\ket{\Psi_n}$ is the initial state of $\mathcal{S}_n$ and
$\ket{\Phi_{\mathrm{init}}}$ is the initial state of $\mathcal{A}_n$.
Immediately after the measurement
described by $\{M_m\}_{m\in\Omega}$
over $\mathcal{S}_n$,
the total system results in the state
\begin{align}
   &\sum_{m_1,\dots,m_{n}\in\Omega}
  (M_{m_1}\ket{\Psi_1})\otimes\dots\otimes(M_{m_{n}}\ket{\Psi_n})
  \otimes\ket{\Phi[m_1]}\otimes\dots\otimes\ket{\Phi[m_{n}]} \nonumber \\ %
  =&\sum_{m_1,\dots,m_{n}\in\Omega}
  (M_{m_1}\ket{\Psi_1})\otimes\dots\otimes(M_{m_{n}}\ket{\Psi_n})
  \otimes\ket{\Phi[m_1\dots m_{n}]} \label{total_system}
\end{align}
in $\mathcal{H}_n$,
by the interaction~\eqref{single_measurement} as a measurement process between
the system $\mathcal{S}_n$ prepared in the state $\ket{\Psi_n}$
and the apparatus $\mathcal{A}_n$ prepared in the state $\ket{\Phi_{\mathrm{init}}}$.
The vector $\ket{\Phi[m_1\dots m_n]}$ denotes
the vector $\ket{\Phi[m_1]}\otimes\dots\otimes\ket{\Phi[m_n]}$ which represents
the state of $\mathcal{A}_1,\dots,\mathcal{A}_n$.
This state indicates that \emph{the apparatuses $\mathcal{A}_1,\dots,\mathcal{A}_n$ record
the values $m_1\dots m_n$
as the measurement outcomes
over $\mathcal{S}_1,\dots,\mathcal{S}_n$, respectively}.

In the superposition~\eqref{total_system},
on letting $n\to\infty$,
the length of
the records $m_1\dots m_n$ of the values
as the measurement outcomes
in the apparatuses $\mathcal{A}_1,\dots,\mathcal{A}_n$ diverges to infinity.
The
consideration of
this
limiting case results in the
definition of a \emph{world}.
Namely, a \emph{world} is defined as
an
infinite sequence of records of the values
as the measurement outcomes
in the apparatuses.
Thus, in the case described
so far,
a world is an infinite sequence over $\Omega$,
and the finite records $m_1\dots m_n$ in each state $\ket{\Phi[m_1\dots m_n]}$
in the superposition~\eqref{total_system} of the total system is a \emph{prefix} of a world.

For aiming at deriving Postulate~\ref{Born-rule}, the original MWI~\cite{E57} assigned
``weight'' to each of worlds.
In our rigorous framework of MWI,
we introduce a \emph{probability measure} on the set of all worlds in the following manner:
First, we introduce a probability measure representation on the set of prefixes of worlds, i.e.,
the set $\Omega^*$ in this case.
This probability measure representation is given by a function $r\colon\Omega^*\to[0,1]$ with
\begin{equation}\label{rpmwi}
  r(m_1\dotsc m_n)=\prod_{k=1}^n\bra{\Psi_k}M_{m_k}^\dag M_{m_k}\ket{\Psi_k},
\end{equation}
which is the square of the norm of each vector
$(M_{m_1}\ket{\Psi_1})\otimes\dots\otimes(M_{m_{n}}\ket{\Psi_n})\otimes\ket{\Phi[m_1\dots m_{n}]}$
in the superposition~\eqref{total_system}.
Using the completeness equation~\eqref{completeness-equation},
it is easy to check that $r$ is certainly a probability measure representation over $\Omega$.
We call the probability measure representation $r$
\emph{the
probability
measure representation
for the prefixes of
worlds}.
We then adopt
the probability measure \emph{induced} by the probability measure representation $r$
for the prefixes of worlds
as \emph{the probability measure on the set of all worlds}.

We summarize the above consideration and clarify
the definitions of the notion of \emph{world} and
the notion of \emph{the
probability
measure representation for the prefixes of worlds}
as in the following.

\begin{definition}%
[The
probability
measure representation for the prefixes of worlds, Tadaki~\cite{T18arXiv}]\label{pmrpwst}
Consider an arbitrary
quantum system $\mathcal{S}$ and
a
measurement over $\mathcal{S}$
described by arbitrary \emph{measurement operators $\{M_m\}_{m\in\Omega}$},
where the measurement process is described by \eqref{single_measurement} as an interaction
of the system $\mathcal{S}$ with an apparatus $\mathcal{A}$.
We suppose the following situation:
\begin{enumerate}
\item There are countably infinite copies $\mathcal{S}_1, \mathcal{S}_2, \mathcal{S}_3 \dotsc$
  of the system $\mathcal{S}$ and countably infinite copies
  $\mathcal{A}_1, \mathcal{A}_2, \mathcal{A}_3, \dotsc$ of the apparatus $\mathcal{A}$.
\item For each $n\in\N^+$, the system $\mathcal{S}_n$ is prepared in a state $\ket{\Psi_n}$,%
  \footnote{In Definition~\ref{pmrpwst}, all $\ket{\Psi_n}$ are not required to be an identical state.
  In the applications of the principle of typicality in this paper,
  which are presented in Sections~\ref{sec-QM-equality} and \ref{sec:GHZ-QM},
  all $\ket{\Psi_n}$ are chosen to be an identical state, however.}
  while the apparatus $\mathcal{A}_n$ is prepared in a state $\ket{\Phi_{\mathrm{init}}}$,
  and then the measurement described by $\{M_m\}_{m\in\Omega}$ is performed over $\mathcal{S}_n$
  by interacting it with the apparatus $\mathcal{A}_n$ according to
  the unitary time-evolution~\eqref{single_measurement}.
\item Starting the measurement described by $\{M_m\}_{m\in\Omega}$ over $\mathcal{S}_1$,
  the measurement described by $\{M_m\}_{m\in\Omega}$ over each $\mathcal{S}_n$ is
  performed in the increasing order of $n$.
\end{enumerate}
We then note that,
for each $n\in\N^+$,
immediately after the measurement described by $\{M_m\}_{m\in\Omega}$ over $\mathcal{S}_n$,
the state of the total system consisting of
$\mathcal{S}_1, \mathcal{A}_1, \mathcal{S}_2, \mathcal{A}_2,\dots,\mathcal{S}_n, \mathcal{A}_n$ is
$$\ket{\Theta_n}:=\sum_{m_1,\dots,m_n\in\Omega}\ket{\Theta(m_1,\dots,m_n)},$$
where
$$\ket{\Theta(m_1,\dots,m_n)}
:=(M_{m_1}\ket{\Psi_1})\otimes\dots\otimes(M_{m_{n}}\ket{\Psi_n})
\otimes\ket{\Phi[m_1]}\otimes\dots\otimes\ket{\Phi[m_n]}.$$
The vectors $M_{m_1}\ket{\Psi_1},\dots,M_{m_{n}}\ket{\Psi_n}$ are states of
$\mathcal{S}_1,\dots,\mathcal{S}_n$, respectively, and
the vectors
$$\ket{\Phi[m_1]},\dots,\ket{\Phi[m_n]}$$
are
states of $\mathcal{A}_1,\dots,\mathcal{A}_n$, respectively.
The state
vector
$\ket{\Theta_n}$ of the total system is normalized while
each of the vectors $\{\ket{\Theta(m_1,\dots,m_n)}\}_{m_1,\dots,m_n\in\Omega}$
is not necessarily normalized.
Then,
\emph{the
probability
measure representation for the prefixes of
worlds}
is defined
as
a
function $p\colon\Omega^*\to[0,1]$ such that
\begin{equation}\label{p=bTmkT}
  p(m_1\dotsc m_n)=\braket{\Theta(m_1,\dots,m_n)}{\Theta(m_1,\dots,m_n)}
\end{equation}
for every $n\in\N^+$ and every $m_1,\dots,m_n\in\Omega$.
Moreover,
an infinite sequence over $\Omega$,
i.e., an infinite sequence of
possible
outcomes of the measurement described by $\{M_m\}_{m\in\Omega}$,
is called a \emph{world}.
\qed
\end{definition}

In Definition~\ref{pmrpwst}, it is easy to check that the function $p$ defined by \eqref{p=bTmkT} is
certainly a probability measure representation over $\Omega$.

As mentioned above, the original MWI by Everett~\cite{E57} aimed to derive
Postulate~\ref{Born-rule}~(i), the Born rule,
from the remaining postulates.
However, it would seem impossible to do this for several reasons
(see Tadaki~\cite{T18arXiv} for
these reasons).
Instead, it is appropriate to introduce an additional postulate
in our rigorous framework of MWI developed above,
in order to overcome the defect of the original MWI
and to make quantum mechanics \emph{operationally perfect}.
Thus, we put forward Postulate~\ref{POT} below, the \emph{principle of typicality}.

\begin{postulate}[The principle of typicality, Tadaki~\cite{T16CCR}]\label{POT}
Our world is \emph{typical}.
Namely, our world is Martin-L\"of random with respect to the probability measure on the set of all worlds,
induced by
the
probability
measure representation
for the prefixes of
worlds,
in the superposition of the total system
which consists of
systems being measured and apparatuses measuring them.
\qed
\end{postulate}

For the comprehensive arguments of the validity of
Postulate~\ref{POT}, the principle of typicality,
see Tadaki~\cite{T18arXiv}.
For example, based on the results of
Tadaki~\cite{T14,T15,T16arXiv},
we can see that Postulate~\ref{POT} is certainly a refinement of
Postulate~\ref{Born-rule}
from the point of view of our intuitive understanding of the notion of probability.

\section{Refinement of the argument of quantum mechanics to violate Bell's inequality,
based on the principle of typicality}
\label{sec-QM-equality}

In this section,
based on the \emph{principle of typicality},
we \emph{refine} and \emph{reformulate}
that argument
of quantum mechanics
to violate Bell's inequality which is 
given in Section~2.6 of Nielsen and Chuang~\cite{NC00}.
Thus,
in what follows,
according to Nielsen and Chuang~\cite[Section~2.6]{NC00},
we investigate
Protocol~\ref{Bell} below
due to Bell~\cite{Bell64}, Clauser, et al.~\cite{CHSH69}, and Nielsen and Chuang~\cite{NC00}
in \emph{our framework of quantum mechanics based on the principle of typicality}, i.e.,
in our rigorous framework of MWI based on the principle of typicality,
developed in the preceding section.

First, we fix some notation.
Let $\ket{0}$ and $\ket{1}$ be an orthonormal basis of the state space of a single qubit system.
Based on them we define a state $\ket{+}$ of a system of a single qubit by
\begin{equation}\label{eq:ket-plus}
  \ket{+}:=\frac{\ket{0}+\ket{1}}{\sqrt{2}},
\end{equation}
and define
the \emph{Bell state}
$\ket{\beta_{11}}$ of a system of two qubits by
\begin{equation}\label{Bell-fps-calculation:eq01}
  \ket{\beta_{11}}:=\frac{\ket{0}\otimes\ket{1}-\ket{1}\otimes\ket{0}}{\sqrt{2}}.
\end{equation}
Pauli matrices $X,Y,Z$ are defined by
\begin{equation}\label{Bell-fps-calculation:eq02}
X:=\ket{1}\bra{0}+\ket{0}\bra{1},\quad
Y:=i\ket{1}\bra{0}-i\ket{0}\bra{1},\quad
Z:=\ket{0}\bra{0}-\ket{1}\bra{1}.
\end{equation}
We deal with four observables $R,Q,S,T$ of
a system of a single qubit
defined
by the following way:%
\footnote{We intentionally put $R$ before $Q$ at variance with Alphabetical order,
but we use the same notation
exactly
as in Section~2.6 of Nielsen and Chuang~\cite{NC00}.
Originally, the observables $Q$ and $R$ are defined via (2.227) and (2.228) of Nielsen and Chuang~\cite{NC00} in their book.
However, our result~\eqref{Pcdmn=1o161+-1cdmns2-Bell} below reveals that
it is natural to define these $Q$ and $R$ in reverse,
unlike the definitions~(2.227) and (2.228) of their book.}
\begin{equation}\label{Bell-fps-calculation:eq03}
R:=X,\quad
Q:=Z,\quad
S:=-\frac{1}{\sqrt{2}}X-\frac{1}{\sqrt{2}}Z,\quad
T:=-\frac{1}{\sqrt{2}}X+\frac{1}{\sqrt{2}}Z.
\end{equation}
For each $n\ge 2$,
we denote the identity operator on the state space of a system of $n$ qubits by $I_{2^n}$.
Thus, for example, $I_2=\product{0}{0}+\product{1}{1}$ and $I_{16}=I_2\otimes I_2\otimes I_2\otimes I_2$.

\begin{protocol}\label{Bell}
The protocol involves three parties, Charlie, Alice, and Bob.
They together repeat the following procedure \emph{forever}.
{
\renewcommand{\labelenumi}{Step \arabic{enumi}:}
\setlength{\leftmargini}{40pt} 
\begin{enumerate}
\item Charlie prepares a quantum system of two qubits in the state
$\ket{\beta_{11}}$.
\item
Charlie
passes the first qubit to Alice, and the second qubit to Bob.
\end{enumerate}
}

Then Alice and Bob do the following respectively.
On the one hand, Alice does the following:
{
\renewcommand{\labelenumi}{Step A\arabic{enumi}:}
\setlength{\leftmargini}{48pt}
\begin{enumerate}
\setcounter{enumi}{2}
\item Alice tosses a fair coin $C$ to get outcome $c\in\{0,1\}$.
\item Alice performs the measurement of either $R$ or $Q$
  over the first qubit to obtain outcome $m\in\{+1,-1\}$,
  depending on $c=0$ or $1$.
\end{enumerate}
}

On the other hand, Bob does the following:
{
\renewcommand{\labelenumi}{Step B\arabic{enumi}:}
\setlength{\leftmargini}{48pt}
\begin{enumerate}
\setcounter{enumi}{2}
\item Bob tosses a fair coin $D$ to get outcome $d\in\{0,1\}$.
\item Bob performs the measurement of either $S$ or $T$
  over the second qubit to obtain outcome $n\in\{+1,-1\}$,
  depending on $d=0$ or $1$.
\end{enumerate}
\renewcommand{\labelenumi}{(\roman{enumi})}
}

While repeating the above procedure forever,
Alice and Bob together calculate
the conditional averages~$\langle RS\rangle$, $\langle QS\rangle$, $\langle RT\rangle$,
and
$\langle QT\rangle$,
which are defined by the equations~\eqref{def-conditional-averages-QM} or
the equations~\eqref{def-conditional-averages-ALR} below
and whose meaning is explained below.
\qed
\end{protocol}

In what follows,
we analyze Protocol~\ref{Bell}
in our rigorous framework of quantum mechanics based on
Postulate~\ref{POT}, the principle of typicality,
together with Postulates~\ref{state_space}, \ref{composition}, and \ref{evolution}.
To complete this,
\emph{we have to implement everything in
Steps~A3 and A4 and Steps~B3 and B4
of Protocol~\ref{Bell}
by unitary time-evolution}.

\subsection{Unitary implementation of all steps done by Alice and Bob}

We denote the system of the first qubit which Charlie sends to Alice in Step~2
by $\mathcal{Q}_{\mathrm{A}}$
with
state space $\mathcal{H}_{\mathrm{A}}$,
and denote the system of the second qubit which Charlie sends to Bob in Step~2
by $\mathcal{Q}_{\mathrm{B}}$
with
state space $\mathcal{H}_{\mathrm{B}}$.
We analyze
Steps~A3 and A4 and Steps~B3 and B4
of Protocol~\ref{Bell}
in our framework of quantum mechanics based on the principle of typicality.
In particular,
we realize each of
the coin tossings
in Steps~A3 and B3
by a measurement of the observable $\ket{1}\bra{1}$ over
a system of a single qubit in the state $\ket{+}$.
We
will
then
describe all the measurement processes during
Steps~A3 and A4 and Steps~B3 and B4
as a single unitary interaction between systems and apparatuses.

On the one hand,
each of Steps~A3 and A4 done by Alice is implemented by a unitary time-evolution
in the following manner:

\paragraph{Unitary implementation of Step~A3 done by Alice.}

To realize the coin tossing
by Alice
in Step~A3 of Protocol~\ref{Bell}
we make use of a measurement over a
single
qubit system.
Namely, to implement
Step~A3
we
introduce
a
single
qubit system $\mathcal{Q}_\mathrm{A3}$ with state space $\mathcal{H}_\mathrm{A3}$,
and perform a measurement over the system $\mathcal{Q}_\mathrm{A3}$ described
by a unitary time-evolution:
$$U_\mathrm{A3}(\ket{c}\otimes\ket{\Phi_\mathrm{A3}^{\mathrm{init}}})
=\ket{c}\otimes\ket{\Phi_\mathrm{A3}[c]}$$
for every $c\in\{0,1\}$,
where $\ket{c}\in\mathcal{H}_\mathrm{A3}$.
The vector $\ket{\Phi_\mathrm{A3}^{\mathrm{init}}}$ is the initial state of an apparatus
$\mathcal{A}_\mathrm{A3}$
measuring $\mathcal{Q}_\mathrm{A3}$,
and $\ket{\Phi_\mathrm{A3}[c]}$ is a final state of
the apparatus
$\mathcal{A}_\mathrm{A3}$
for each $c\in\{0,1\}$.%
\footnote{We assume, of course, the orthogonality of the final states $\ket{\Phi_\mathrm{A3}[c]}$, i.e.,
the property that
$\braket{\Phi_\mathrm{A3}[c]}{\Phi_\mathrm{A3}[c']}=\delta_{c,c'}$.
Furthermore,
we assume the orthogonality of the finial states for
each of all apparatuses which appear in the rest of this section.}
Prior to the measurement,
the system $\mathcal{Q}_\mathrm{A3}$ is prepared in the state $\ket{+}\in\mathcal{H}_\mathrm{A3}$.

\paragraph{Unitary implementation of Step~A4 done by Alice.}

Let $\{E^{\mathrm{A}}_{0,m}\}_{m=\pm 1}$ and $\{E^{\mathrm{A}}_{1,m}\}_{m=\pm 1}$ be
the PVMs in $\mathcal{H}_{\mathrm{A}}$ corresponding to the observables $R$ and $Q$, respectively.
Namely, let
\begin{equation}\label{Bell-fps-calculation:eq-A01}
  R=E_{0,+1}^{\mathrm{A}}-E_{0,-1}^{\mathrm{A}}\quad\text{ and }\quad
  Q=E_{1,+1}^{\mathrm{A}}-E_{1,-1}^{\mathrm{A}}
\end{equation}
be the spectral decompositions of $R$ and $Q$, respectively,
where $\{E^{\mathrm{A}}_{0,m}\}_{m=\pm 1}$ and $\{E^{\mathrm{A}}_{1,m}\}_{m=\pm 1}$ are
collections of projectors
such that
\begin{equation*}
  E_{0,+1}^{\mathrm{A}}E_{0,-1}^{\mathrm{A}}=0\quad\text{ and }\quad
  E_{1,+1}^{\mathrm{A}}E_{1,-1}^{\mathrm{A}}=0,
\end{equation*}
and
\begin{equation}\label{Bell-fps-calculation:eq-A02}
  E_{0,+1}^{\mathrm{A}}+E_{0,-1}^{\mathrm{A}}=I_2\quad\text{ and }\quad
  E_{1,+1}^{\mathrm{A}}+E_{1,-1}^{\mathrm{A}}=I_2.
\end{equation}
The switching of
the measurement of the observable~$R$ or $Q$ in Step~A4,
depending on the outcome $c$ in Step~A3,
is realized by
a unitary time-evolution:
\begin{equation}\label{U4ToP3c=VcToP3c-Alice}
  U_\mathrm{A4}(\ket{\Theta}\otimes\ket{\Phi_\mathrm{A3}[c]})
  =(V_c^\mathrm{A}\ket{\Theta})\otimes\ket{\Phi_\mathrm{A3}[c]}
\end{equation}
for every $c\in\{0,1\}$
and every state $\ket{\Theta}$ of the composite system
consisting of the system $\mathcal{Q}_\mathrm{A}$ and the apparatus $\mathcal{A}_\mathrm{A4}$ explained below.
For each $c\in\{0,1\}$,
the operator $V_c^\mathrm{A}$ appearing in \eqref{U4ToP3c=VcToP3c-Alice} describes
a unitary time-evolution of the composite system consisting of
the system $\mathcal{Q}_\mathrm{A}$ and
an apparatus $\mathcal{A}_\mathrm{A4}$ measuring $\mathcal{Q}_\mathrm{A}$, and is
defined by the equation:
$$V_c^{\mathrm{A}}(\ket{\psi}\otimes\ket{\Phi_\mathrm{A4}^{\mathrm{init}}})
=\sum_{m=\pm 1}(E_{c,m}^{\mathrm{A}}\ket{\psi})\otimes\ket{\Phi_\mathrm{A4}[m]}$$
for every $\ket{\psi}\in\mathcal{H}_{\mathrm{A}}$.
The vector $\ket{\Phi_\mathrm{A4}^{\mathrm{init}}}$ is
the initial state of the apparatus $\mathcal{A}_\mathrm{A4}$,
and $\ket{\Phi_\mathrm{A4}[m]}$ is a final state of the apparatus $\mathcal{A}_\mathrm{A4}$
for each $m\in\{+1,-1\}$.
Thus,
the operator $V_c^\mathrm{A}$
describes the alternate measurement process of the qubit $\mathcal{Q}_\mathrm{A}$ sent from Charlie,
depending on the outcome $c$, on Alice's side.
Note that the unitarity of $U_\mathrm{A4}$ is confirmed by
Theorem~\ref{unitarity} below.
Actually, by setting
$N:=3$, $U_1:=V_0^\mathrm{A}$, $U_2:=V_1^\mathrm{A}$,
$P_1:=\product{\Phi_\mathrm{A3}[0]}{\Phi_\mathrm{A3}[0]}$,
$P_2:=\product{\Phi_\mathrm{A3}[1]}{\Phi_\mathrm{A3}[1]}$, and
$P_3:=I_\mathrm{A3}-P_1-P_2$ 
in Theorem~\ref{unitarity}, we have \eqref{U4ToP3c=VcToP3c-Alice},
where $I_\mathrm{A3}$ denotes the identity operator on the state space of
the apparatus $\mathcal{A}_\mathrm{A3}$.

\begin{theorem}\label{unitarity}
Let $\mathcal{H}$ and $\mathcal{K}$ be
two
arbitrary complex Hilbert spaces.
Let $U_1,\dots,U_N$ be $N$ arbitrary unitary operators on $\mathcal{H}$,
and let $\{P_n\}_{n=1}^N$ be an arbitrary PVM in $\mathcal{K}$.
Then
$$U:=U_1\otimes P_1 + \dots + U_N\otimes P_N$$
is a unitary operator on $\mathcal{H}\otimes \mathcal{K}$.
Moreover, for every $n=1,\dots,N$ and every $\ket{\Phi}\in\mathcal{K}$
if $P_n\ket{\Phi}=\ket{\Phi}$ then for every $\ket{\Theta}\in\mathcal{H}$ it holds that
$U(\ket{\Theta}\otimes\ket{\Phi})=(U_n\ket{\Theta})\otimes\ket{\Phi}$.
\end{theorem}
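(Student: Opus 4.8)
The plan is to verify directly that $U^{\dag}U = UU^{\dag} = I$ on $\mathcal{H}\otimes\mathcal{K}$, exploiting the defining relations of the PVM, $P_nP_m=\delta_{n,m}P_n$ and $\sum_{n=1}^N P_n = I$, together with the unitarity $U_n^{\dag}U_n = U_nU_n^{\dag} = I$ of each $U_n$. The final clause will then fall out of the orthogonality relations alone.

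First I would record that $U$ is a well-defined bounded operator on $\mathcal{H}\otimes\mathcal{K}$: each summand $U_n\otimes P_n$ has operator norm at most $\|U_n\|\,\|P_n\| = 1$, so $U$ is a finite sum of bounded operators and hence bounded, and it is therefore determined by its action on the algebraic tensor product, which is dense in $\mathcal{H}\otimes\mathcal{K}$. Since each $P_n$ is Hermitian, $U^{\dag} = \sum_{n=1}^N U_n^{\dag}\otimes P_n$. Then I would compute, checking the identity on simple tensors and extending by continuity,
\[
  U^{\dag}U = \sum_{n=1}^N\sum_{m=1}^N (U_n^{\dag}U_m)\otimes(P_nP_m)
  = \sum_{n=1}^N (U_n^{\dag}U_n)\otimes P_n
  = \sum_{n=1}^N I\otimes P_n = I\otimes\!\sum_{n=1}^N P_n = I,
\]
where the first equality uses $P_nP_m=\delta_{n,m}P_n$, the second uses $U_n^{\dag}U_n = I$, and the last uses $\sum_{n=1}^N P_n = I$. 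The computation $UU^{\dag} = I$ is entirely symmetric, using $U_nU_n^{\dag} = I$ in place of $U_n^{\dag}U_n = I$. Hence $U$ is unitary.

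For the ``moreover'' clause, suppose $P_n\ket{\Phi} = \ket{\Phi}$ for some fixed $n$. Then for every $m\neq n$ we get $P_m\ket{\Phi} = P_mP_n\ket{\Phi} = \delta_{m,n}P_n\ket{\Phi} = 0$, so only the $n$th term survives in
\[
  U(\ket{\Theta}\otimes\ket{\Phi}) = \sum_{m=1}^N (U_m\ket{\Theta})\otimes(P_m\ket{\Phi}) = (U_n\ket{\Theta})\otimes\ket{\Phi},
\]
which is the asserted formula.

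I do not anticipate a genuine obstacle here; the computation is routine algebra once boundedness is in place. The only point that deserves a little care is the functional-analytic bookkeeping when $\mathcal{H}$ or $\mathcal{K}$ is infinite-dimensional, namely confirming that the expression for $U$ on simple tensors extends to a bounded operator on the \emph{completed} tensor product and that $(U_n\otimes P_n)^{\dag}=U_n^{\dag}\otimes P_n^{\dag}$ together with additivity of the adjoint under finite sums behaves as expected; after that, all the operator identities reduce to the display computations above, evaluated on simple tensors and extended by density and continuity.
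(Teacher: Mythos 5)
Your proof is correct and follows essentially the same route as the paper's: expand $U^{\dag}U$ using $P_nP_m=\delta_{n,m}P_n$, $U_n^{\dag}U_n=I$, and $\sum_n P_n=I$, with $UU^{\dag}=I$ by symmetry. You additionally spell out the ``moreover'' clause (via $P_m\ket{\Phi}=P_mP_n\ket{\Phi}=0$ for $m\neq n$) and the boundedness/density bookkeeping, both of which the paper leaves implicit; these are welcome but do not change the argument.
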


\begin{proof}
First, we see that
\begin{align*}
U^\dag U
&=\left(\sum_{n=1}^N U_n^\dag\otimes P_n\right)
\left(\sum_{l=1}^N U_l\otimes P_l\right)
=\sum_{n=1}^N \sum_{l=1}^N (U_n^\dag U_l)\otimes (P_n P_l) \\
&=\sum_{n=1}^N (U_n^\dag U_n)\otimes P_n
=I_{\mathcal{H}}\otimes\left(\sum_{n=1}^N P_n\right)=I_{\mathcal{H}}\otimes I_{\mathcal{K}} \\
&=I_{\mathcal{H}\otimes\mathcal{K}},
\end{align*}
where $I_{\mathcal{H}}$, $I_{\mathcal{K}}$, and $I_{\mathcal{H}\otimes\mathcal{K}}$ denote the identity operators on
$\mathcal{H}$, $\mathcal{K}$, and
$\mathcal{H}\otimes\mathcal{K}$, respectively.
Similarly, we can show that $UU^\dag =I$.
This completes the proof.
\end{proof}

\medskip

The sequential applications of $U_\mathrm{A3}$ and $U_\mathrm{A4}$ to the composite system
consisting of the two qubit system $\mathcal{Q}_\mathrm{A3}$ and $\mathcal{Q}_\mathrm{A}$ and
the apparatuses $\mathcal{A}_\mathrm{A3}$ and $\mathcal{A}_\mathrm{A4}$
result in the following single unitary time-evolution $U_{\mathrm{A}}$:
\begin{equation}\label{unitary-Alice-Bell}
U_{\mathrm{A}}(\ket{\Psi}\otimes\ket{\Phi_\mathrm{A3}^{\mathrm{init}}}\otimes\ket{\Phi_\mathrm{A4}^{\mathrm{init}}})
=\sum_{c=0,1}\sum_{m=\pm 1}((E_{c}\otimes E^{\mathrm{A}}_{c,m})
\ket{\Psi})\otimes\ket{\Phi_\mathrm{A3}[c]}\otimes\ket{\Phi_\mathrm{A4}[m]}
\end{equation}
for every $\ket{\Psi}\in\mathcal{H}_{\mathrm{A3}}\otimes\mathcal{H}_{\mathrm{A}}$,
where
\begin{equation}\label{eq:Ec=kcbc}
E_c:=\product{c}{c}.
\end{equation}

On the other hand,
each of Steps~B3 and B4 done by Bob is implemented by a unitary time-evolution
in a similar manner as follows:

\paragraph{Unitary implementation of Step~B3 done by Bob.}

To realize the coin tossing
by Bob
in Step~B3 of Protocol~\ref{Bell}
we make use of a measurement over a
single
qubit system.
Namely, to implement the Step~B3
we
introduce
a
single
qubit system $\mathcal{Q}_\mathrm{B3}$ with state space $\mathcal{H}_\mathrm{B3}$,
and perform a measurement over the system $\mathcal{Q}_\mathrm{B3}$ described
by a unitary time-evolution:
$$U_\mathrm{B3}(\ket{d}\otimes\ket{\Phi_\mathrm{B3}^{\mathrm{init}}})
=\ket{d}\otimes\ket{\Phi_\mathrm{B3}[d]}$$
for every $d\in\{0,1\}$,
where $\ket{d}\in\mathcal{H}_\mathrm{B3}$.
The vector $\ket{\Phi_\mathrm{B3}^{\mathrm{init}}}$ is the initial state of an apparatus
$\mathcal{A}_\mathrm{B3}$
measuring $\mathcal{Q}_\mathrm{B3}$,
and $\ket{\Phi_\mathrm{B3}[d]}$ is a final state of
the apparatus
$\mathcal{A}_\mathrm{B3}$
for each $d\in\{0,1\}$.
Prior to the measurement,
the system $\mathcal{Q}_\mathrm{B3}$ is prepared in the state $\ket{+}\in\mathcal{H}_\mathrm{B3}$.

\paragraph{Unitary implementation of Step~B4 done by Bob.}

Let $\{E^{\mathrm{B}}_{0,n}\}_{n=\pm 1}$ and $\{E^{\mathrm{B}}_{1,n}\}_{n=\pm 1}$ be
the PVMs in $\mathcal{H}_{\mathrm{B}}$ corresponding to the observables $S$ and $T$, respectively.
Namely, let
\begin{equation}\label{Bell-fps-calculation:eq-B01}
  S=E_{0,+1}^{\mathrm{B}}-E_{0,-1}^{\mathrm{B}}\quad\text{ and }\quad
  T=E_{1,+1}^{\mathrm{B}}-E_{1,-1}^{\mathrm{B}}
\end{equation}
be the spectral decompositions of $S$ and $T$, respectively,
where $\{E^{\mathrm{B}}_{0,n}\}_{n=\pm 1}$ and $\{E^{\mathrm{B}}_{1,n}\}_{n=\pm 1}$ are
collections of projectors
such that
\begin{equation*}
  E_{0,+1}^{\mathrm{B}}E_{0,-1}^{\mathrm{B}}=0\quad\text{ and }\quad
  E_{1,+1}^{\mathrm{B}}E_{1,-1}^{\mathrm{B}}=0,
\end{equation*}
and
\begin{equation}\label{Bell-fps-calculation:eq-B02}
  E_{0,+1}^{\mathrm{B}}+E_{0,-1}^{\mathrm{B}}=I_2\quad\text{ and }\quad
  E_{1,+1}^{\mathrm{B}}+E_{1,-1}^{\mathrm{B}}=I_2.
\end{equation}
The switching of
the measurement of the observable~$S$ or $T$ in Step~B4,
depending on the outcome $d$ in Step~B3,
is realized by
a unitary time-evolution:
\begin{equation}\label{U4ToP3c=VcToP3c-Bob}
  U_\mathrm{B4}(\ket{\Theta}\otimes\ket{\Phi_\mathrm{B3}[d]})
  =(V_d^\mathrm{B}\ket{\Theta})\otimes\ket{\Phi_\mathrm{B3}[d]}
\end{equation}
for every $d\in\{0,1\}$
and every state $\ket{\Theta}$ of the composite system
consisting of the system $\mathcal{Q}_\mathrm{B}$ and the apparatus $\mathcal{A}_\mathrm{B4}$ explained below.
For each $d\in\{0,1\}$,
the operator $V_d^\mathrm{B}$ appearing in \eqref{U4ToP3c=VcToP3c-Bob} describes
a unitary time-evolution of the composite system consisting of
the system $\mathcal{Q}_\mathrm{B}$ and
an apparatus $\mathcal{A}_\mathrm{B4}$ measuring $\mathcal{Q}_\mathrm{B}$, and is
defined by the equation:
$$V_d^{\mathrm{B}}(\ket{\psi}\otimes\ket{\Phi_\mathrm{B4}^{\mathrm{init}}})
=\sum_{n=\pm 1}(E_{d,n}^{\mathrm{B}}\ket{\psi})\otimes\ket{\Phi_\mathrm{B4}[n]}$$
for every $\ket{\psi}\in\mathcal{H}_{\mathrm{B}}$.
The vector $\ket{\Phi_\mathrm{B4}^{\mathrm{init}}}$ is
the initial state of the apparatus $\mathcal{A}_\mathrm{B4}$,
and $\ket{\Phi_\mathrm{B4}[n]}$ is a final state of the apparatus $\mathcal{A}_\mathrm{B4}$
for each $n\in\{+1,-1\}$.
Thus,
the operator $V_c^\mathrm{B}$
describes the alternate measurement process of the qubit $\mathcal{Q}_\mathrm{B}$ sent from Charlie,
depending on the outcome $d$, on Bob's side.
The unitarity of $U_\mathrm{B4}$ is confirmed by Theorem~\ref{unitarity}.

\bigskip

\smallskip

The sequential applications of $U_\mathrm{B3}$ and $U_\mathrm{B4}$ to the composite system
consisting of the two qubit system $\mathcal{Q}_\mathrm{B3}$ and $\mathcal{Q}_\mathrm{B}$ and
the apparatuses $\mathcal{A}_\mathrm{B3}$ and $\mathcal{A}_\mathrm{B4}$
result in the following single unitary time-evolution $U_{\mathrm{B}}$:
\begin{equation}\label{unitary-Bob-Bell}
U_{\mathrm{B}}(\ket{\Psi}\otimes\ket{\Phi_\mathrm{B3}^{\mathrm{init}}}\otimes\ket{\Phi_\mathrm{B4}^{\mathrm{init}}})
=\sum_{d=0,1}\sum_{n=\pm 1}((E_{d}\otimes E^{\mathrm{B}}_{d,n})
\ket{\Psi})\otimes\ket{\Phi_\mathrm{B3}[d]}\otimes\ket{\Phi_\mathrm{B4}[n]}
\end{equation}
for every $\ket{\Psi}\in\mathcal{H}_{\mathrm{B3}}\otimes\mathcal{H}_{\mathrm{B}}$,
where $E_d=\product{d}{d}$ as before.

Now, let us consider
a single
unitary time-evolution $U_{\mathrm{AB}}$
which describes
all the measurement processes over the composite system consisting of
$\mathcal{Q}_\mathrm{A3}$, $\mathcal{Q}_\mathrm{A}$, $\mathcal{Q}_\mathrm{B3}$,
and $\mathcal{Q}_\mathrm{B}$.
According to Postulates~\ref{composition} and \ref{evolution},
we have that
\begin{equation}\label{eq:UABTAotTB=UATAotUBTB}
  U_{\mathrm{AB}}(\ket{\Theta_{\mathrm{A}}}\otimes\ket{\Theta_{\mathrm{B}}})
  =(U_{\mathrm{A}}\ket{\Theta_{\mathrm{A}}})\otimes(U_{\mathrm{B}}\ket{\Theta_{\mathrm{B}}})
\end{equation}
for every state $\ket{\Theta_{\mathrm{A}}}$
of the composite system
consisting of the systems $\mathcal{Q}_\mathrm{A3}$ and  $\mathcal{Q}_\mathrm{A}$
and the apparatus $\mathcal{A}_\mathrm{A3}$ and $\mathcal{A}_\mathrm{A4}$
and every state $\ket{\Theta_{\mathrm{B}}}$
of the composite system
consisting of the systems $\mathcal{Q}_\mathrm{B3}$ and  $\mathcal{Q}_\mathrm{B}$
and the apparatus $\mathcal{A}_\mathrm{B3}$ and $\mathcal{A}_\mathrm{B4}$.
We use $\Omega$ to denote the alphabet $$\{0,1\}^2\times\{+1,-1\}^2.$$ Then,
for each $\ket{\Psi_{\mathrm{A}}}\in\mathcal{H}_{\mathrm{A3}}\otimes\mathcal{H}_{\mathrm{A}}$
and $\ket{\Psi_{\mathrm{B}}}\in\mathcal{H}_{\mathrm{B3}}\otimes\mathcal{H}_{\mathrm{B}}$,
using \eqref{unitary-Alice-Bell}, \eqref{unitary-Bob-Bell}, and \eqref{eq:UABTAotTB=UATAotUBTB} we see that
\begin{align*}
&U_{\mathrm{AB}}\left(\ket{\Psi_{\mathrm{A}}}\otimes\ket{\Psi_{\mathrm{B}}}\otimes\ket{\Phi^{\mathrm{init}}}\right) \\
&=\left(U_{\mathrm{A}}\left(\ket{\Psi_{\mathrm{A}}}\otimes
\ket{\Phi_{\mathrm{A3}}^{\mathrm{init}}}\otimes\ket{\Phi_{\mathrm{A4}}^{\mathrm{init}}}\right)\right)\otimes
\left(U_{\mathrm{B}}\left(\ket{\Psi_{\mathrm{B}}}\otimes
\ket{\Phi_{\mathrm{B3}}^{\mathrm{init}}}\otimes\ket{\Phi_{\mathrm{B4}}^{\mathrm{init}}}\right)\right) \\
&=
\sum_{(c,d,m,n)\in\Omega}
\left(\left(E_{c}\otimes E_{d}\otimes E^{\mathrm{A}}_{c,m}\otimes E^{\mathrm{B}}_{d,n}\right)
\left(\ket{\Psi_{\mathrm{A}}}\otimes\ket{\Psi_{\mathrm{B}}}\right)\right)\otimes\ket{\Phi[c,d,m,n]},
\end{align*}
where
$\ket{\Phi^{\mathrm{init}}}$ denotes
$\ket{\Phi_{\mathrm{A3}}^{\mathrm{init}}}\otimes\ket{\Phi_{\mathrm{B3}}^{\mathrm{init}}}\otimes
\ket{\Phi_{\mathrm{A4}}^{\mathrm{init}}}\otimes\ket{\Phi_{\mathrm{B4}}^{\mathrm{init}}}$,
and
$\ket{\Phi[c,d,m,n]}$ denotes
$\ket{\Phi_{\mathrm{A3}}[c]}\otimes\ket{\Phi_{\mathrm{B3}}[d]}\otimes\ket{\Phi_{\mathrm{A4}}[m]}\otimes\ket{\Phi_{\mathrm{B4}}[n]}$
for each $(c,d,m,n)\in\Omega$.
It follows from the linearity of $U_{\mathrm{AB}}$ that
\[
  U_{\mathrm{AB}}\left(\ket{\Psi}\otimes\ket{\Phi^{\mathrm{init}}}\right)
  =\sum_{(c,d,m,n)\in\Omega}
  \left(\left(E_{c}\otimes E_{d}\otimes E^{\mathrm{A}}_{c,m}\otimes E^{\mathrm{B}}_{d,n}\right)
  \ket{\Psi}\right)\otimes\ket{\Phi[c,d,m,n]}
\]
for every
$\ket{\Psi}\in
\mathcal{H}_{\mathrm{A3}}\otimes\mathcal{H}_{\mathrm{B3}}\otimes
\mathcal{H}_{\mathrm{A}}\otimes\mathcal{H}_{\mathrm{B}}$.
This $U_{\mathrm{AB}}$ describes the unitary time-evolution of Alice and Bob
in the repeated once of the infinite repetition of
that procedure in Protocol~\ref{Bell}
which consists of the four steps:
Steps~A3 and A4 on Alice's side and Steps~B3 and B4 on Bob's side. 
Totally, prior to the application of $U_{\mathrm{AB}}$, the total system consisting of
$\mathcal{Q}_\mathrm{A3}$, $\mathcal{Q}_\mathrm{B3}$,
$\mathcal{Q}_\mathrm{A}$ and $\mathcal{Q}_\mathrm{B}$
is prepared in the state
\begin{equation}\label{Bell-fps-calculation:eq04}
  \ket{\Psi^{\mathrm{init}}}:=\ket{+}\otimes\ket{+}\otimes\ket{\beta_{11}}.
\end{equation}

\subsection{Application of the principle of typicality}

The operator $U_{\mathrm{AB}}$ applying to the initial state $\ket{\Psi^{\mathrm{init}}}$ describes
the \emph{repeated once} of the infinite repetition of the measurements
in Protocol~\ref{Bell},
where the execution of
Steps~A3 and A4 and Steps~B3 and B4
is infinitely repeated.
Actually, we can
check that a collection
\begin{equation}\label{Bellmos}
  \left\{E_{c}\otimes E_{d}\otimes
  E^{\mathrm{A}}_{c,m}\otimes E^{\mathrm{B}}_{d,n}\right\}_{(c,d,m,n)\in\Omega}
\end{equation}
forms \emph{measurement operators},
as is confirmed by the following calculation:
\begin{align*}
  &\sum_{(c,d,m,n)\in\Omega}
  \left(E_{c}\otimes E_{d}\otimes E^{\mathrm{A}}_{c,m}\otimes E^{\mathrm{B}}_{d,n}\right)^\dag
  \left(E_{c}\otimes E_{d}\otimes E^{\mathrm{A}}_{c,m}\otimes E^{\mathrm{B}}_{d,n}\right) \\
  &=\sum_{(c,d,m,n)\in\Omega}
  E_{c}\otimes E_{d}\otimes E^{\mathrm{A}}_{c,m}\otimes E^{\mathrm{B}}_{d,n}
  =
  \sum_{c=0,1}\sum_{d=0,1}
  E_{c}\otimes E_{d}\otimes
  \left(\sum_{m=\pm 1}E^{\mathrm{A}}_{c,m}\right)\otimes
  \left(\sum_{n=\pm 1}E^{\mathrm{B}}_{d,n}\right) \\
  &=\left(\sum_{c=0,1}E_{c}\right)\otimes \left(\sum_{d=0,1}E_{d}\right)
  \otimes I_2\otimes I_2
  =I_2\otimes I_2\otimes I_2\otimes I_2 \\
  &=I_{16},
\end{align*}
where the third equality follows from \eqref{Bell-fps-calculation:eq-A02} and \eqref{Bell-fps-calculation:eq-B02},
and the forth equality follows from \eqref{eq:Ec=kcbc}.
Thus,
the total application $U_{\mathrm{AB}}$ of
$U_\mathrm{A3}$, $U_\mathrm{A4}$, $U_\mathrm{B3}$, and $U_\mathrm{B4}$
can be regarded as
a
\emph{single measurement} which is described by
the measurement operators \eqref{Bellmos}
and whose all possible outcomes form the set $\Omega$.

Hence, we can apply Definition~\ref{pmrpwst}
to this scenario
of
the setting of measurements.
Therefore, according to Definition~\ref{pmrpwst},
we can see that a \emph{world} is an infinite sequence over
$\Omega$
and the probability measure induced by
the
\emph{probability measure representation for the prefixes of
worlds}
is
a Bernoulli measure $\lambda_P$ on
$\Omega^\infty$,
where $P$ is a finite probability space on
$\Omega$
such that
$P(c,d,m,n)$ is the square of the norm of
the
vector
\begin{equation*}
  \left(\left(E_{c}\otimes E_{d}\otimes E^{\mathrm{A}}_{c,m}\otimes E^{\mathrm{B}}_{d,n}\right)
  \ket{\Psi^{\mathrm{init}}}\right)\otimes\ket{\Phi[c,d,m,n]}
\end{equation*}
for every $(c,d,m,n)\in\Omega$.
Here $\Omega$ equals the set of all possible records of the apparatus
in the \emph{repeated once} of the measurements.

Let us calculate the explicit form of $P(c,d,m,n)$.
First, using \eqref{Bell-fps-calculation:eq04}, \eqref{eq:ket-plus}, and \eqref{eq:Ec=kcbc},
we have that
\begin{equation}\label{Bell-fps-calculation:eq05}
  P(c,d,m,n)=\bra{+}E_{c}\ket{+}\bra{+}E_{d}\ket{+}
  \bra{\beta_{11}}\left(E^{\mathrm{A}}_{c,m}\otimes E^{\mathrm{B}}_{d,n}\right)\ket{\beta_{11}}
  =\frac{1}{4}
  \bra{\beta_{11}}\left(E^{\mathrm{A}}_{c,m}\otimes E^{\mathrm{B}}_{d,n}\right)\ket{\beta_{11}}
\end{equation}
for each $(c,d,m,n)\in\Omega$.
Then the inner-product
$\bra{\beta_{11}}(E^{\mathrm{A}}_{c,m}\otimes E^{\mathrm{B}}_{d,n})\ket{\beta_{11}}$
above is calculated as follows:
Using \eqref{Bell-fps-calculation:eq-A01} and \eqref{Bell-fps-calculation:eq-A02}
we have that
\begin{equation}\label{Bell-fps-calculation:eq-A03}
  E_{0,m}^{\mathrm{A}}=\frac{1}{2}(I_2+mR)\quad\text{ and }\quad
  E_{1,m}^{\mathrm{A}}=\frac{1}{2}(I_2+mQ)
\end{equation}
for every $m\in\{+1,-1\}$.
Similarly, using \eqref{Bell-fps-calculation:eq-B01} and \eqref{Bell-fps-calculation:eq-B02}
we have that
\begin{equation}\label{Bell-fps-calculation:eq-B03}
  E_{0,n}^{\mathrm{B}}=\frac{1}{2}(I_2+nS)\quad\text{ and }\quad
  E_{1,n}^{\mathrm{B}}=\frac{1}{2}(I_2+nT)
\end{equation}
for every $n\in\{+1,-1\}$.
On the other hand, simple calculations 
using \eqref{Bell-fps-calculation:eq01} and \eqref{Bell-fps-calculation:eq02} show that
\begin{align}
  \bra{\beta_{11}}(X\otimes I_2)\ket{\beta_{11}}
  &=\bra{\beta_{11}}(Z\otimes I_2)\ket{\beta_{11}}
  =\bra{\beta_{11}}(I_2\otimes X)\ket{\beta_{11}}
  =\bra{\beta_{11}}(I_2\otimes Z)\ket{\beta_{11}}
  =0, \label{Bell-fps-calculation:eq06a} \\
  \bra{\beta_{11}}(X\otimes Z)\ket{\beta_{11}}
  &=\bra{\beta_{11}}(Z\otimes X)\ket{\beta_{11}}
  =0, \label{Bell-fps-calculation:eq06b} \\
  \bra{\beta_{11}}(X\otimes X)\ket{\beta_{11}}
  &=\bra{\beta_{11}}(Z\otimes Z)\ket{\beta_{11}}
  =-1. \label{Bell-fps-calculation:eq06c}
\end{align}
Note from \eqref{Bell-fps-calculation:eq03} that
\begin{align*}
  \bra{\beta_{11}}(I_2\otimes S)\ket{\beta_{11}}
  &=-\frac{1}{\sqrt{2}}\bra{\beta_{11}}(I_2\otimes X)\ket{\beta_{11}}-\frac{1}{\sqrt{2}}\bra{\beta_{11}}(I_2\otimes Z)\ket{\beta_{11}}, \\
  \bra{\beta_{11}}(I_2\otimes T)\ket{\beta_{11}}
  &=-\frac{1}{\sqrt{2}}\bra{\beta_{11}}(I_2\otimes X)\ket{\beta_{11}}+\frac{1}{\sqrt{2}}\bra{\beta_{11}}(I_2\otimes Z)\ket{\beta_{11}}.
\end{align*}
Therefore, it follows from
\eqref{Bell-fps-calculation:eq03} and
\eqref{Bell-fps-calculation:eq06a} that
\begin{equation}\label{Bell-fps-calculation:eq07}
  \bra{\beta_{11}}(R\otimes I_2)\ket{\beta_{11}}
  =\bra{\beta_{11}}(Q\otimes I_2)\ket{\beta_{11}}
  =\bra{\beta_{11}}(I_2\otimes S)\ket{\beta_{11}}
  =\bra{\beta_{11}}(I_2\otimes T)\ket{\beta_{11}}
  =0.
\end{equation}
Note also from \eqref{Bell-fps-calculation:eq03} that
\begin{align*}
  \bra{\beta_{11}}(R\otimes S)\ket{\beta_{11}}
  &=-\frac{1}{\sqrt{2}}\bra{\beta_{11}}(X\otimes X)\ket{\beta_{11}}-\frac{1}{\sqrt{2}}\bra{\beta_{11}}(X\otimes Z)\ket{\beta_{11}}, \\
  \bra{\beta_{11}}(R\otimes T)\ket{\beta_{11}}
  &=-\frac{1}{\sqrt{2}}\bra{\beta_{11}}(X\otimes X)\ket{\beta_{11}}+\frac{1}{\sqrt{2}}\bra{\beta_{11}}(X\otimes Z)\ket{\beta_{11}}, \\
  \bra{\beta_{11}}(Q\otimes S)\ket{\beta_{11}}
  &=-\frac{1}{\sqrt{2}}\bra{\beta_{11}}(Z\otimes X)\ket{\beta_{11}}-\frac{1}{\sqrt{2}}\bra{\beta_{11}}(Z\otimes Z)\ket{\beta_{11}}, \\
  \bra{\beta_{11}}(Q\otimes T)\ket{\beta_{11}}
  &=-\frac{1}{\sqrt{2}}\bra{\beta_{11}}(Z\otimes X)\ket{\beta_{11}}+\frac{1}{\sqrt{2}}\bra{\beta_{11}}(Z\otimes Z)\ket{\beta_{11}}.
\end{align*}
Therefore,  it follows from \eqref{Bell-fps-calculation:eq06b} and \eqref{Bell-fps-calculation:eq06c} that
\begin{equation}\label{Bell-fps-calculation:eq08}
\begin{split}
  \bra{\beta_{11}}(R\otimes S)\ket{\beta_{11}}
  &=\bra{\beta_{11}}(R\otimes T)\ket{\beta_{11}}
  =\bra{\beta_{11}}(Q\otimes S)\ket{\beta_{11}}=\frac{1}{\sqrt{2}}, \\
  \bra{\beta_{11}}(Q\otimes T)\ket{\beta_{11}}&=-\frac{1}{\sqrt{2}}.
\end{split}
\end{equation}
Thus, based on \eqref{Bell-fps-calculation:eq-A03}, \eqref{Bell-fps-calculation:eq-B03},
\eqref{Bell-fps-calculation:eq07}, and \eqref{Bell-fps-calculation:eq08}, it is easy to see that
\begin{equation}\label{eq:b11EAcmotEBdnb11=1o41pfrac}
  \bra{\beta_{11}}\left(E^{\mathrm{A}}_{c,m}\otimes E^{\mathrm{B}}_{d,n}\right)\ket{\beta_{11}}
  =\frac{1}{4}\left[1+\frac{(-1)^{cd}mn}{\sqrt{2}}\right]
\end{equation}
for every $(c,d,m,n)\in\Omega$.
For example, in the case of $(c,d)=(1,1)$,
using \eqref{Bell-fps-calculation:eq-A03} and \eqref{Bell-fps-calculation:eq-B03}
we have that
\begin{align*}
  &\bra{\beta_{11}}\left(E^{\mathrm{A}}_{c,m}\otimes E^{\mathrm{B}}_{d,n}\right)\ket{\beta_{11}} \\
  &=\frac{1}{4}\left[\bra{\beta_{11}}(I_2\otimes I_2)\ket{\beta_{11}}
  +m\bra{\beta_{11}}(Q\otimes I_2)\ket{\beta_{11}}
  +n\bra{\beta_{11}}(I_2\otimes T)\ket{\beta_{11}}
  +mn\bra{\beta_{11}}(Q\otimes T)\ket{\beta_{11}}\right]
\end{align*}
for every $m,n\in\{+1,-1\}$.
Due to \eqref{Bell-fps-calculation:eq07} and \eqref{Bell-fps-calculation:eq08}
this results in the equation~\eqref{eq:b11EAcmotEBdnb11=1o41pfrac} with $(c,d)=(1,1)$,
as wanted.
Hence, using \eqref{Bell-fps-calculation:eq05} and \eqref{eq:b11EAcmotEBdnb11=1o41pfrac},
the explicit form of $P(c,d,m,n)$ is
obtained
as follows:
\begin{equation}\label{Pcdmn=1o161+-1cdmns2-Bell}
  P(c,d,m,n)=\frac{1}{16}\left[1+\frac{(-1)^{cd}mn}{\sqrt{2}}\right]%
\end{equation}
for every $(c,d,m,n)\in\Omega$.

Now, let us apply
Postulate~\ref{POT}, the \emph{principle of typicality},
to the setting of measurements
developed above.
Let $\alpha$ be \emph{our world} in the infinite repetition of the measurements in the above setting.
This $\alpha$
is an infinite sequence over $\Omega$
consisting
of records in the apparatuses
which is being generated by the infinite repetition of the measurements
described by the measurement operators~\eqref{Bellmos}
in the above setting.
Since the Bernoulli measure $\lambda_P$ on $\Omega^\infty$ is
the probability measure induced by the
probability measure representation
for the prefixes of
worlds
in the above setting,
it follows from
Postulate~\ref{POT} that
\emph{$\alpha$ is Martin-L\"of $P$-random}.

\subsection{Refined derivation of the equality among the conditional averages by quantum mechanics}

For each $c,d\in\{0,1\}$, we use $H(c,d)$ to denote the set $\{(c,d,m,n)\mid m=\pm 1\;\&\; n=\pm 1\}$.

Now,
let $c,d\in\{0,1\}$.
The set $H(c,d)$ consists of all records of the apparatuses
$\mathcal{A}_\mathrm{A3},\mathcal{A}_\mathrm{B3},\mathcal{A}_\mathrm{A4}$,
and $\mathcal{A}_\mathrm{B4}$,
in a repeated once of the procedure in Protocol~\ref{Bell}, where
\emph{Alice gets the outcome $c$ in Step~A3 and Bob gets the outcome $d$ in Step~B3}.
It follows from \eqref{Pcdmn=1o161+-1cdmns2-Bell}
that
$$P(H(c,d))=\frac{1}{4},$$
as expected from the point of view of the conventional quantum mechanics,
and
moreover
\begin{equation}\label{PBcdcdmn=Pcdmnsm7n7pm1Pcdm7n7-Bell}
  P_{H(c,d)}(c,d,m,n)=\frac{P(c,d,m,n)}{P(H(c,d))}
  =\frac{1}{4}\left[1+\frac{(-1)^{cd}mn}{\sqrt{2}}\right]
\end{equation}
for every $m,n\in\{+1,-1\}$.
Here we use the notation presented in Section~\ref{subsec:Conditional probability}.
Let $$\alpha_{c,d}:=\cond{H(c,d)}{\alpha}.$$
Then, since $\alpha$ is Martin-L\"of $P$-random,
using Theorem~\ref{conditional_probability} we have that
$\alpha_{c,d}$ is Martin-L\"of $P_{H(c,d)}$-random for the finite probability space $P_{H(c,d)}$ on $H(c,d)$.
Recall that $\cond{H(c,d)}{\alpha}$ is defined as an infinite sequence
over the alphabet $H(c,d)$
obtained from
$\alpha$ by eliminating all elements of $\Omega\setminus H(c,d)$ occurring in $\alpha$.
In other words, $\alpha_{c,d}$ is the subsequence of $\alpha$
consisting
only
of results that
Alice gets the outcome $c$ in Step~A3 and Bob gets the outcome $d$ in Step~B3.
For each $k\in\N^+$, we denote the $k$th element $\alpha_{c,d}(k)$ of the
subsequence
$\alpha_{c,d}$
as
\[
  \alpha_{c,d}(k)=(c,d,m_{c,d}(k),n_{c,d}(k)).
\]
In this manner, we introduce infinite sequences $m_{c,d}$ and $n_{c,d}$ over $\{+1,-1\}$.
The sequence $m_{c,d}$
is
\emph{the infinite sequence of outcomes of the
measurements
performed
by Alice in Step~A4
over the infinite repetition of the procedure in Protocol~\ref{Bell} in our world,
conditioned that Alice gets the outcome $c$ in Step~A3 and Bob gets the outcome $d$ in Step~B3}.
Similarly, the sequence $n_{c,d}$
is
\emph{the infinite sequence of outcomes of the
measurements
performed
by Bob in Step~B4
over the infinite repetition of the procedure in Protocol~\ref{Bell} in our world,
conditioned that Alice gets the outcome $c$ in Step~A3 and Bob gets the outcome $d$ in Step~B3}.

Based on the families $\{m_{c,d}\}$ and $\{n_{c,d}\}$ of infinite sequences over $\{+1,-1\}$,
the \emph{conditional averages}
$\langle RS\rangle$, $\langle QS\rangle$, $\langle RT\rangle$, and $\langle QT\rangle$
are
defined
as follows:
\begin{equation}\label{def-conditional-averages-QM}
\begin{split}
  &\langle RS\rangle:=\lim_{L\to\infty}\frac{1}{L}\sum_{k=1}^L m_{0,0}(k)n_{0,0}(k),
  \qquad\langle QS\rangle:=\lim_{L\to\infty}\frac{1}{L}\sum_{k=1}^L m_{1,0}(k)n_{1,0}(k), \\
  &\langle RT\rangle:=\lim_{L\to\infty}\frac{1}{L}\sum_{k=1}^L m_{0,1}(k)n_{0,1}(k),
  \qquad\langle QT\rangle:=\lim_{L\to\infty}\frac{1}{L}\sum_{k=1}^L m_{1,1}(k)n_{1,1}(k).
\end{split}
\end{equation}
The value $\langle RS\rangle$ can be interpreted as
\emph{the average value of the product of
outcome of the measurement of the observable $R$
performed
by Alice in Step~A4 and
outcome of the measurement of the observable $S$
performed
by Bob in Step~B4,
conditioned that Alice gets the outcome $0$ in Step~A3 and Bob gets the outcome $0$ in Step~B3}.
Note here that, in Protocol~\ref{Bell},
whenever Alice gets the outcome $0$
in Step~A3
she performs the measurement of $R$
over the first qubit in Step~A4, and
whenever Bob gets the outcome $0$
in Step~B3
he performs the measurement of $S$
over the second qubit in Step~B4.
An analogous interpretation can be made for each of
$\langle QS\rangle$, $\langle RT\rangle$, and $\langle QT\rangle$.

The conditional averages are calculated as follows:
Let $c,d\in\{0,1\}$.
Since the subsequence $\alpha_{c,d}$ is Martin-L\"{o}f $P_{H(c,d)}$-random,
it follows from Theorem~\ref{FI} and \eqref{PBcdcdmn=Pcdmnsm7n7pm1Pcdm7n7-Bell}
that for every $m,n=\pm 1$ it holds that
\begin{equation*}
  \lim_{L\to\infty}\frac{N_{(c,d,m,n)}(\rest{\alpha_{c,d}}{L})}{L}
  =\frac{1}{4}\left[1+\frac{(-1)^{cd}mn}{\sqrt{2}}\right].
\end{equation*}
Recall here that $N_{(c,d,m,n)}(\rest{\alpha_{c,d}}{L})$ denotes the number of the occurrences of $(c,d,m,n)$
in the prefix of $\alpha_{c,d}$ of length $L$.
Therefore, we have that
\begin{align*}
  &\lim_{L\to\infty}\frac{1}{L}\sum_{k=1}^L m_{c,d}(k)n_{c,d}(k) \\
  =&\lim_{L\to\infty}\frac{1}{L}\left[N_{(c,d,1,1)}(\rest{\alpha_{c,d}}{L})-N_{(c,d,1,-1)}(\rest{\alpha_{c,d}}{L})-N_{(c,d,-1,1)}(\rest{\alpha_{c,d}}{L})+N_{(c,d,-1,-1)}(\rest{\alpha_{c,d}}{L})\right] \\
  =&(-1)^{cd}\frac{1}{\sqrt{2}}.
\end{align*}
Thus,
the conditional averages
$\langle RS\rangle$, $\langle QS\rangle$, $\langle RT\rangle$, and $\langle QT\rangle$,
which are defined by \eqref{def-conditional-averages-QM},
are calculated as follows:
\[
  \langle RS\rangle=\frac{1}{\sqrt{2}},\quad
  \langle QS\rangle=\frac{1}{\sqrt{2}},\quad
  \langle RT\rangle=\frac{1}{\sqrt{2}},\quad
  \langle QT\rangle=-\frac{1}{\sqrt{2}}.
\]
Eventually,
we have the following equality for the conditional averages
as a result of the analysis of Protocol~\ref{Bell} in our
rigorous
framework
of quantum mechanics
based on the principle of typicality:
\begin{equation}\label{QM-equality}
  \langle RS\rangle + \langle QS\rangle + \langle RT\rangle - \langle QT\rangle =2\sqrt{2}.
\end{equation}
This equality has
exactly
the same form as
expected from
the aspect of
the conventional quantum mechanics,
i.e., as the equation~(2.230) in Section~2.6 of Nielsen and Chuang~\cite{NC00}.

\section{Refinement of the argument of local realism to lead to Bell's inequality}
\label{sec-Bell_inequality}

Nielsen and Chuang~\cite[Section~2.6]{NC00} describes
an analysis for Protocol~\ref{Bell} leading to Bell's inequality,
based on the \emph{assumptions of local realism}.
In this section,
we \emph{refine} and \emph{reformulate}
their
analysis and
the assumptions of local realism,
in the framework of
the
\emph{operational characterization of the notion of probability
by algorithmic randomness}, introduced and developed by our former works~\cite{T14,T15,T16arXiv}.
For that purpose, first we review
the framework of the operational characterization of the notion of probability
in the following subsection.

\subsection{Operational characterization of the notion of probability}
\label{subsec:OCNP}

The notion of probability plays an important role in almost all areas of science and technology.
In modern mathematics, however, probability theory means nothing other than measure theory,
and the operational characterization of the notion of probability is not established yet.
In our former works~\cite{T14,T15,T16arXiv,T19arXiv},
based on the toolkit of algorithmic randomness
we presented an operational characterization of the notion of probability
for \emph{discrete probability spaces},
including \emph{finite probability spaces}.
We used the notion of
\emph{Martin-L\"of $P$-randomness} in \cite{T14,T15,T16arXiv}
and that of \emph{its extension over Baire space} in \cite{T19arXiv}
to present the operational characterization.

According to Tadaki~\cite{T16arXiv}, 
in order to clarify our motivation and standpoint, and the meaning of the operational characterization,
let us consider a familiar example of a probabilistic phenomenon.
Specifically,
we
consider the repeated throwing of a fair die.
In this probabilistic phenomenon,
as throwing progressed,
a specific infinite sequence such as
\begin{equation*}
  3,5,6,3,4,2,2,3,6,1,5,3,5,4,1,\dotsc\dotsc\dotsc
\end{equation*}
is being generated,
where each number is the outcome of the corresponding throwing of the die.
Then the following
naive
question
may
arise naturally. 

\begin{quote}
\textbf{Question}:
What property should this infinite sequence
satisfy as a
probabilistic
phenomenon?
\end{quote}

Via a series of works~\cite{T14,T15,T16arXiv,T19arXiv},
we tried to answer this question.
We characterized the notion of probability
as
an infinite sequence of outcomes in a probabilistic phenomenon of a \emph{specific mathematical property}.
We called such an infinite sequence of outcomes
the \emph{operational characterization of the notion of probability}.
As the specific mathematical property,
we adopted
the notion of
\emph{Martin-L\"of $P$-randomness}.

In our former works~\cite{T14,T15,T16arXiv,T19arXiv},
we put forward this proposal as a thesis,
i.e., as Thesis~\ref{thesis} below.
We checked
the validity of
the thesis
based on
our intuitive understanding of the notion of probability.
Furthermore,
we characterized \emph{equivalently}
the basic notions in probability theory in terms of the operational characterization.
Namely, we equivalently characterized the notion of the \emph{independence} of random variables/events
in terms of
the operational characterization,
and we represented the notion of \emph{conditional probability}
in terms of
the operational characterization
in a natural way.
The existence of these equivalent characterizations confirms further the validity of the thesis.
See Tadaki~\cite{T16arXiv,T19arXiv}
for the detail of the operational characterization of the notion of probability
based on
Martin-L\"of $P$-randomness.

In this manner,
as revealed
by Tadaki~\cite{T14,T15,T16arXiv,T19arXiv},
the notion of Martin-L\"of $P$-randomness is thought to reflect all the properties of the notion of probability
based on
our intuitive understanding of the notion of probability.
Thus, \emph{we propose that a Martin-L\"of $P$-random sequence of elementary events
gives an operational characterization of the notion of probability},
as follows.

Let $\Omega$ be an alphabet, and let $P\in\PS(\Omega)$.
According to Tadaki~\cite{T16arXiv},
consider an infinite sequence $\alpha\in\Omega^\infty$ of outcomes which is
being generated
by infinitely repeated trials \emph{described by} the finite probability space $P$.
\emph{The operational characterization of the notion of probability
for the finite probability space $P$ is thought to be completed
if the property which the infinite sequence $\alpha$ has to satisfy is determined.}
\emph{We thus propose the following thesis}.

\begin{thesis}[Tadaki~\cite{T14,T15,T16arXiv}]\label{thesis}
Let $\Omega$ be an alphabet, and let $P\in\PS(\Omega)$.
An infinite sequence of outcomes in $\Omega$ which is
being generated by infinitely repeated trials \emph{described by}
the finite probability space $P$ on $\Omega$ is
a Martin-L\"of $P$-random sequence over $\Omega$.
\qed
\end{thesis}

Tadaki~\cite{T16arXiv}
demonstrated
the validity of Thesis~\ref{thesis}
from a variety of aspects.

\subsection{Refinement of the assumptions of local realism}

In what follows, we refine and reformulate
the analysis
for Protocol~\ref{Bell} to derive Bell's inequality,
given
in Section~2.6 of Nielsen and Chuang~\cite{NC00},
in the framework of the operational characterization of the notion of probability
reviewed in the preceding subsection.
Basically, we
follow the flow of the argument of
Nielsen and Chuang~\cite[Section~2.6]{NC00},
while refining it appropriately in terms of the operational characterization of the notion of probability.
Hence, we proceed according to Nielsen and Chuang~\cite[Section~2.6]{NC00}, as follows:

We first forget all the knowledge of quantum mechanics.
To obtain Bell's inequality, we analyze Protocol~\ref{Bell}
based on  ``our common sense notions of how the world works.''
Thus, we perform ``the common sense analysis'' for Protocol~\ref{Bell}.
In doing so, we are implicitly assuming the following two assumptions:
\begin{description}
\item[The assumption of realism:]
The assumption that the observables~$R$, $Q$, $S$, $T$ have
definite values~$r$, $q$, $s$, $t$, respectively, which exist independent of observation.
\item[The assumption of locality:]
The assumption that Alice performing her measurement does not influence
the result of Bob's measurement, and vice versa.
\end{description}
These two assumptions together are known as the \emph{assumptions of local realism}.
See Nielsen and Chuang~\cite[Section~2.6]{NC00}
for the considerations for the assumptions of local realism.

Now,
according to Nielsen and Chuang~\cite[Section~2.6]{NC00},
let us make ``the common sense analysis'' for Protocol~\ref{Bell},
based on the assumptions of local realism.
In Protocol~\ref{Bell},
Alice performs the measurement of either the observable $R$ or $Q$ over the first qubit in Step A4,
while Bob performs the measurement of either the observable $S$ or
$T$ over the second qubit in Step B4.
Based on the assumptions of local realism,
we
assume
that each of the observables $R$, $Q$, $S$, and $T$
has a specific value before the measurement, which is merely revealed by the measurement.
In particular,
\emph{in the terminology of the conventional probability theory},
we assume that $$p(r, q, s, t)$$ is the ``probability'' that,
before the measurements are performed,
the system is in a state where $R = r, Q =q, S = s$, and $T = t$.
This ``probability''
may depend on how Charlie performs his preparation,
and on experimental noise.
In the framework of the operational characterization of the notion of probability,
\emph{based on Thesis~\ref{thesis}},
the assumption above is refined and reformulated
in the following form:

\begin{assumption}\label{ASLR1}
Let $\omega$ be an infinite sequence of the values $(r,q,s,t)$ of the observables $R,Q,S,T$
which is being generated by
the infinite repetition of the procedure
in Protocol~\ref{Bell}.
Then there exists a finite probability space $H$ on $\{+1,-1\}^4$ such that
$\omega$ is a \emph{Martin-L\"of $H$-random} infinite sequence over $\{+1,-1\}^4$.
\qed
\end{assumption}

Assumption~\ref{ASLR1} is an \emph{operational refinement} of one of
the consequences
of the assumptions of local realism.
In Protocol~\ref{Bell},
Alice tosses a fair coin $C$ to get outcome $c\in\{0,1\}$ in Step A3,
while Bob tosses a fair coin $D$ to get outcome $d\in\{0,1\}$ in Step B3.
In the framework of the operational characterization of the notion of probability,
these probabilistic phenomena are refined and reformulated
in the following form:

\begin{assumption}\label{Impl-Thesis}
Let $\gamma$ be an infinite binary sequence which is
being generated by infinitely repeated tossing of the fair coin $C$ by Alice in Protocol~\ref{Bell}.
Then the infinite sequence $\gamma$ is a \emph{Martin-L\"of $U$-random} sequence over $\{0,1\}$,
where $U$ is a finite probability space on $\{0,1\}$ such that $U(0)=U(1)=1/2$.
Namely, $\gamma$ is \emph{Martin-L\"of random}.

Similarly,
let $\delta$ be an infinite binary sequence which is
being generated by infinitely repeated tossing of the fair coin $D$ by Bob in Protocol~\ref{Bell}.
Then the infinite sequence $\delta$ is also a \emph{Martin-L\"of $U$-random} sequence over $\{0,1\}$.
\qed
\end{assumption}

Assumption~\ref{Impl-Thesis} is just an implementation of Thesis~\ref{thesis}
in an infinite repetition of tossing of a fair coin.
In order to advance
``the common sense analysis'' for Protocol~\ref{Bell}
further
in a rigorous manner,
however,
we need to make
an additional
assumption for the relation
among the infinite sequences $\omega$, $\gamma$, and $\delta$.
Namely, the infinite sequences $\omega$, $\gamma$, and $\delta$ need to be \emph{independent}.
Thus, based on the notion of independence given in Definition~\ref{independency-of-ensembles},
we assume the following:
 
\begin{assumption}\label{AL2}
The infinite sequences $\omega$, $\gamma$, and $\delta$ are \emph{independent}.
\qed
\end{assumption}

Assumption~\ref{AL2} is an \emph{operational refinement} of one of
the consequences
of the assumption of locality.

Now, according to Definition~\ref{independency-of-ensembles},
Assumption~\ref{ASLR1}, Assumption~\ref{Impl-Thesis}, and Assumption~\ref{AL2}
together
imply
the following \emph{single} assumption:

\begin{assumption}\label{ALRsingle}
Let $\omega$ be an infinite sequence of the values $(r,q,s,t)$ of the observables $R,Q,S,T$
which is being generated by
the infinite repetition of the procedure
in Protocol~\ref{Bell}.
Let $\gamma$ be an infinite binary sequence which is
being generated by infinitely repeated tossing of the fair coin $C$ by Alice in Protocol~\ref{Bell},
and let $\delta$ be an infinite binary sequence which is
being generated by infinitely repeated tossing of the fair coin $D$ by Bob in Protocol~\ref{Bell}.
Then there exists a finite probability space $H$ on $\{+1,-1\}^4$ such that
the infinite sequence $\omega\times\gamma\times\delta$ over $\{+1,-1\}^4\times\{0,1\}\times\{0,1\}$
is \emph{Martin-L\"of $H\times U\times U$-random},
where $U$ is a finite probability space on $\{0,1\}$ such that $U(0)=U(1)=1/2$.
\qed
\end{assumption}

However, it follows from Theorem~\ref{independency-imlplies-each-randomness} that
if the infinite sequence $\omega\times\gamma\times\delta$ over $\{+1,-1\}^4\times\{0,1\}\times\{0,1\}$
is Martin-L\"of $H\times U\times U$-random then
the infinite sequence $\omega$ over $\{+1,-1\}^4$ is Martin-L\"of $H$-random and
the infinite binary sequences $\gamma$ and $\delta$ are Martin-L\"of $U$-random.
Thus, in fact, Assumption~\ref{ALRsingle} is \emph{equivalent} to
the conjunction of
Assumption~\ref{ASLR1}, Assumption~\ref{Impl-Thesis}, and Assumption~\ref{AL2}.
Hence, Assumption~\ref{ALRsingle} \emph{alone} serves
as an \emph{operational refinement} of the whole of the assumptions of local realism in the context of Protocol~\ref{Bell}.

\begin{remark}\label{rem:RC-Bell}
In the context of the \emph{relativized computation},
we can consider the notion of \emph{Martin-L\"of $P$-randomness relative to infinite sequences}.
Theorem~42 of Tadaki~\cite{T16arXiv} states that
the notion of independence of Martin-L\"of $P$-random infinite sequences can be 
\emph{equivalently}
represented by the notion of Martin-L\"of $P$-randomness relative to infinite sequences.
Thus,
based on Theorem~42 of Tadaki~\cite{T16arXiv},
in Assumption~\ref{ALRsingle}
the statement
``the infinite sequence $\omega\times\gamma\times\delta$ over $\{+1,-1\}^4\times\{0,1\}\times\{0,1\}$
is \emph{Martin-L\"of $H\times U\times U$-random}''
can be \emph{equivalently} rephrased
as the statement that
\begin{enumerate}
  \item the infinite sequence $\omega$ is Martin-L\"of $H$-random,
  \item the infinite sequence $\gamma$ is Martin-L\"of random
    \emph{relative to $\omega$}, and
  \item the infinite sequence $\delta$ is Martin-L\"of random
    \emph{relative to $\gamma$ and $\omega$}.
\end{enumerate}
Note that in establishing this equivalence
we do not
have
to impose any computability restrictions on $H$ at all:
In Assumption~\ref{ALRsingle} (or in Assumption~\ref{ASLR1})
the finite probability space $H$ on $\{+1,-1\}^4$ can be chosen completely arbitrarily
without failing this equivalence (see Theorem~42 of Tadaki~\cite{T16arXiv}).
\qed
\end{remark}

\subsection{Refined derivation of Bell's inequality among the conditional averages}

Based on Assumption~\ref{ALRsingle}
(or equivalently,
the conjunction of Assumptions~\ref{ASLR1}, \ref{Impl-Thesis}, and \ref{AL2}),
let us derive Bell's inequality
in the framework of the operational characterization of the notion of probability.
Let $$\alpha:=\omega\times\gamma\times\delta.$$
Then, $\alpha$ is Martin-L\"of $H\times U\times U$-random by Assumption~\ref{ALRsingle}.
On the other hand,
note that
\begin{equation}\label{PtUtUrqst=f14rqst-Local-Realism}
  (H\times U\times U)((r,q,s,t),c,d)=\frac{1}{4}H(r,q,s,t)
\end{equation}
for every $r,q,s,t\in\{+1,-1\}$ and every $c,d\in\{0,1\}$.
Here we use the notation presented in Section~\ref{sec-IMLP}.
For each $c,d\in\{0,1\}$, we use $G(c,d)$ to denote the set
$$\left\{(x,c,d)\,\middle\vert\, x\in\{+1,-1\}^4\right\}.$$

Now,
let $c,d\in\{0,1\}$.
The set $G(c,d)$ consists of all possible
results
in a repeated once of the procedure in Protocol~\ref{Bell}, where
\emph{Alice gets the outcome $c$ in Step~A3 and Bob gets the outcome $d$ in Step~B3}.
\emph{In the terminology of the conventional probability theory},
$(H\times U\times U)(G(c,d))$ is
the ``probability'' that
\emph{Alice gets the outcome $c$ in Step~A3 and Bob gets the outcome $d$ in Step~B3}.
Actually, it follows from \eqref{PtUtUrqst=f14rqst-Local-Realism} that
$$(H\times U\times U)(G(c,d))=\frac{1}{4},$$
as expected from the conventional probability theory.
Thus, it
follows from \eqref{PtUtUrqst=f14rqst-Local-Realism} again that
\begin{equation}\label{PUUGcdrqstcd=prqst-Local-Realism}
  (H\times U\times U)_{G(c,d)}((r,q,s,t),c,d)
  =\frac{(H\times U\times U)((r,q,s,t),c,d)}{(H\times U\times U)(G(c,d))}
  =H(r,q,s,t)
\end{equation}
for every $r,q,s,t\in\{+1,-1\}$.
Here we use the notation presented in Section~\ref{subsec:Conditional probability}.
Let $$\alpha_{c,d}:=\cond{G(c,d)}{\alpha}.$$
Then, since $\alpha$ is Martin-L\"of $H\times U\times U$-random,
using Theorem~\ref{conditional_probability} we have that
$\alpha_{c,d}$ is Martin-L\"of $(H\times U\times U)_{G(c,d)}$-random
for the finite probability space $(H\times U\times U)_{G(c,d)}$ on $G(c,d)$.
Recall that $\cond{G(c,d)}{\alpha}$ is defined as an infinite sequence
over the alphabet $G(c,d)$
obtained from
$\alpha$ by eliminating all elements of $(\{+1,-1\}^4\times\{0,1\}\times\{0,1\})\setminus G(c,d)$
occurring in $\alpha$.
In other words, $\alpha_{c,d}$ is the subsequence of $\alpha$
consisting of results that
Alice gets the outcome $c$ in Step~A3 and Bob gets the outcome $d$ in Step~B3.
For each $k\in\N^+$, we denote the $k$th element $\alpha_{c,d}(k)$ of the
subsequence
$\alpha_{c,d}$
as
\[
  \alpha_{c,d}(k)=((r_{c,d}(k),q_{c,d}(k),s_{c,d}(k),t_{c,d}(k)),c,d).
\]
In this manner, we introduce infinite sequences $r_{c,d}$, $q_{c,d}$, $s_{c,d}$, and $t_{c,d}$ over $\{+1,-1\}$.
The sequence $r_{c,d}$
is
\emph{the infinite sequence of
values of the observable $R$
over the infinite repetition of the procedure in Protocol~\ref{Bell},
conditioned that Alice gets the outcome $c$ at Step~A3 and Bob gets the outcome $d$ at Step~B3}.
The sequences $q_{c,d}$, $s_{c,d}$, and $t_{c,d}$ have
an analogous meaning with respect to the observables $Q$, $S$, and $T$, respectively.

Based on the families $\{r_{c,d}\}$, $\{q_{c,d}\}$, $\{s_{c,d}\}$, and $\{t_{c,d}\}$ of infinite sequences over $\{+1,-1\}$,
the \emph{conditional averages}
$\langle RS\rangle$, $\langle QS\rangle$, $\langle RT\rangle$, and $\langle QT\rangle$ are defined
as follows:
\begin{equation}\label{def-conditional-averages-ALR}
\begin{split}
  &\langle RS\rangle:=\lim_{L\to\infty}\frac{1}{L}\sum_{k=1}^L r_{0,0}(k)s_{0,0}(k),
  \qquad\langle QS\rangle:=\lim_{L\to\infty}\frac{1}{L}\sum_{k=1}^L q_{1,0}(k)s_{1,0}(k), \\
  &\langle RT\rangle:=\lim_{L\to\infty}\frac{1}{L}\sum_{k=1}^L r_{0,1}(k)t_{0,1}(k),
  \qquad\langle QT\rangle:=\lim_{L\to\infty}\frac{1}{L}\sum_{k=1}^L q_{1,1}(k)t_{1,1}(k).
\end{split}
\end{equation}
The value $\langle RS\rangle$ can be interpreted as
\emph{the average value of the product of
outcome of the measurement of the observable $R$
performed
by Alice in Step~A4 and
outcome of the measurement of the observable $S$
performed
by Bob in Step~B4,
conditioned that Alice gets the outcome $0$ in Step~A3 and Bob gets the outcome $0$ in Step~B3}.
Note here that, in Protocol~\ref{Bell},
whenever Alice gets the outcome $0$
in Step~A3
she performs the measurement of $R$
over the first qubit in Step~A4, and
whenever Bob gets the outcome $0$
in Step~B3
he performs the measurement of $S$
over the second qubit in Step~B4.
An analogous
interpretation can be made for each of
$\langle QS\rangle$, $\langle RT\rangle$, and $\langle QT\rangle$.

The conditional averages are calculated as follows:
Let $c,d\in\{0,1\}$.
Since the subsequence $\alpha_{c,d}$ is Martin-L\"{o}f $(H\times U\times U)_{G(c,d)}$-random,
it follows from Theorem~\ref{FI} and \eqref{PUUGcdrqstcd=prqst-Local-Realism}
that for every $r,q,s,t=\pm 1$ it holds that
\begin{equation*}
  \lim_{L\to\infty}\frac{N_{((r,q,s,t),c,d)}(\rest{\alpha_{c,d}}{L})}{L}=H(r,q,s,t).
\end{equation*}
Recall here that
$N_{((r,q,s,t),c,d)}(\rest{\alpha_{c,d}}{L})$ denotes the number of the occurrences of
$((r,q,s,t),c,d)$
in the prefix of $\alpha_{c,d}$ of length $L$.
Thus,
the conditional averages
$\langle RS\rangle$, $\langle QS\rangle$, $\langle RT\rangle$, and $\langle QT\rangle$
are calculated as follows:
\begin{equation}\label{EF-Conditional-Averages-Local-Realism}
\begin{split}
  \langle RS\rangle
  =&\lim_{L\to\infty}\frac{1}{L}\sum_{r,q,s,t=\pm 1} N_{((r,q,s,t),0,0)}(\rest{\alpha_{0,0}}{L})rs
  =\sum_{r,q,s,t=\pm 1} H(r,q,s,t)rs, \\
  \langle QS\rangle
  =&\lim_{L\to\infty}\frac{1}{L}\sum_{r,q,s,t=\pm 1} N_{((r,q,s,t),1,0)}(\rest{\alpha_{1,0}}{L})qs
  =\sum_{r,q,s,t=\pm 1} H(r,q,s,t)qs, \\
  \langle RT\rangle
  =&\lim_{L\to\infty}\frac{1}{L}\sum_{r,q,s,t=\pm 1} N_{((r,q,s,t),0,1)}(\rest{\alpha_{0,1}}{L})rt
  =\sum_{r,q,s,t=\pm 1} H(r,q,s,t)rt, \\
  \langle QT\rangle
  =&\lim_{L\to\infty}\frac{1}{L}\sum_{r,q,s,t=\pm 1} N_{((r,q,s,t),1,1)}(\rest{\alpha_{1,1}}{L})qt
  =\sum_{r,q,s,t=\pm 1} H(r,q,s,t)qt,
\end{split}
\end{equation}
where the sums are over all $(r,q,s,t)\in\{+1,-1\}^4$.

Let $(r,q,s,t)\in\{+1,-1\}^4$.
Note that $rs + qs + rt - qt = (r + q)s + (r - q)t$.
Then, since $q,r = \pm 1$,
we have that
either $(r + q)s = 0$ or $(r - q)t = 0$.
It follows that
$rs + qs + rt - qt = \pm 2$.
Thus, using \eqref{EF-Conditional-Averages-Local-Realism} we see that
\begin{align*}
  &\langle RS\rangle + \langle QS\rangle + \langle RT\rangle - \langle QT\rangle \\
  &=\sum_{r,q,s,t=\pm 1} H(r,q,s,t)rs+\sum_{r,q,s,t=\pm 1} H(r,q,s,t)qs
    +\sum_{r,q,s,t=\pm 1} H(r,q,s,t)rt\\
  &\hspace*{83mm}-\sum_{r,q,s,t=\pm 1} H(r,q,s,t)qt \\
  &=\sum_{r,q,s,t=\pm 1} H(r,q,s,t)(rs+qs+rt-qt) \\
  &\le \sum_{r,q,s,t=\pm 1} H(r,q,s,t)\,\times 2 \\
  &=2.
\end{align*}
Hence, we finally obtain
the \emph{Bell's inequality} (also known as the \emph{CHSH inequality}),
\begin{equation}\label{Bell-inequality}
   \langle RS\rangle + \langle QS\rangle + \langle RT\rangle - \langle QT\rangle\le 2.
\end{equation}
This inequality has the same form as expected
from the conventional probability theory,
i.e., the same form as one obtained
by means of ``the common sense analysis'' based on the assumptions of local realism,
which is performed in Nielsen and Chuang~\cite[Section~2.6]{NC00} in a \emph{vague} manner.

\section{Refinement of the analysis of the GHZ experiment by quantum mechanics,
based on the principle of typicality}
\label{sec:GHZ-QM}

In this section,
based on the \emph{principle of typicality},
we \emph{refine} and \emph{reformulate}
the
analysis for the protocol of the GHZ experiment by
quantum mechanics,
which is presented by Mermin~\cite{Mer90}
based on the result of Greenberger, Horne, and Zeilinger~\cite{GHZ89}.
Thus, in what follows,
we investigate
Protocol~\ref{GHZ} below
due to Greenberger, Horne, and Zeilinger~\cite{GHZ89}, and Mermin~\cite{Mer90}
in \emph{our
rigorous
framework of MWI based on the principle of typicality},
developed in Section~\ref{MWI}.

Consider a system of three spin-$1/2$ particles, each numbered $i=1,2,3$,
and consider their observables
$\displaystyle
A^i_0:=\sigma^i_x$
and
$\displaystyle
A^i_1:=\sigma^i_y$
for each particle $i=1,2,3$,
where $\sigma_x=X$ and $\sigma_y=Y$.
The
\emph{Greenberger-Horne-Zeilinger state} (\emph{GHZ state}, for short) \cite{GHSZ90}
is a state of 
a system of three spin-$1/2$ particles defined by
\begin{equation}\label{GHZ-fps-calculation:eq01}
  \ket{\text{GHZ}}
  :=\frac{\ket{+1}\otimes\ket{+1}\otimes\ket{+1}-\ket{-1}\otimes\ket{-1}\otimes\ket{-1}}{\sqrt{2}}.
\end{equation}
Here we denote $\ket{0}$ and $\ket{1}$ by $\ket{+1}$ and $\ket{-1}$, respectively,
and therefore
$Z\ket{+1}=\ket{+1}$ and $Z\ket{-1}=-\ket{-1}$
hold
due to \eqref{Bell-fps-calculation:eq02}.

\begin{protocol}\label{GHZ}
The protocol involves four parties,
$\mathrm{P}_0, \mathrm{P}_1, \mathrm{P}_2, \mathrm{P}_3$.
They together repeat the following procedure \emph{forever}.
{
\renewcommand{\labelenumi}{Step \arabic{enumi}:}
\setlength{\leftmargini}{40pt}
\begin{enumerate}
\item The party~$\mathrm{P}_0$ prepares a system of three spin-$1/2$ particles
in the
state~$\ket{\text{GHZ}}$.
\item The party~$\mathrm{P}_0$ passes the three particles $1, 2, 3$ to
the
three parties $\mathrm{P}_1, \mathrm{P}_2, \mathrm{P}_3$, respectively.
\end{enumerate}
}

Then each party $\mathrm{P}_i$ performs the following $(i=1,2,3)$:
{
\renewcommand{\labelenumi}{Step $\mathrm{P}_i$\arabic{enumi}:}
\setlength{\leftmargini}{51pt}
\begin{enumerate}
\setcounter{enumi}{2}
\item The party $\mathrm{P}_i$ tosses a fair coin $C_i$ to get outcome $c_i\in\{0,1\}$.
\item The party $\mathrm{P}_i$ performs the measurement of either $A^i_0$ or $A^i_1$
  over the particle $i$ to obtain outcome $m_i\in\{+1,-1\}$,
  depending on $c_i=0$ or $1$.
\end{enumerate}
\renewcommand{\labelenumi}{(\roman{enumi})}
}

While repeating the above procedure forever,
the
parties $\mathrm{P}_1, \mathrm{P}_2, \mathrm{P}_3$
(occasionally)
compare their measurement outcomes with one another,
in order to find the breakdown of
the \emph{perfect correlations over the three parties}
stated in Theorem~\ref{thm:GHZ} below.
\qed
\end{protocol}

In what follows,
we analyze Protocol~\ref{GHZ}
in our framework of quantum mechanics based on
Postulate~\ref{POT}, the principle of typicality,
together with Postulates~\ref{state_space}, \ref{composition}, and \ref{evolution}.
To complete this,
\emph{we have to implement everything in
Steps~$\mathrm{P}_i$3 and $\mathrm{P}_i$4 of Protocol~\ref{GHZ}
for all $i=1,2,3$
by unitary time-evolution}.

\subsection{\boldmath Unitary implementation of all steps done by the
three
parties $\mathrm{P}_1, \mathrm{P}_2, \mathrm{P}_3$}

For each $i=1,2,3$,
we denote the system of the qubit which the party $\mathrm{P}_0$ sends to the party $\mathrm{P}_i$ in Step~2
by $\mathcal{Q}_{\mathrm{P}_i}$ with state space $\mathcal{H}_{\mathrm{P}_i}$.
We analyze
Steps~$\mathrm{P}_i$3 and $\mathrm{P}_i$4 with all $i=1,2,3$
of Protocol~\ref{GHZ}
in our framework of quantum mechanics based on the principle of typicality.
In particular, we realize each of the coin tossings
in Steps~$\mathrm{P}_13$, $\mathrm{P}_23$, $\mathrm{P}_33$
by a measurement of the observable $\ket{1}\bra{1}$ over
a system of a single qubit in the state $\ket{+}$,
which is defined by \eqref{eq:ket-plus}.
We will then describe all the measurement processes during
Steps~$\mathrm{P}_i$3 and $\mathrm{P}_i$4 with all $i=1,2,3$
as a single unitary interaction between systems and apparatuses.

For each $i=1,2,3$, each of Steps~$\mathrm{P}_i$3 and $\mathrm{P}_i$4 done
by the party $\mathrm{P}_i$ is implemented by a unitary time-evolution
in the following manner:

\paragraph{\boldmath Unitary implementation of Step~$\mathrm{P}_i$3 done by the party $\mathrm{P}_i$.}

To realize the coin tossing by the party~$\mathrm{P}_i$
in Step~$\mathrm{P}_i$3 of Protocol~\ref{GHZ}
we make use of a measurement over a single qubit system.
Namely, to implement
Step~$\mathrm{P}_i$3
we introduce a single qubit system $\mathcal{Q}_{\mathrm{P}_i3}$
with state space $\mathcal{H}_{\mathrm{P}_i3}$,
and perform a measurement over the system $\mathcal{Q}_{\mathrm{P}_i3}$ described
by a unitary time-evolution:
$$U_{\mathrm{P}_i3}(\ket{c}\otimes\ket{\Phi_{\mathrm{P}_i3}^{\mathrm{init}}})
=\ket{c}\otimes\ket{\Phi_{\mathrm{P}_i3}[c]}$$
for every $c\in\{0,1\}$,
where $\ket{c}\in\mathcal{H}_{\mathrm{P}_i3}$.
The vector $\ket{\Phi_{\mathrm{P}_i3}^{\mathrm{init}}}$ is
the initial state of an apparatus $\mathcal{A}_{\mathrm{P}_i3}$
measuring $\mathcal{Q}_{\mathrm{P}_i3}$,
and $\ket{\Phi_{\mathrm{P}_i3}[c]}$ is a final state of
the apparatus $\mathcal{A}_{\mathrm{P}_i3}$ for each $c\in\{0,1\}$.%
\footnote{We assume, of course, the orthogonality of the final states $\ket{\Phi_{\mathrm{P}_i3}[c]}$, i.e.,
the property that
$\braket{\Phi_{\mathrm{P}_i3}[c]}{\Phi_{\mathrm{P}_i3}[c']}=\delta_{c,c'}$.
Furthermore,
we assume the orthogonality of the finial states for
each of all apparatuses which appear in the rest of this section.}
Prior to the measurement,
the system $\mathcal{Q}_{\mathrm{P}_i3}$ is prepared in the state $\ket{+}\in\mathcal{H}_{\mathrm{P}_i3}$.

\paragraph{\boldmath Unitary implementation of Step~$\mathrm{P}_i$4 done by the party $\mathrm{P}_i$.}

Let $\{E^i_{0,m}\}_{m=\pm 1}$ and $\{E^i_{1,m}\}_{m=\pm 1}$ be the PVMs in $\mathcal{H}_{\mathrm{P}_i}$
corresponding to the observables $A^i_0$ and $A^i_1$, respectively.
Namely, let
\begin{equation}\label{GHZ-fps-calculation:eq02}
  A^i_0=E^i_{0,+1}-E^i_{0,-1}\quad\text{ and }\quad
  A^i_1=E^i_{1,+1}-E^i_{1,-1}.
\end{equation}
be the spectral decompositions of $A^i_0$ and $A^i_1$, respectively,
where $\{E^i_{0,m}\}_{m=\pm 1}$ and $\{E^i_{1,m}\}_{m=\pm 1}$
are collections of projectors such that
\begin{equation*}
  E_{0,+1}^i E_{0,-1}^i=0\quad\text{ and }\quad
  E_{1,+1}^i E_{1,-1}^i=0,
\end{equation*}
and
\begin{equation}\label{GHZ-fps-calculation:eq03}
  E_{0,+1}^i+E_{0,-1}^i=I_2\quad\text{ and }\quad
  E_{1,+1}^i+E_{1,-1}^i=I_2.
\end{equation}
The switching of the measurement of the observable~$A^i_0$ or $A^i_1$ in Step~$\mathrm{P}_i$4,
depending on the outcome $c_i$ in Step~$\mathrm{P}_i$3,
is realized by
a unitary time-evolution:
\begin{equation}\label{U4ToPi3c=VcToPi3c-Pi}
  U_{\mathrm{P}_i4}(\ket{\Theta}\otimes\ket{\Phi_{\mathrm{P}_i3}[c]})
  =(V_c^{\mathrm{P}_i}\ket{\Theta})\otimes\ket{\Phi_{\mathrm{P}_i3}[c]}
\end{equation}
for every $c\in\{0,1\}$
and every state $\ket{\Theta}$ of the composite system consisting of
the system $\mathcal{Q}_{\mathrm{P}_i}$ and the apparatus $\mathcal{A}_{\mathrm{P}_i4}$
explained below.
For each $c\in\{0,1\}$,
the operator $V_c^{\mathrm{P}_i}$ appearing in \eqref{U4ToPi3c=VcToPi3c-Pi}
describes a unitary time-evolution of the composite system consisting of
the system $\mathcal{Q}_{\mathrm{P}_i}$ and
an apparatus $\mathcal{A}_{\mathrm{P}_i4}$ measuring $\mathcal{Q}_{\mathrm{P}_i}$,
and is defined by the equation:
$$V_c^{\mathrm{P}_i}(\ket{\psi}\otimes\ket{\Phi_{\mathrm{P}_i4}^{\mathrm{init}}})
=\sum_{m=\pm 1}(E_{c,m}^i\ket{\psi})\otimes\ket{\Phi_{\mathrm{P}_i4}[m]}$$
for every $\ket{\psi}\in\mathcal{H}_{\mathrm{P}_i}$.
The vector $\ket{\Phi_{\mathrm{P}_i4}^{\mathrm{init}}}$ is
the initial state of the apparatus $\mathcal{A}_{\mathrm{P}_i4}$,
and $\ket{\Phi_{\mathrm{P}_i4}[m]}$ is a final state of the apparatus $\mathcal{A}_{\mathrm{P}_i4}$
for each $m\in\{+1,-1\}$.
Thus,
the operator $V_c^{\mathrm{P}_i}$
describes the alternate measurement process of the qubit $\mathcal{Q}_{\mathrm{P}_i}$
sent from the party $\mathrm{P}_0$, depending on the outcome $c_i$, on the party $\mathrm{P}_i$'s side.
Note that the unitarity of $U_{\mathrm{P}_i4}$ is confirmed by Theorem~\ref{unitarity}.

\bigskip

\medskip

The sequential applications of $U_{\mathrm{P}_i3}$ and $U_{\mathrm{P}_i4}$
to the composite system consisting of
the two qubit system $\mathcal{Q}_{\mathrm{P}_i3}$ and $\mathcal{Q}_{\mathrm{P}_i}$ and
the apparatuses $\mathcal{A}_{\mathrm{P}_i3}$ and $\mathcal{A}_{\mathrm{P}_i4}$
result in the following single unitary time-evolution $U_{\mathrm{P}_i}$:
\begin{equation}\label{unitary-Pi-GHZ}
U_{\mathrm{P}_i}(\ket{\Psi}\otimes\ket{\Phi_{\mathrm{P}_i3}^{\mathrm{init}}}\otimes\ket{\Phi_{\mathrm{P}_i4}^{\mathrm{init}}})
=\sum_{c=0,1}\sum_{m=\pm 1}((E_{c}\otimes E^i_{c,m})
\ket{\Psi})\otimes\ket{\Phi_{\mathrm{P}_i3}[c]}\otimes\ket{\Phi_{\mathrm{P}_i4}[m]}
\end{equation}
for every $\ket{\Psi}\in\mathcal{H}_{\mathrm{P}_i3}\otimes\mathcal{H}_{\mathrm{P}_i}$,
where the projector~$E_c$ on $\mathcal{H}_{\mathrm{P}_i3}$ is defined by \eqref{eq:Ec=kcbc}.

Now, let us consider 
a single unitary time-evolution $U_{\mathrm{P}_1,\mathrm{P}_2,\mathrm{P}_3}$
which describes all the measurement processes over the composite system consisting of
$\mathcal{Q}_{\mathrm{P}_i3}$ and $\mathcal{Q}_{\mathrm{P}_i}$ with all $i=1,2,3$.
According to Postulates~\ref{composition} and \ref{evolution},
we have that
\begin{equation}\label{eq:UP1P2P3T1otT2otT3=UP1T1otUP2T2otUP3T3}
  U_{\mathrm{P}_1,\mathrm{P}_2,\mathrm{P}_3}(\ket{\Theta_1}\otimes\ket{\Theta_2}\otimes\ket{\Theta_3})
  =(U_{\mathrm{P}_1}\ket{\Theta_1})\otimes(U_{\mathrm{P}_2}\ket{\Theta_2})\otimes(U_{\mathrm{P}_3}\ket{\Theta_3})
\end{equation}
for every states $\ket{\Theta_1}$, $\ket{\Theta_2}$, and $\ket{\Theta_3}$ such that
$\ket{\Theta_i}$ is a state of the composite system
consisting of the systems $\mathcal{Q}_{\mathrm{P}_i3}$ and  $\mathcal{Q}_{\mathrm{P}_i}$
and the apparatus $\mathcal{A}_{\mathrm{P}_i3}$ and $\mathcal{A}_{\mathrm{P}_i4}$
for
all
$i=1,2,3$.
We use $\Omega$ to denote the alphabet $$\{0,1\}^3\times\{+1,-1\}^3.$$ Then,
for each
$\ket{\Psi_1}\in\mathcal{H}_{\mathrm{P}_13}\otimes\mathcal{H}_{\mathrm{P}_1}$,
$\ket{\Psi_2}\in\mathcal{H}_{\mathrm{P}_23}\otimes\mathcal{H}_{\mathrm{P}_2}$,
and $\ket{\Psi_3}\in\mathcal{H}_{\mathrm{P}_33}\otimes\mathcal{H}_{\mathrm{P}_3}$,
using \eqref{unitary-Pi-GHZ} and \eqref{eq:UP1P2P3T1otT2otT3=UP1T1otUP2T2otUP3T3} we see that
\begin{align*}
&U_{\mathrm{P}_1,\mathrm{P}_2,\mathrm{P}_3}\left(\ket{\Psi_1}\otimes\ket{\Psi_2}\otimes\ket{\Psi_3}\otimes\ket{\Phi^{\mathrm{init}}}\right) \\
&=\left(U_{\mathrm{P}_1}\left(\ket{\Psi_1}\otimes
\ket{\Phi_{\mathrm{P}_13}^{\mathrm{init}}}\otimes\ket{\Phi_{\mathrm{P}_14}^{\mathrm{init}}}\right)\right)
\otimes
\left(U_{\mathrm{P}_2}\left(\ket{\Psi_2}\otimes
\ket{\Phi_{\mathrm{P}_23}^{\mathrm{init}}}\otimes\ket{\Phi_{\mathrm{P}_24}^{\mathrm{init}}}\right)\right)
\otimes
\left(U_{\mathrm{P}_3}\left(\ket{\Psi_3}\otimes
\ket{\Phi_{\mathrm{P}_33}^{\mathrm{init}}}\otimes\ket{\Phi_{\mathrm{P}_34}^{\mathrm{init}}}\right)\right) \\
&=
\sum_{(c_1,c_2,c_3,m_1,m_2,m_3)\in\Omega}
\left(\left(E_{c_1}\otimes E_{c_2}\otimes E_{c_3}\otimes E^1_{c_1,m_1}\otimes E^2_{c_2,m_2}\otimes E^3_{c_3,m_3}\right)
\left(\ket{\Psi_1}\otimes\ket{\Psi_2}\otimes\ket{\Psi_3}\right)\right) \\
&\hspace{40mm}\otimes\ket{\Phi[c_1,c_2,c_3,m_1,m_2,m_3]},
\end{align*}
where
$\ket{\Phi^{\mathrm{init}}}$ denotes
$$\ket{\Phi_{\mathrm{P}_13}^{\mathrm{init}}}\otimes\ket{\Phi_{\mathrm{P}_23}^{\mathrm{init}}}\otimes\ket{\Phi_{\mathrm{P}_33}^{\mathrm{init}}}
\otimes
\ket{\Phi_{\mathrm{P}_14}^{\mathrm{init}}}\otimes\ket{\Phi_{\mathrm{P}_24}^{\mathrm{init}}}\otimes\ket{\Phi_{\mathrm{P}_34}^{\mathrm{init}}},$$
and
$\ket{\Phi[c_1,c_2,c_3,m_1,m_2,m_3]}$ denotes
$$\ket{\Phi_{\mathrm{P}_13}[c_1]}\otimes\ket{\Phi_{\mathrm{P}_23}[c_2]}\otimes\ket{\Phi_{\mathrm{P}_33}[c_3]}
\otimes
\ket{\Phi_{\mathrm{P}_14}[m_1]}\otimes\ket{\Phi_{\mathrm{P}_24}[m_2]}\otimes\ket{\Phi_{\mathrm{P}_34}[m_3]}$$
for each $(c_1,c_2,c_3,m_1,m_2,m_3)\in\Omega$.
It follows from the linearity of $U_{\mathrm{P}_1,\mathrm{P}_2,\mathrm{P}_3}$ that
\begin{align*}
  &U_{\mathrm{P}_1,\mathrm{P}_2,\mathrm{P}_3}\left(\ket{\Psi}\otimes\ket{\Phi^{\mathrm{init}}}\right)
  =\sum_{(c_1,c_2,c_3,m_1,m_2,m_3)\in\Omega}
  \left(\left(E_{c_1}\otimes E_{c_2}\otimes E_{c_3}\otimes E^1_{c_1,m_1}\otimes E^2_{c_2,m_2}\otimes E^3_{c_3,m_3}\right)\ket{\Psi}\right) \\
  &\hspace{80mm}\otimes\ket{\Phi[c_1,c_2,c_3,m_1,m_2,m_3]}
\end{align*}
for every
$\ket{\Psi}\in
\mathcal{H}_{\mathrm{P}_13}\otimes\mathcal{H}_{\mathrm{P}_23}\otimes\mathcal{H}_{\mathrm{P}_33}\otimes
\mathcal{H}_{\mathrm{P}_1}\otimes\mathcal{H}_{\mathrm{P}_2}\otimes\mathcal{H}_{\mathrm{P}_3}$.
This $U_{\mathrm{P}_1,\mathrm{P}_2,\mathrm{P}_3}$ describes the unitary time-evolution of
the parties~$\mathrm{P}_1,\mathrm{P}_2,\mathrm{P}_3$
in the repeated once of the infinite repetition of
that procedure in Protocol~\ref{GHZ}
which consists of
the six steps: Steps~$\mathrm{P}_i$3 and $\mathrm{P}_i$4
on the party $\mathrm{P}_i$'s side with all $i=1,2,3$.
Totally, prior to the application of $U_{\mathrm{P}_1,\mathrm{P}_2,\mathrm{P}_3}$,
the total system consisting of
$\mathcal{Q}_{\mathrm{P}_13}$, $\mathcal{Q}_{\mathrm{P}_23}$, $\mathcal{Q}_{\mathrm{P}_33}$,
$\mathcal{Q}_{\mathrm{P}_1}$, $\mathcal{Q}_{\mathrm{P}_2}$, and $\mathcal{Q}_{\mathrm{P}_3}$
is prepared in the state
\begin{equation}\label{GHZ-fps-calculation:eq04}
  \ket{\Psi^{\mathrm{init}}}:=\ket{+}\otimes\ket{+}\otimes\ket{+}\otimes\ket{\text{GHZ}}.
\end{equation}

\subsection{Application of the principle of typicality}

The operator $U_{\mathrm{P}_1,\mathrm{P}_2,\mathrm{P}_3}$ applying to the initial state $\ket{\Psi^{\mathrm{init}}}$ describes
the \emph{repeated once} of the infinite repetition of the measurements
in Protocol~\ref{GHZ},
where the execution of
Steps~$\mathrm{P}_i$3 and $\mathrm{P}_i$4
with all $i=1,2,3$
is infinitely repeated.
Actually, we can
check that a collection
\begin{equation}\label{GHZmos}
  \left\{E_{c_1}\otimes E_{c_2}\otimes E_{c_3}\otimes E^1_{c_1,m_1}\otimes E^2_{c_2,m_2}\otimes E^3_{c_3,m_3}\right\}_{(c_1,c_2,c_3,m_1,m_2,m_3)\in\Omega}
\end{equation}
forms \emph{measurement operators},
as is confirmed by the following calculation:
\begin{align*}
  &\sum_{(c_1,c_2,c_3,m_1,m_2,m_3)\in\Omega}
  \left(E_{c_1}\otimes E_{c_2}\otimes E_{c_3}\otimes E^1_{c_1,m_1}\otimes E^2_{c_2,m_2}\otimes E^3_{c_3,m_3}\right)^\dag \\
  &\hspace{40mm}\left(E_{c_1}\otimes E_{c_2}\otimes E_{c_3}\otimes E^1_{c_1,m_1}\otimes E^2_{c_2,m_2}\otimes E^3_{c_3,m_3}\right) \\
  &=\sum_{(c_1,c_2,c_3,m_1,m_2,m_3)\in\Omega}
  E_{c_1}\otimes E_{c_2}\otimes E_{c_3}\otimes E^1_{c_1,m_1}\otimes E^2_{c_2,m_2}\otimes E^3_{c_3,m_3} \\
  &=\sum_{c_1,c_2,c_3\in\{0,1\}}  E_{c_1}\otimes E_{c_2}\otimes E_{c_3}\otimes
  \left(\sum_{m_1=\pm 1}E^1_{c_1,m_1}\right)\otimes
  \left(\sum_{m_2=\pm 1}E^2_{c_2,m_2}\right)\otimes
  \left(\sum_{m_3=\pm 1}E^3_{c_3,m_3}\right) \\
  &=\left(\sum_{c_1=0,1}E_{c_1}\right)\otimes \left(\sum_{c_2=0,1}E_{c_2}\right) \otimes \left(\sum_{c_3=0,1}E_{c_3}\right)
  \otimes I_2\otimes I_2\otimes I_2 \\
  &=I_2\otimes I_2\otimes I_2\otimes I_2\otimes I_2\otimes I_2 \\
  &=I_{64},
\end{align*}
where the third equality follows from \eqref{GHZ-fps-calculation:eq03},
and the forth equality follows from \eqref{eq:Ec=kcbc}.
Thus,
the total application $U_{\mathrm{P}_1,\mathrm{P}_2,\mathrm{P}_3}$ of
$U_{\mathrm{P}_13}$ and $U_{\mathrm{P}_14}$,
$U_{\mathrm{P}_23}$ and $U_{\mathrm{P}_24}$,
and $U_{\mathrm{P}_33}$ and $U_{\mathrm{P}_34}$
can be regarded as
a
\emph{single measurement} which is described by
the measurement operators \eqref{GHZmos}
and whose all possible outcomes form the set $\Omega$.

Hence, we can apply Definition~\ref{pmrpwst}
to this scenario
of
the setting of measurements.
Therefore, according to Definition~\ref{pmrpwst},
we can see that a \emph{world} is an infinite sequence over
$\Omega$
and the probability measure induced by
the
\emph{probability measure representation for the prefixes of
worlds}
is
a Bernoulli measure $\lambda_P$ on
$\Omega^\infty$,
where $P$ is a finite probability space on
$\Omega$
such that
$P(c_1,c_2,c_3,m_1,m_2,m_3)$ is the square of the norm of
the
vector
\begin{equation*}
  \left(\left(E_{c_1}\otimes E_{c_2}\otimes E_{c_3}\otimes E^1_{c_1,m_1}\otimes E^2_{c_2,m_2}\otimes E^3_{c_3,m_3}\right)
  \ket{\Psi^{\mathrm{init}}}\right)\otimes\ket{\Phi[c_1,c_2,c_3,m_1,m_2,m_3]}
\end{equation*}
for every $(c_1,c_2,c_3,m_1,m_2,m_3)\in\Omega$.
Here $\Omega$ equals the set of all possible records of the apparatus
in the \emph{repeated once} of the measurements.

Let us calculate the explicit form of $P(c_1,c_2,c_3,m_1,m_2,m_3)$.
First, using \eqref{GHZ-fps-calculation:eq04}, \eqref{eq:ket-plus}, and \eqref{eq:Ec=kcbc},
we have that
\begin{equation}\label{GHZ-fps-calculation:eq05}
\begin{split}
  &P(c_1,c_2,c_3,m_1,m_2,m_3) \\
  &=\bra{+}E_{c_1}\ket{+}\bra{+}E_{c_2}\ket{+}\bra{+}E_{c_3}\ket{+}
  \bra{\text{GHZ}}\left(E^1_{c_1,m_1}\otimes E^2_{c_2,m_2}\otimes E^3_{c_3,m_3}\right)\ket{\text{GHZ}} \\
  &=\frac{1}{8}
  \bra{\text{GHZ}}\left(E^1_{c_1,m_1}\otimes E^2_{c_2,m_2}\otimes E^3_{c_3,m_3}\right)\ket{\text{GHZ}}
\end{split}
\end{equation}
for each $(c_1,c_2,c_3,m_1,m_2,m_3)\in\Omega$.
Then the inner-product
$\bra{\text{GHZ}}(E^1_{c_1,m_1}\otimes E^2_{c_2,m_2}\otimes E^3_{c_3,m_3})\ket{\text{GHZ}}$
above is calculated as follows:
Using \eqref{GHZ-fps-calculation:eq02} and \eqref{GHZ-fps-calculation:eq03}
we have that
\begin{equation}\label{GHZ-fps-calculation:eq06}
  E_{0,m}^k=\frac{1}{2}(I_2+m A^k_0)\quad\text{ and }\quad
  E_{1,m}^k=\frac{1}{2}(I_2+m A^k_1)
\end{equation}
for every $k=1,2,3$ and $m\in\{+1,-1\}$.
On the other hand, recall that
\begin{equation}\label{GHZ-fps-calculation:eq07}
  \ket{\text{GHZ}}
  =\frac{\ket{0}\otimes\ket{0}\otimes\ket{0}-\ket{1}\otimes\ket{1}\otimes\ket{1}}{\sqrt{2}}.
\end{equation}
Therefore,
since
$\bra{1}I_2\ket{0}=\bra{0}I_2\ket{1}=0$ and
$\bra{0}X\ket{0}=\bra{1}X\ket{1}=\bra{0}Y\ket{0}=\bra{1}Y\ket{1}=0$,
we have that
\begin{equation}\label{GHZ-fps-calculation:eq08}
  \bra{\text{GHZ}}(A^1_{c_1}\otimes I_2\otimes I_2)\ket{\text{GHZ}}
  =\bra{\text{GHZ}}(I_2\otimes A^2_{c_2}\otimes I_2)\ket{\text{GHZ}}
  =\bra{\text{GHZ}}(I_2\otimes I_2\otimes A^3_{c_3})\ket{\text{GHZ}}=0
\end{equation}
and
\begin{equation}\label{GHZ-fps-calculation:eq09}
  \bra{\text{GHZ}}(A^1_{c_1}\otimes A^2_{c_2}\otimes I_2)\ket{\text{GHZ}}
  =\bra{\text{GHZ}}(I_2\otimes A^2_{c_2}\otimes A^3_{c_3})\ket{\text{GHZ}}
  =\bra{\text{GHZ}}(A^1_{c_1}\otimes I_2\otimes A^3_{c_3})\ket{\text{GHZ}}=0
\end{equation}
for every $c_1,c_2,c_3\in\{0,1\}$.
Then,
for each $k=1,2,3$ and $c\in\{0,1\}$,
we see from \eqref{Bell-fps-calculation:eq02}
that $\bra{1}A^k_c\ket{0}=i^c$
and therefore $\bra{0}A^k_c\ket{1}=\overline{\bra{1}A^k_c\ket{0}}=(-i)^c$.
Thus,
since $\bra{0}X\ket{0}=\bra{1}X\ket{1}=\bra{0}Y\ket{0}=\bra{1}Y\ket{1}=0$,
using \eqref{GHZ-fps-calculation:eq07}
we have that
\begin{equation}\label{GHZ-fps-calculation:eq10}
\begin{split}
  &\bra{\text{GHZ}}(A^1_{c_1}\otimes A^2_{c_2}\otimes A^3_{c_3})\ket{\text{GHZ}} \\
  &=-\frac{1}{2}\left[\bra{1}A^1_{c_1}\ket{0}\bra{1}A^2_{c_2}\ket{0}\bra{1}A^3_{c_3}\ket{0}
  +\bra{0}A^1_{c_1}\ket{1}\bra{0}A^2_{c_2}\ket{1}\bra{0}A^3_{c_3}\ket{1}\right] \\
  &=-\frac{1}{2}\left[i^{c_1}i^{c_2}i^{c_3}+(-i)^{c_1}(-i)^{c_2}(-i)^{c_3}\right]
  =-\frac{1}{2}\left[e^{i\frac{\pi}{2}(c_1+c_2+c_3)}+e^{-i\frac{\pi}{2}(c_1+c_2+c_3)}\right] \\
  &=-\cos\left(\frac{\pi}{2}(c_1+c_2+c_3)\right)
\end{split}
\end{equation}
for each $c_1,c_2,c_3\in\{0,1\}$.
Thus, it follows from \eqref{GHZ-fps-calculation:eq06},
\eqref{GHZ-fps-calculation:eq08}, \eqref{GHZ-fps-calculation:eq09},
and \eqref{GHZ-fps-calculation:eq10} that
\[
  \bra{\text{GHZ}}\left(E^1_{c_1,m_1}\otimes E^2_{c_2,m_2}\otimes E^3_{c_3,m_3}\right)\ket{\text{GHZ}}
  =\frac{1}{8}\left[1-m_1 m_2 m_3\times\cos\left(\frac{\pi}{2}(c_1+c_2+c_3)\right)\right]
\]
for every $(c_1,c_2,c_3,m_1,m_2,m_3)\in\Omega$.
Hence, using \eqref{GHZ-fps-calculation:eq05},
the explicit form of $P(c_1,c_2,c_3,m_1,m_2,m_3)$ is obtained as follows:
\begin{equation}\label{Pcccmmm=1over64cm-GHZ}
  P(c_1,c_2,c_3,m_1,m_2,m_3)
  =\frac{1}{64}\left[1-m_1 m_2 m_3\times\cos\left(\frac{\pi}{2}(c_1+c_2+c_3)\right)\right]
\end{equation}
for every $(c_1,c_2,c_3,m_1,m_2,m_3)\in\Omega$.

Now, let us apply
Postulate~\ref{POT}, the \emph{principle of typicality},
to the setting of measurements
developed above.
Let $\alpha$ be \emph{our world} in the infinite repetition of the measurements in the above setting.
This $\alpha$
is an infinite sequence over $\Omega$
consisting
of records in the apparatuses
which is being generated by the infinite repetition of the measurements
described by the measurement operators~\eqref{GHZmos}
in the above setting.
Since the Bernoulli measure $\lambda_P$ on $\Omega^\infty$ is
the probability measure induced by the
probability measure representation
for the prefixes of
worlds
in the above setting,
it follows from
Postulate~\ref{POT} that
\emph{$\alpha$ is Martin-L\"of $P$-random}.

\subsection{\boldmath Refined derivation of the perfect correlations over
the three parties~$\mathrm{P}_1, \mathrm{P}_2, \mathrm{P}_3$}

For each $c_1,c_2,c_3\in\{0,1\}$, we use $H(c_1,c_2,c_3)$ to denote the set
$$\left\{(c_1,c_2,c_3,m_1,m_2,m_3)\,\middle\vert\, m_1,m_2,m_3\in\{+1,-1\}\right\}.$$

Now,
let $c_1,c_2,c_3\in\{0,1\}$.
The set $H(c_1,c_2,c_3)$ consists of all records of the apparatuses
$\mathcal{A}_{\mathrm{P}_13}$, $\mathcal{A}_{\mathrm{P}_23}$,
$\mathcal{A}_{\mathrm{P}_33}$,
$\mathcal{A}_{\mathrm{P}_14}$, $\mathcal{A}_{\mathrm{P}_24}$,
and $\mathcal{A}_{\mathrm{P}_34}$,
in a repeated once of the procedure in Protocol~\ref{GHZ}, where
\emph{the party~$\mathrm{P}_i$ gets the outcome $c_i$ in Step~$\mathrm{P}_i$3 for every $i=1,2,3$}.
It follows from
\eqref{Pcccmmm=1over64cm-GHZ}
that
$$P(H(c_1,c_2,c_3))=\frac{1}{8},$$
as expected from the point of view of the conventional quantum mechanics,
and
moreover
\begin{equation}\label{PHc3c3m3-GHZ}
\begin{split}
  P_{H(c_1,c_2,c_3)}(c_1,c_2,c_3,m_1,m_2,m_3)
  &=\frac{P(c_1,c_2,c_3,m_1,m_2,m_3)}{P(H(c_1,c_2,c_3))} \\
  &=\frac{1}{8}\left[1-m_1 m_2 m_3\times\cos\left(\frac{\pi}{2}(c_1+c_2+c_3)\right)\right]
\end{split}
\end{equation}
for every $m_1,m_2,m_3\in\{+1,-1\}$.
Here we use the notation presented in Section~\ref{subsec:Conditional probability}.
Let $$\alpha_{c_1,c_2,c_3}:=\cond{H(c_1,c_2,c_3)}{\alpha}.$$
Then, since $\alpha$ is Martin-L\"of $P$-random,
using Theorem~\ref{conditional_probability} we have that
$\alpha_{c_1,c_2,c_3}$ is Martin-L\"of $P_{H(c_1,c_2,c_3)}$-random for the finite probability space $P_{H(c_1,c_2,c_3)}$ on $H(c_1,c_2,c_3)$.
Recall that $\cond{H(c_1,c_2,c_3)}{\alpha}$ is defined as an infinite sequence
over the alphabet $H(c_1,c_2,c_3)$
obtained from
$\alpha$ by eliminating all elements of $\Omega\setminus H(c_1,c_2,c_3)$ occurring in $\alpha$.
In other words, $\alpha_{c_1,c_2,c_3}$ is the subsequence of $\alpha$
consisting
only
of results that
the party~$\mathrm{P}_i$ gets the outcome $c_i$ in Step~$\mathrm{P}_i$3 for every $i=1,2,3$.
For each $k\in\N^+$, we denote the $k$th element $\alpha_{c_1,c_2,c_3}(k)$ of
the subsequence $\alpha_{c_1,c_2,c_3}$ as
\begin{equation}\label{eq:alphc3=c3m3}
  \alpha_{c_1,c_2,c_3}(k)
  =(c_1,c_2,c_3,m_1^{c_1,c_2,c_3}(k),m_2^{c_1,c_2,c_3}(k),m_3^{c_1,c_2,c_3}(k)).
\end{equation}
In this manner, we introduce infinite sequences $m_1^{c_1,c_2,c_3}$, $m_2^{c_1,c_2,c_3}$, and $m_3^{c_1,c_2,c_3}$ over $\{+1,-1\}$.
For each $i=1,2,3$, the sequence $m_i^{c_1,c_2,c_3}$
is
\emph{the infinite sequence of outcomes of the
measurements
performed
by the party~$\mathrm{P}_i$  in Step~$\mathrm{P}_i$4
over the infinite repetition of the procedure in Protocol~\ref{GHZ} in our world,
conditioned that
the party~$\mathrm{P}_j$ gets the outcome $c_j$ in Step~$\mathrm{P}_j$3 for every $j=1,2,3$}.

We can then show the following theorem
which reveals perfect correlations of measurement results over
the three parties~$\mathrm{P}_1, \mathrm{P}_2, \mathrm{P}_3$:

\begin{theorem}[Perfect correlations over the three parties]\label{thm:GHZ}\
\begin{enumerate}
\item If the string $c_1c_2c_3$ is equal to either $011$, $101$, or $110$, then
\[
m_1^{c_1,c_2,c_3}(k)m_2^{c_1,c_2,c_3}(k)m_3^{c_1,c_2,c_3}(k)=+1
\]
for every $k\in\N^+$.
\item If the string $c_1c_2c_3$ is equal to $000$, then
\[
m_1^{c_1,c_2,c_3}(k)m_2^{c_1,c_2,c_3}(k)m_3^{c_1,c_2,c_3}(k)=-1
\]
for every $k\in\N^+$.
\end{enumerate}
\end{theorem}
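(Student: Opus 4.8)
The plan is to derive both parts directly from Theorem~\ref{thm:zero_probability} (non-occurrence of an elementary event with probability zero), applied to the conditioned sequence $\alpha_{c_1,c_2,c_3}=\cond{H(c_1,c_2,c_3)}{\alpha}$, which the text preceding the statement has already shown to be Martin-L\"of $P_{H(c_1,c_2,c_3)}$-random over the alphabet $H(c_1,c_2,c_3)$. The point is simply that the perfect correlations amount to certain elementary events of $P_{H(c_1,c_2,c_3)}$ having probability exactly zero, hence never occurring in a $P_{H(c_1,c_2,c_3)}$-random sequence.

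First I would fix a string $c_1c_2c_3$ and read off from the explicit form \eqref{PHc3c3m3-GHZ} which of the eight elements $(c_1,c_2,c_3,m_1,m_2,m_3)$ of $H(c_1,c_2,c_3)$ receive probability zero. In case~(i), where $c_1c_2c_3$ is one of $011$, $101$, $110$, we have $c_1+c_2+c_3=2$, so $\cos\!\left(\tfrac{\pi}{2}(c_1+c_2+c_3)\right)=\cos\pi=-1$, and \eqref{PHc3c3m3-GHZ} gives $P_{H(c_1,c_2,c_3)}(c_1,c_2,c_3,m_1,m_2,m_3)=\tfrac{1}{8}\left(1+m_1m_2m_3\right)$, which vanishes precisely when $m_1m_2m_3=-1$. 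In case~(ii), where $c_1c_2c_3=000$, we have $c_1+c_2+c_3=0$, so $\cos 0=1$, and \eqref{PHc3c3m3-GHZ} gives $P_{H(c_1,c_2,c_3)}(c_1,c_2,c_3,m_1,m_2,m_3)=\tfrac{1}{8}\left(1-m_1m_2m_3\right)$, which vanishes precisely when $m_1m_2m_3=+1$.

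Then, since $\alpha_{c_1,c_2,c_3}$ is Martin-L\"of $P_{H(c_1,c_2,c_3)}$-random, Theorem~\ref{thm:zero_probability} guarantees that $\alpha_{c_1,c_2,c_3}$ contains no such zero-probability element. In case~(i) this means $\alpha_{c_1,c_2,c_3}$ contains no tuple with $m_1m_2m_3=-1$; combining this with the notation \eqref{eq:alphc3=c3m3}, which records the $k$th element of $\alpha_{c_1,c_2,c_3}$ as $(c_1,c_2,c_3,m_1^{c_1,c_2,c_3}(k),m_2^{c_1,c_2,c_3}(k),m_3^{c_1,c_2,c_3}(k))$, yields $m_1^{c_1,c_2,c_3}(k)m_2^{c_1,c_2,c_3}(k)m_3^{c_1,c_2,c_3}(k)=+1$ for every $k\in\N^+$. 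The identical argument with the roles of $+1$ and $-1$ interchanged settles case~(ii). I do not expect a genuine obstacle here: the entire substance resides in the earlier results (Theorem~\ref{conditional_probability} for the $P_{H(c_1,c_2,c_3)}$-randomness of $\alpha_{c_1,c_2,c_3}$, together with Theorem~\ref{FI} which makes the conditioning well defined since $P(H(c_1,c_2,c_3))=\tfrac18>0$, and Theorem~\ref{thm:zero_probability} for the non-occurrence of zero-probability outcomes), and the only care needed is the bookkeeping of which tuples in $H(c_1,c_2,c_3)$ are assigned probability zero in each of the two cases.
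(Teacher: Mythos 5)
Your proposal is correct and follows essentially the same route as the paper: the paper's proof applies Corollary~\ref{cor:always-positive-probability} (the immediate consequence of Theorem~\ref{thm:zero_probability}) to the Martin-L\"of $P_{H(c_1,c_2,c_3)}$-random sequence $\alpha_{c_1,c_2,c_3}$ and combines the resulting positivity with the explicit formula \eqref{PHc3c3m3-GHZ} to force $m_1^{c_1,c_2,c_3}(k)m_2^{c_1,c_2,c_3}(k)m_3^{c_1,c_2,c_3}(k)$ to be $+1$ or $-1$ according to the value of $\cos\left(\frac{\pi}{2}(c_1+c_2+c_3)\right)$. Your bookkeeping of which tuples receive probability zero in each case matches the paper's calculation exactly, so there is no gap.
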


\begin{proof}
Let $c_1,c_2,c_3\in\{0,1\}$.
On the one hand, since $\alpha_{c_1,c_2,c_3}$ is Martin-L\"of $P_{H(c_1,c_2,c_3)}$-random,
it follows from Corollary~\ref{cor:always-positive-probability} that
\begin{equation}\label{eq:PHc3alphc3>0-GHZ}
  P_{H(c_1,c_2,c_3)}(\alpha_{c_1,c_2,c_3}(k))>0
\end{equation}
for all $k\in\N^+$.
On the other hand, using \eqref{PHc3c3m3-GHZ} and \eqref{eq:alphc3=c3m3} we have that
\begin{equation}\label{eq:PHc3alphc3=f181-cosc3m3-GHZ}
  P_{H(c_1,c_2,c_3)}(\alpha_{c_1,c_2,c_3}(k))=
  \frac{1}{8}\left[1-m_1^{c_1,c_2,c_3}(k)m_2^{c_1,c_2,c_3}(k)m_3^{c_1,c_2,c_3}(k)\times\cos\left(\frac{\pi}{2}(c_1+c_2+c_3)\right)\right]
\end{equation}
for every $k\in\N^+$.
Thus it follows from \eqref{eq:PHc3alphc3>0-GHZ} and \eqref{eq:PHc3alphc3=f181-cosc3m3-GHZ} that
for every $c_1,c_2,c_3\in\{0,1\}$ and $k\in\N^+$ it holds that
\begin{equation}\label{eq:cosc3m3<1-GHZ}
  m_1^{c_1,c_2,c_3}(k)m_2^{c_1,c_2,c_3}(k)m_3^{c_1,c_2,c_3}(k)\times\cos\left(\frac{\pi}{2}(c_1+c_2+c_3)\right)<1.
\end{equation}
Note here that
\begin{equation}\label{eq:m3=pm1-GHZ}
  m_1^{c_1,c_2,c_3}(k)m_2^{c_1,c_2,c_3}(k)m_3^{c_1,c_2,c_3}(k)=\pm 1
\end{equation}
for every $k\in\N^+$.

(i) Let $c_1,c_2,c_3\in\{0,1\}$ with $c_1+c_2+c_3=2$.
Then it follows from \eqref{eq:cosc3m3<1-GHZ} and \eqref{eq:m3=pm1-GHZ} that
\[
  m_1^{c_1,c_2,c_3}(k)m_2^{c_1,c_2,c_3}(k)m_3^{c_1,c_2,c_3}(k)=+1
\]
for every $k\in\N^+$.

(ii) Let $c_1,c_2,c_3\in\{0,1\}$ with $c_1+c_2+c_3=0$, i.e., $c_1=c_2=c_3=0$.
Then it follows from \eqref{eq:cosc3m3<1-GHZ} and \eqref{eq:m3=pm1-GHZ} that
\[
  m_1^{c_1,c_2,c_3}(k)m_2^{c_1,c_2,c_3}(k)m_3^{c_1,c_2,c_3}(k)=-1
\]
for every $k\in\N^+$.
\end{proof}

The result of Theorem~\ref{thm:GHZ} is certainly a \emph{refinement} of
the perfect correlations of measurement results over three parties
$\mathrm{P}_1$, $\mathrm{P}_2$, $\mathrm{P}_3$
which are expected from the aspect of the conventional quantum mechanics
as its prediction.

\section{Refined demonstration of underivability of the perfect correlations from local realism}
\label{sec:GHZ-imperfect}

Mermin~\cite{Mer90} provides an analysis for Protocol~\ref{GHZ} which
demonstrates
that
the assumptions of local realism cannot recover
the prediction of quantum mechanics, i.e., the perfect correlations of measurement results over the three parties~$\mathrm{P}_1, \mathrm{P}_2, \mathrm{P}_3$.
This Mermin's work is again based on the
work of Greenberger, Horne, and Zeilinger~\cite{GHZ89}.
In this section,
we \emph{refine} and \emph{reformulate}
Mermin's
analysis and
thereby
the assumptions of local realism,
in the framework of
the \emph{operational characterization of the notion of probability
by algorithmic randomness}~\cite{T14,T15,T16arXiv} (see Section~\ref{subsec:OCNP}).

\subsection{Refinement of the assumptions of local realism}

In what follows, we refine and reformulate
``the common sense analysis'' for Protocol~\ref{GHZ}
based on the assumptions of local realism,
in the framework of the operational characterization of the notion of probability
reviewed in Section~\ref{subsec:OCNP}.
Basically, we follow the flow of the argument of Mermin~\cite{Mer90},
while refining it appropriately
in terms of our operational characterization of the notion of probability.
We also proceed according to Nielsen and Chuang~\cite[Section~2.6]{NC00}
regarding how to introduce the assumptions of local realism.

We first forget all the knowledge of quantum mechanics.
We analyze Protocol~\ref{GHZ}
based on  ``our common sense notions of how the world works,''
as in Section~\ref{sec-Bell_inequality}.
This leads to a contradiction with the prediction by quantum mechanics,
i.e., Theorem~\ref{thm:GHZ}.
Thus, we perform ``the common sense analysis'' for Protocol~\ref{GHZ}.
In doing so, we are implicitly assuming the following two assumptions:
\begin{description}
\item[The assumption of realism:]
The assumption that
the observables~$A^1_0$,  $A^1_1$, $A^2_0$, $A^2_1$, $A^3_0$, $A^3_1$ have
definite values~$m^1_0$, $m^1_1$, $m^2_0$, $m^2_1$, $m^3_0$, $m^3_1$, respectively,
which exist independent of observation.
\item[The assumption of locality:]
The assumption that
any
party performing its measurement does not influence
the results of measurements performed by the other
two
parties
among the three parties~$\mathrm{P}_1, \mathrm{P}_2, \mathrm{P}_3$.
\end{description}
These two assumptions together form the \emph{assumptions of local realism}.

Now,
according to Mermin~\cite{Mer90}
and Nielsen and Chuang~\cite[Section~2.6]{NC00},
let us make ``the common sense analysis'' for Protocol~\ref{GHZ},
based on the assumptions of local realism.
In Protocol~\ref{GHZ},
for each $i=1,2,3$, the party~$\mathrm{P}_i$
performs the measurement of either the observable $A^i_0$ or $A^i_1$
over the particle~$i$ in Step $\mathrm{P}_i$4
to obtain outcome $m_i\in\{+1,-1\}$, depending on $c_i=0$ or $1$.
Based on the assumptions of local realism,
we
assume
that each of the six observables
$A^1_0$,  $A^1_1$, $A^2_0$, $A^2_1$, $A^3_0$, and $A^3_1$ 
has a specific value before the measurement, which is merely revealed by the measurement.
In particular,
\emph{in the terminology of the conventional probability theory},
we assume that
$$p(m^1_0, m^1_1, m^2_0, m^2_1, m^3_0, m^3_1)$$
is the ``probability'' that,
before the measurements are performed,
the system is in a state where
$A^1_0=m^1_0$,  $A^1_1=m^1_1$, $A^2_0=m^2_0$, $A^2_1=m^2_1$, $A^3_0=m^3_0$, and $A^3_1=m^3_1$.
This ``probability''
may depend on how the party $\mathrm{P}_0$ prepares the three spin-$1/2$ particles,
and on experimental noise.
In the framework of the operational characterization of the notion of probability,
\emph{based on Thesis~\ref{thesis}},
the assumption above is refined and reformulated in the following form:

\begin{assumption}\label{ASLR1-GHZ}
Let $\omega$ be an infinite sequence of the values
$(m^1_0, m^1_1, m^2_0, m^2_1, m^3_0, m^3_1)$
of the observables $A^1_0$, $A^1_1$, $A^2_0$, $A^2_1$, $A^3_0$,
$A^3_1$ 
which is being generated by
the infinite repetition of the procedure
in Protocol~\ref{GHZ}.
Then there exists a finite probability space $P$ on $\{+1,-1\}^6$ such that
$\omega$ is a \emph{Martin-L\"of $P$-random} infinite sequence over $\{+1,-1\}^6$.
\qed
\end{assumption}

Assumption~\ref{ASLR1-GHZ} is an \emph{operational refinement} of one of
the consequences
of the assumptions of local realism.
In Protocol~\ref{GHZ}, for each $i=1,2,3$,
the party $\mathrm{P}_i$ tosses a fair coin $C_i$ to get outcome $c_i\in\{0,1\}$
in Step~$\mathrm{P}_i$3.
In the framework of the operational characterization of the notion of probability,
these probabilistic phenomena are refined and reformulated
in the following form:

\begin{assumption}\label{Impl-Thesis-GHZ}
Let $i\in\{1,2,3\}$,
and let $\gamma_i$ be an infinite binary sequence which is
being generated by infinitely repeated
tossing
of the fair coin $C_i$ by the party~$\mathrm{P}_i$
in Protocol~\ref{GHZ}.
Then the infinite sequence $\gamma_i$ is
a \emph{Martin-L\"of $U$-random} sequence over $\{0,1\}$,
where $U$ is a finite probability space on $\{0,1\}$ such that $U(0)=U(1)=1/2$.
Namely, $\gamma_i$ is \emph{Martin-L\"of random}.
\qed
\end{assumption}

Assumption~\ref{Impl-Thesis-GHZ} is just an implementation of Thesis~\ref{thesis}
in an infinite repetition of tossing of a fair coin.
In order to advance
``the common sense analysis'' for Protocol~\ref{GHZ}
further
in a rigorous manner,
however,
we need to make
an additional
assumption for the relation
among the infinite sequences $\omega$, $\gamma_1$, $\gamma_2$, and $\gamma_3$.
Namely,
the infinite sequences $\omega$, $\gamma_1$, $\gamma_2$, and $\gamma_3$
need to be \emph{independent}.
Thus, based on the notion of independence given in Definition~\ref{independency-of-ensembles},
we assume the following:

\begin{assumption}\label{AL2-GHZ}
The infinite sequences $\omega$, $\gamma_1$, $\gamma_2$, and $\gamma_3$
are \emph{independent}.
\qed
\end{assumption}

Assumption~\ref{AL2-GHZ} is an \emph{operational refinement} of one of
the consequences
of the assumption of locality.

Now, according to Definition~\ref{independency-of-ensembles},
Assumption~\ref{ASLR1-GHZ}, Assumption~\ref{Impl-Thesis-GHZ}, and Assumption~\ref{AL2-GHZ}
together
imply
the following \emph{single} assumption:

\begin{assumption}\label{ALRsingle-GHZ}
Let $\omega$ be an infinite sequence of the values
$(m^1_0, m^1_1, m^2_0, m^2_1, m^3_0, m^3_1)$
of the observables $A^1_0$, $A^1_1$, $A^2_0$, $A^2_1$, $A^3_0$,
$A^3_1$ 
which is being generated by
the infinite repetition of the procedure
in Protocol~\ref{GHZ}.
For each $i\in\{1,2,3\}$,
let $\gamma_i$ be an infinite binary sequence which is
being generated by infinitely repeated
tossing
of the fair coin $C_i$ by the party~$\mathrm{P}_i$
in Protocol~\ref{GHZ}.
Then there exists a finite probability space $P$ on $\{+1,-1\}^6$ such that
the infinite sequence $\omega\times\gamma_1\times\gamma_2\times\gamma_3$ over
$\{+1,-1\}^6\times\{0,1\}\times\{0,1\}\times\{0,1\}$
is \emph{Martin-L\"of $P\times U\times U\times U$-random},
where $U$ is a finite probability space on $\{0,1\}$ such that $U(0)=U(1)=1/2$.
\qed
\end{assumption}

However, it follows from Theorem~\ref{independency-imlplies-each-randomness} that
if the infinite sequence $\omega\times\gamma_1\times\gamma_2\times\gamma_3$ over
$\{+1,-1\}^6\times\{0,1\}\times\{0,1\}\times\{0,1\}$
is Martin-L\"of $P\times U\times U\times U$-random then
the infinite sequence $\omega$ over $\{+1,-1\}^6$ is Martin-L\"of $P$-random and
the infinite binary sequences $\gamma_1$, $\gamma_2$, and $\gamma_3$ are Martin-L\"of $U$-random.
Thus, in fact, Assumption~\ref{ALRsingle-GHZ} is \emph{equivalent} to
the conjunction of
Assumption~\ref{ASLR1-GHZ}, Assumption~\ref{Impl-Thesis-GHZ}, and Assumption~\ref{AL2-GHZ}.
Hence, Assumption~\ref{ALRsingle-GHZ}
\emph{alone} serves
as an \emph{operational refinement} of the whole of the assumptions of local realism in the context of Protocol~\ref{GHZ}.

\begin{remark}
As in Remark~\ref{rem:RC-Bell} on Assumption~\ref{ALRsingle}
for ``the common sense analysis'' of Protocol~\ref{Bell},
based on Theorem~42 of Tadaki~\cite{T16arXiv},
in Assumption~\ref{ALRsingle-GHZ}
the statement
``the infinite sequence $\omega\times\gamma_1\times\gamma_2\times\gamma_3$ over
$\{+1,-1\}^6\times\{0,1\}\times\{0,1\}\times\{0,1\}$
is Martin-L\"of $P\times U\times U\times U$-random''
can be \emph{equivalently} rephrased
as the statement that
\begin{enumerate}
  \item the infinite sequence $\omega$ is Martin-L\"of $P$-random,
  \item the infinite sequence $\gamma_1$ is Martin-L\"of random
    \emph{relative to $\omega$},
  \item the infinite sequence $\gamma_2$ is Martin-L\"of random
    \emph{relative to $\omega$ and $\gamma_1$}, and
  \item the infinite sequence $\gamma_3$ is Martin-L\"of random
    \emph{relative to $\omega$, $\gamma_1$, and $\gamma_2$}.
\end{enumerate}
Note that in establishing this equivalence
we do not
have
to impose any computability restrictions on $P$ at all:
In Assumption~\ref{ALRsingle-GHZ} (or in Assumption~\ref{ASLR1-GHZ})
the finite probability space $P$ on $\{+1,-1\}^6$ can be chosen completely arbitrarily
without failing this equivalence (see Theorem~42 of Tadaki~\cite{T16arXiv}).
\qed
\end{remark}

\subsection{Refined demonstration of underivability of the perfect correlations}

Based on Assumption~\ref{ALRsingle-GHZ}
(or equivalently, the conjunction of Assumptions~\ref{ASLR1-GHZ}, \ref{Impl-Thesis-GHZ}, and \ref{AL2-GHZ}),
let us make a contradiction with Theorem~\ref{thm:GHZ}
in the framework of the operational characterization of the notion of probability.
Let $$\alpha:=\omega\times\gamma_1\times\gamma_2\times\gamma_3.$$
Then, $\alpha$ is Martin-L\"of $P\times U\times U\times U$-random by Assumption~\ref{ALRsingle-GHZ}.
On the other hand,
note that
\begin{equation}\label{PtUtUm6=f18m6-Local-Realism-GHZ}
  (P\times U\times U\times U)((m^1_0, m^1_1, m^2_0, m^2_1, m^3_0, m^3_1),c_1,c_2,c_3)
  =\frac{1}{8}P(m^1_0, m^1_1, m^2_0, m^2_1, m^3_0, m^3_1)
\end{equation}
for every $m^1_0, m^1_1, m^2_0, m^2_1, m^3_0, m^3_1\in\{+1,-1\}$ and
every $c_1,c_2,c_3\in\{0,1\}$.
Here we use the notation presented in Section~\ref{sec-IMLP}.
For each $c_1,c_2,c_3\in\{0,1\}$, we use $B(c_1,c_2,c_3)$ to denote the set
$$\left\{(x,c_1,c_2,c_3)\,\middle\vert\, x\in\{+1,-1\}^6\right\}.$$

Now,
let $c_1,c_2,c_3\in\{0,1\}$.
The set $B(c_1,c_2,c_3)$ consists of all possible
results
in a repeated once of the procedure in Protocol~\ref{GHZ}, where
\emph{the party $\mathrm{P}_i$ gets the outcome
$c_i$
in Step~$\mathrm{P}_i$3 for every $i=1,2,3$}.
\emph{In the terminology of the conventional probability theory},
$(P\times U\times U\times U)(B(c_1,c_2,c_3))$ is
the ``probability'' that
\emph{the party $\mathrm{P}_i$ gets the outcome
$c_i$
in Step~$\mathrm{P}_i$3 for every $i=1,2,3$}.
Actually, it follows from \eqref{PtUtUm6=f18m6-Local-Realism-GHZ} that
$$(P\times U\times U\times U)(B(c_1,c_2,c_3))=\frac{1}{8},$$
as expected from the conventional probability theory.
Thus, it
follows from
\eqref{PtUtUm6=f18m6-Local-Realism-GHZ}
again that
\begin{equation}\label{PU3Bc3xc3=px-Local-Realism-GHZ}
  (P\times U\times U\times U)_{B(c_1,c_2,c_3)}(x,c_1,c_2,c_3)
  =\frac{(P\times U\times U\times U)(x,c_1,c_2,c_3)}{(P\times U\times U\times U)(B(c_1,c_2,c_3))}
  =P(x)
\end{equation}
for every $x\in\{+1,-1\}^6$.
Here we use the notation presented in Section~\ref{subsec:Conditional probability}.
Let $$\alpha_{c_1,c_2,c_3}:=\cond{B(c_1,c_2,c_3)}{\alpha}.$$
Then, since $\alpha$ is Martin-L\"of $P\times U\times U\times U$-random,
using Theorem~\ref{conditional_probability} we have that
$\alpha_{c_1,c_2,c_3}$ is
Martin-L\"of $(P\times U\times U\times U)_{B(c_1,c_2,c_3)}$-random
for the finite probability space $(P\times U\times U\times U)_{B(c_1,c_2,c_3)}$
on $B(c_1,c_2,c_3)$.
Recall that $\cond{B(c_1,c_2,c_3)}{\alpha}$ is defined as an infinite sequence
over the alphabet $B(c_1,c_2,c_3)$
obtained from
$\alpha$ by eliminating all elements of
$(\{+1,-1\}^6\times\{0,1\}\times\{0,1\}\times\{0,1\})\setminus B(c_1,c_2,c_3)$
occurring in $\alpha$.
In other words, $\alpha_{c_1,c_2,c_3}$ is the subsequence of $\alpha$
consisting of results that
the party $\mathrm{P}_i$ gets the outcome
$c_i$
in Step~$\mathrm{P}_i$3 for every $i=1,2,3$.
For each $k\in\N^+$, we denote the $k$th element
$\alpha_{c_1,c_2,c_3}(k)$
of the
subsequence
$\alpha_{c_1,c_2,c_3}$
as
\[
  \alpha_{c_1,c_2,c_3}(k)=
  ((m_{1,0}^{c_1,c_2,c_3}(k),m_{1,1}^{c_1,c_2,c_3}(k),
    m_{2,0}^{c_1,c_2,c_3}(k),m_{2,1}^{c_1,c_2,c_3}(k),
    m_{3,0}^{c_1,c_2,c_3}(k),m_{3,1}^{c_1,c_2,c_3}(k)),
    c_1,c_2,c_3).
\]
In this manner, we introduce six infinite sequences
\[
  m_{1,0}^{c_1,c_2,c_3}, m_{1,1}^{c_1,c_2,c_3},
  m_{2,0}^{c_1,c_2,c_3}, m_{2,1}^{c_1,c_2,c_3},
  m_{3,0}^{c_1,c_2,c_3}, m_{3,1}^{c_1,c_2,c_3}
\]
over $\{+1,-1\}$.
For each $i=1,2,3$ and $c=0,1$, the sequence $m_{i,c}^{c_1,c_2,c_3}$
is
\emph{the infinite sequence of
values of the observable $A^i_{c}$
over the infinite repetition of the procedure in Protocol~\ref{GHZ},
conditioned that
the party $\mathrm{P}_j$ gets the outcome $c_j$ in Step~$\mathrm{P}_j$3 for every $j=1,2,3$}.
Since the subsequence $\alpha_{c_1,c_2,c_3}$ is
Martin-L\"of $(P\times U\times U\times U)_{B(c_1,c_2,c_3)}$-random,
using Theorem~\ref{FI} and \eqref{PU3Bc3xc3=px-Local-Realism-GHZ}
we have that for every $x\in\{+1,-1\}^6$ it holds that
\begin{equation}\label{eq:LLN-GHZ-RL}
  \lim_{L\to\infty}\frac{N_{(x,c_1,c_2,c_3)}(\rest{\alpha_{c_1,c_2,c_3}}{L})}{L}=P(x).
\end{equation}
Recall here that $N_{(x,c_1,c_2,c_3)}(\rest{\alpha_{c_1,c_2,c_3}}{L})$ denotes the number of the occurrences of $(x,c_1,c_2,c_3)$
in the prefix of $\alpha_{c_1,c_2,c_3}$ of length $L$.

In what follows, in a refined manner in terms of 
our framework of the operational characterization of the notion of probability,
we
will
show that
``the common sense analysis'' developed
so far for Protocol~\ref{GHZ}
cannot recover
the \emph{perfect correlations}
revealed by Theorem~\ref{thm:GHZ},
which is
proven
by the refined argument of quantum mechanics
based on the principle of typicality.
Now,
assume contrarily that
``the common sense analysis''
developed so far
\emph{can} recover the perfect correlations
revealed by Theorem~\ref{thm:GHZ}.
That is, assume that the following (i) and (ii) simultaneously hold:
\begin{enumerate}
\item If the string $c_1c_2c_3$ is equal to either $011$, $101$, or $110$, then
\begin{equation}\label{eq:ms-GHZ-QM-plus}
  m_{1,c_1}^{c_1,c_2,c_3}(k)m_{2,c_2}^{c_1,c_2,c_3}(k)m_{3,c_3}^{c_1,c_2,c_3}(k)=+1
\end{equation}
for every $k\in\N^+$.
\item If the string $c_1c_2c_3$ is equal to $000$, then
\begin{equation}\label{eq:ms-GHZ-QM-minus}
  m_{1,c_1}^{c_1,c_2,c_3}(k)m_{2,c_2}^{c_1,c_2,c_3}(k)m_{3,c_3}^{c_1,c_2,c_3}(k)=-1
\end{equation}
for every $k\in\N^+$.
\end{enumerate}
Notice here that,
for every $c_1,c_2,c_3\in\{0,1\}$, the infinite sequences
\[
m_1^{c_1,c_2,c_3},
m_2^{c_1,c_2,c_3},
m_3^{c_1,c_2,c_3}
\]
over $\{+1,-1\}$ in Theorem~\ref{thm:GHZ} correspond to the infinite sequences
\[
m_{1,c_1}^{c_1,c_2,c_3},
m_{2,c_2}^{c_1,c_2,c_3},
m_{3,c_3}^{c_1,c_2,c_3}
\]
over $\{+1,-1\}$ in this section, respectively,
where the former is about quantum mechanics while the latter is about local realism.

Now,
let $c_1c_2c_3$ be equal to either $011$, $101$, or $110$.
For an arbitrary $x=(m^1_0, m^1_1, m^2_0, m^2_1, m^3_0, m^3_1)\in\{+1,-1\}^6$,
assume that $m^1_{c_1} m^2_{c_2} m^3_{c_3}=-1$.
Then, by the assumption~\eqref{eq:ms-GHZ-QM-plus}, we see that
$(x,c_1,c_2,c_3)\neq\alpha_{c_1,c_2,c_3}(k)$ for every $k\in\N^+$,
and therefore $N_{(x,c_1,c_2,c_3)}(\rest{\alpha_{c_1,c_2,c_3}}{L})=0$ for every $L\in\N^+$.
Thus using \eqref{eq:LLN-GHZ-RL} we have that $P(x)=0$.
In summary,
for every $(m^1_0, m^1_1, m^2_0, m^2_1, m^3_0, m^3_1)\in\{+1,-1\}^6$,
if $m^1_0 m^2_1 m^3_1=-1$, $m^1_1 m^2_0 m^3_1=-1$, or $m^1_1 m^2_1 m^3_0=-1$ then
\begin{equation}\label{eq:Pm6-GHZ-plus-zero1}
  P(m^1_0, m^1_1, m^2_0, m^2_1, m^3_0, m^3_1)=0.
\end{equation}
Note that for every $(m^1_0, m^1_1, m^2_0, m^2_1, m^3_0, m^3_1)\in\{+1,-1\}^6$
it holds that
$$m^1_0 m^2_0 m^3_0=(m^1_0 m^2_1 m^3_1)(m^1_1 m^2_0 m^3_1)(m^1_1 m^2_1 m^3_0).$$
Therefore, it follows from \eqref{eq:Pm6-GHZ-plus-zero1} that
for every $(m^1_0, m^1_1, m^2_0, m^2_1, m^3_0, m^3_1)\in\{+1,-1\}^6$
if $m^1_0 m^2_0 m^3_0=-1$ then
\begin{equation}\label{eq:Pm6-GHZ-plus-zero2}
  P(m^1_0, m^1_1, m^2_0, m^2_1, m^3_0, m^3_1)=0.
\end{equation}

On the other hand, let $c_1c_2c_3$ be equal to $000$.
For an arbitrary $x=(m^1_0, m^1_1, m^2_0, m^2_1, m^3_0, m^3_1)\in\{+1,-1\}^6$,
assume that
$m^1_{c_1} m^2_{c_2} m^3_{c_3}=+1$.
Then, by the assumption~\eqref{eq:ms-GHZ-QM-minus}, we see that
$(x,c_1,c_2,c_3)\neq\alpha_{c_1,c_2,c_3}(k)$ for every $k\in\N^+$,
and therefore $N_{(x,c_1,c_2,c_3)}(\rest{\alpha_{c_1,c_2,c_3}}{L})=0$ for every $L\in\N^+$.
Thus using \eqref{eq:LLN-GHZ-RL} we have that $P(x)=0$.
In summary,
for every $(m^1_0, m^1_1, m^2_0, m^2_1, m^3_0, m^3_1)\in\{+1,-1\}^6$,
if $m^1_0 m^2_0 m^3_0=+1$ then
\begin{equation}\label{eq:Pm6-GHZ-minus-zero2}
  P(m^1_0, m^1_1, m^2_0, m^2_1, m^3_0, m^3_1)=0.
\end{equation}

Finally, using \eqref{eq:Pm6-GHZ-plus-zero2} and \eqref{eq:Pm6-GHZ-minus-zero2}
we have that $P(x)=0$ for every $x\in\{+1,-1\}^6$.
However, this contradicts Assumption~\ref{ALRsingle-GHZ}
(or Assumption~\ref{ASLR1-GHZ}) which states that
$P$ is a finite probability space on $\{+1,-1\}^6$.
Thus, the two assumptions~\eqref{eq:ms-GHZ-QM-plus} and
\eqref{eq:ms-GHZ-QM-minus} do not hold simultaneously.
Therefore,
``the common sense analysis,'' i.e., local realism,
developed
so far
in this section
\emph{cannot} recover the \emph{perfect correlations}
revealed by Theorem~\ref{thm:GHZ}
as a prediction of quantum mechanics.

Hence, in a refined manner in terms of 
our framework of the operational characterization of the notion of probability,
we have
demonstrated
that
``the common sense analysis'' developed
for Protocol~\ref{GHZ}
cannot recover
the perfect correlations of measurement results over
the three parties~$\mathrm{P}_1, \mathrm{P}_2, \mathrm{P}_3$,
which is
rigorously revealed
in Section~\ref{sec:GHZ-QM} by the refined argument of quantum mechanics based on the principle of typicality.

\section{Conclusion}
\label{sec-Concluding_remarks}

In this paper, we have refined and reformulated
the argument of
local realism
versus quantum mechanics by algorithmic randomness.
On the one hand,
we have refined and reformulated local realism,
based on our operational characterization of the notion of probability by algorithmic randomness.
On the other hand,
we have refined and reformulated the corresponding argument of quantum mechanics
in our
rigorous
framework of quantum mechanics based on the principle of typicality.
In this paper, we have done
these
``two-sided refinements''
in the two contexts of
Bell's inequality and GHZ experiment.
Consequently,
in terms of
algorithmic randomness,
we have refined the derivation of the following fact:
Local realism
cannot recover the prediction of quantum mechanics.

The principle of typicality is a \emph{unified principle} which refines
Postulate~\ref{Born-rule}
and its related postulates about quantum measurements
\emph{in one lump} (see Tadaki~\cite{T18arXiv}).
In this paper, we have successfully made an application of the principle of typicality
to the argument of local realism versus quantum mechanics,
\emph{demonstrating how properly our framework based on the principle of typicality
works in practical problems in quantum mechanics}.
Thus, the results of this paper seem to \emph{further} confirm
our conjecture proposed in \cite{T18arXiv}
that \emph{the principle of typicality, Postulate~\ref{POT},
together with Postulates~\ref{state_space}, \ref{composition}, and \ref{evolution}
forms quantum mechanics}.

\section*{Acknowledgments}

This work was
supported by JSPS KAKENHI Grant Number~18K03405.

\end{document}